\documentclass[a4paper]{aart}
\usepackage[centertags]{amsmath}
\usepackage{amsmath}
\usepackage[dutch,british]{babel}
\usepackage{amsfonts}
\usepackage{amssymb}
\usepackage{amsthm}
\usepackage{import}

\usepackage{newlfont}
\usepackage{color}

 \usepackage{bbm}
\include{amsthm_sc}

\usepackage{stmaryrd}
\usepackage{mathrsfs}
\usepackage{pstricks,pst-node}

\usepackage{fancyhdr}
\usepackage{graphicx}
\usepackage{fancybox}

\usepackage{geometry}
\usepackage{setspace}

\addtolength{\textwidth}{1in} \addtolength{\hoffset}{-0.5in}
\addtolength{\textheight}{1in} \addtolength{\voffset}{-0.5in}

% THEOREM-LIKE ENVIRONMENTS -----------------------------------------

\theoremstyle{plain}
  \newtheorem{theorem}{Theorem}[section]

  \newtheorem{lemma}[theorem]{Lemma}
    \newtheorem{assumption}{Assumption}
  
\theoremstyle{definition}

\theoremstyle{remark}

\numberwithin{equation}{section}
\numberwithin{figure}{section}

%%%%%%%%%%%%marcros

\newcommand{\deq}{\mathrel{\mathop:}=}
\newcommand{\e}[1]{\mathrm{e}^{#1}}
\newcommand{\R} {\mathbb{R}}
\newcommand{\C} {\mathbb{C}}
\newcommand{\N} {\mathbb{N}}
\newcommand{\Z} {\mathbb{Z}}

\newcommand{\adj}{^*} %hermitean adjoint
\newcommand{\ad}{\mathrm{ad}}
%Dirac bra-c-ket notation
\newcommand{\bra}[1]{\langle #1 |}

\newcommand{\ket}[1]{| #1 \rangle}

\DeclareMathOperator{\Tr}{Tr}

\DeclareMathOperator{\re}{\mathrm{Re}}
\DeclareMathOperator{\im}{\mathrm{Im}}

%\addtolength{\textwidth}{1in} \addtolength{\hoffset}{-0.5in}
%\addtolength{\textwidth}{1in} \addtolength{\hoffset}{-0.5in}
%\addtolength{\textheight}{1in} \addtolength{\voffset}{-0.5in}

\newcommand{\caA}{{\mathcal A}}

\newcommand{\caC}{{\mathcal C}}
\newcommand{\caD}{{\mathcal D}}
\newcommand{\caE}{{\mathcal E}}

\newcommand{\caJ}{{\mathcal J}}

\newcommand{\caL}{{\mathcal L}}
\newcommand{\caM}{{\mathcal M}}

\newcommand{\caO}{{\mathcal O}}

\newcommand{\caR}{{\mathcal R}}

\newcommand{\caT}{{\mathcal T}}
\newcommand{\caU}{{\mathcal U}}
\newcommand{\caV}{{\mathcal V}}
\newcommand{\caW}{{\mathcal W}}

\newcommand{\caZ}{{\mathcal Z}}

% \MATHBB - \bb -----------------------------------------------------

\newcommand{\bbC}{{\mathbb C}}

\newcommand{\bbH}{{\mathbb H}}

\newcommand{\bbR}{{\mathbb R}}

\newcommand{\bbT}{{\mathbb T}}

\newcommand{\bbV}{{\mathbb V}}

\newcommand{\bbZ}{{\mathbb Z}}

\newcommand{\opunit}{\text{1}\kern-0.22em\text{l}}

% \MATHFRAK - \fr ---------------------------------------------------

%\newcommand{\frq}{{\mathfrak q}}

\newcommand{\frA}{{\mathfrak A}}

\newcommand{\frX}{{\mathfrak X}}

% \Mathscr - \scr -------------------------------------------------

\newcommand{\scrB}{{\mathscr B}}

\newcommand{\scrE}{{\mathscr E}}

\newcommand{\scrH}{{\mathscr H}}

% \BOLDSYMBOL - \bs -------------------------------------------------

%\newcommand{\bsc}{{\boldsymbol c}}

% ABBREVIATION ------------------------------------------------------

% MISCELLANEOUS -----------------------------------------------------

% ROMAN FONTS------------------------------------------------------------

\renewcommand{\d}{{\mathrm d}}

\newcommand{\Dom}{\mathrm{Dom}}

\newcommand{\beq}{ \begin{equation} }
\newcommand{\eeq}{ \end{equation} }

\newcommand{\baq}{ \begin{eqnarray} }
\newcommand{\eaq}{ \end{eqnarray} }
\newcommand{\bal}{  \begin{align} }
\newcommand{\eal}{\end{align}    }
\newcommand{\bet}{ \begin{theorem} }
\newcommand{\eet}{ \end{theorem} }

\newcommand{\bosondispersion}{\om}

\newcommand{\str}{ |}

\newcommand{\links}{\mathrm{l}}
\newcommand{\rechts}{\mathrm{r}}

\newcommand{\adjoint}{\ad}
\newcommand{\norm}{ \|}

\newcommand{\lone}{\mathbbm{1}}

 \let\be=\beta  

\let\ve=\varepsilon  \let\ga=\gamma 
\let\ka=\kappa \let\la=\lambda \let\om=\omega 
\let\si=\sigma \let\vs=\varsigma

 \let\Ga=\Gamma \let\La=\Lambda

\newcommand{\tdl}{\mathop{\lim}\hspace{0.15cm}\hspace{-0.22cm}_{\Lambda}}

\newcommand{\ben}{\begin{arabicenumerate}}
\newcommand{\een}{\end{arabicenumerate}}

\newcommand{\sys}{\mathrm{S}}
\newcommand{\res}{\mathrm{R}}

\newcommand{\lat}{ \mathbb{Z}^d }
\newcommand{\tor}{ {\mathbb{T}^d}  }

\newcommand{\dd}{\mathrm{d}}
\newcommand{\ii}{\mathrm{i}}

\newcommand{\refer}{{\beta}}
\newcommand{\referres}{{\res,\beta}}

\newcommand{\Matz}{\mathcal{Z}_{[0,t]}}

\newcommand{\kin}{_{M}}
\newcommand{\kini}{_{\widetilde M}}

\newcommand{\eig}{u}
\newcommand{\field}{\chi}
\newcommand{\clasx}{\xi}
\newcommand{\infi}{{t_{-}(I)}}
\newcommand{\supi}{{t_{+}(I)}}

% 
% 
%  \long\def\symbolfootnote[#1]#2{\begingroup%
%  \def\thefootnote{\fnsymbol{footnote}}\footnote[#1]{#2}\endgroup}
% 

\begin{document}

\begin{center}
\large{ \bf{Quantum Diffusion with Drift and the Einstein Relation I}} \\
\vspace{20pt} \normalsize

\vspace{10pt} 
{\bf   W.\ De Roeck }\\
\vspace{10pt} 
{\bf   Institute for Theoretical Physics \\
Universit\"at Heidelberg\\ 
 69120 Heidelberg, Germany}

\vspace{20pt}

{\bf   J.\ Fr\"ohlich  }\\
\vspace{10pt} 
{\bf   Institute for Theoretical Physics \\
ETH Z\"urich \\
8093 Z\"urich, Switzerland}
\vspace{20pt} 

{\bf K.\ Schnelli  }

\vspace{10pt} 
{\bf   Department of Mathematics \\
Harvard University \\
Cambridge MA 02138, United States}

\vspace{15pt} \normalsize

\end{center}

\vspace{20pt} \footnotesize{ \noindent {\bf Abstract}:  We study the dynamics of a quantum
particle hopping on a simple cubic lattice and driven by a constant external force.  It
is coupled to an array of identical, independent thermal reservoirs consisting of free, massless Bose fields, 
one at each site of the lattice. When the particle visits a site x of the lattice it can emit or absorb field quanta of
the reservoir at x. Under the assumption that the coupling between the particle and the reservoirs and the driving force are sufficiently small, we establish the following results:  The ergodic average over time of the state of the particle approaches a non-equilibrium steady state (NESS) describing a
non-zero mean drift of the particle. Its motion around the mean drift is diffusive, and the
diffusion constant and the drift velocity are related to one another by the Einstein relation. }

\vspace{5pt} \footnotesize \noindent {\bf KEY WORDS:}  diffusion, kinetic limit,
quantum brownian motion  \vspace{20pt}
\normalsize
\section{Introduction}\label{sec: intro}
In this paper and its companion~\cite{paper2}, we study the quantum dynamics of a tracer particle driven by a
constant external force field, $F$, (e.g., a uniform gravitational field) and
interacting with thermal reservoirs by emitting or absorbing gapless reservoir
modes. 
The purpose of our analysis is to establish properties of the long-time
effective dynamics of the particle and to justify some fluctuation-dissipation
relations, notably the Einstein relations.

Among properties of the effective dynamics of the tracer 
particle coupled to thermal reservoirs at positive temperature, we expect 
the following ones to hold:
A velocity-dependent friction force caused by scattering processes
between the particle and the reservoir modes (emission of Cherenkov radiation)
can be expected to counteract the external force driving the particle, in such a way 
that its mean velocity approaches a finite, non-zero limit, $v(F)$, in the direction of the external force $F$,
as time $t$ tends to infinity. Because of thermal fluctuations in the reservoir(s), 
the true motion of the particle is expected to be
\textit{diffusive} around its mean motion. One might guess that the mobility,
$\frac {\partial}{\partial F}\big|_{F=0} v(F)$, of the particle satisfies the
`Einstein relation'
\begin{equation}\label{Einstein}
\frac {\partial}{\partial F}\Big|_{F=0} v(F) = \beta D(F=0)\,,
\end{equation}
where $\beta$ is the inverse temperature of the gas and $D(F)\equiv D(\beta,F)$
its diffusion constant.  This would be the case if we modeled the
dynamics of the particle by a Langevin equation.  However, for a particle moving in physical space~$\mathbb{R}^{3}$ and interacting with a single thermal reservoir filled with a massless, free Bose field (e.g., a weakly interacting Bose gas), this
guess is almost certainly \textit{false}! 
The reason is that, usually, the friction force caused by interactions of the particle with reservoir modes
decreases, as the velocity of the particle grows. One therefore expects
that a sufficiently large force $F$ will eventually overcome the friction force and cause a
`run-away' acceleration. At positive temperature, thermal fluctuations will, in
the long run, always manage to kick the particle velocity into a region where
`run-away' occurs.  In fact, some part of the intuition described here has recently been
made precise in a (classical) Hamiltonian model describing
the system in a limiting regime where the particle is very massive
and the Bose gas in the reservoir is very dense (mean-field limit); see~\cite{FGang,EFGS}. 
In these papers, which only concern systems at zero-temperature, it is shown that the friction
force, $F_{\text{fric}}(v)$, tends to~$0$, as $|v|\rightarrow\infty$, and (assuming
rotational invariance) $|F_{\text{fric}}(v)| = |F_{\text{fric}}( \str v \str)| $
has a unique maximum, $F_{\mathrm{max}}$, at some finite value of $|v|$. 
These properties imply that, for $\vert F\vert <  F_{\mathrm{max}}$, there are two stationary solutions of the equations of
motion corresponding to two different values,
\begin{equation*}
v_{-}\,, v_{+}\,,  \qquad \text{with} \qquad v_{+} > v_{-}\,,
\end{equation*}
of the speed of the particle.
Particle motion with speed $v_{+}$ is likely to be unstable: One expects to find
`run-away' solutions accelerating to
ever higher speed, for initial conditions close to the stationary solution corresponding to~$v_{+}$.

Thus, at positive temperatures, and for a particle moving in $\mathbb{R}^{3}$ interacting 
with a single thermal reservoir, we do \textit{not} expect to ever observe an
approach of the particle's motion towards a uniform mean motion at a
\textit{finite} constant velocity determined by the external force $F$, as time
$t$ tends to infinity, (with diffusive fluctuations around the mean motion).
This type of motion -- if observed -- is a transient phenomenon that may be
encountered at intermediate times, but will disappear at very large times. If the
reasoning sketched here is correct the status of the 
Einstein relation~\eqref{Einstein} becomes rather questionable. 

It actually seems to us that the effective dynamics of a tracer particle in the
continuum driven by a constant external force and interacting 
with an infinitely extended thermal reservoir
(corresponding, e.g., to an ideal or weakly interacting Bose gas at positive
temperature), as described above, is too complicated to be treated with
mathematical precision, at present. We therefore propose to study an \textit{idealized
model} for which the expected phenomenology is much simpler: We imagine that the
tracer particle moves through a static crystalline background of ions modeled by
a periodic potential. We assume that the lowest energy band in the given
periodic potential is well separated from the higher lying bands. Assuming that
the initial state of the particle is a superposition of states with energies
belonging to the lowest band and following the particle's motion only over a
long, but \textit{finite} interval of time, we may neglect transitions of the
particle to states in 
higher-lying energy bands altogether and use a tight-binding approximation to
describe the dynamics of the particle. This amounts to replacing physical space
$\mathbb{R}^3$ by a lattice, e.g., the simple cubic lattice $\mathbb{Z}^3$, and
to considering a quantum particle hopping on $\mathbb{Z}^3$. The particle is
subject to a linear external potential,
$-F\cdot x, x\in \mathbb{Z}^3$, and interacts with a dispersive thermal
reservoir at positive temperature.  The crucial difference,  compared to the
continuum model described above, is that the speed of a quantum particle hopping
on a lattice is uniformly \textit{bounded}, and there are no `run-away
solutions' to the equations of motion.

Yet, the problem of analyzing the long-time motion of the particle in such a
model is still rather challenging, and although a lot of attention was devoted
to problems of this type, see e.g.\ \cite{chen1992quantum, weiss1988dynamics,
sassetti1989linear},  we do not have any precise mathematical results, yet. 
The origin of the difficulties lies in \textit{memory effects} within the
reservoir: The probability for the tracer particle to re-absorb a reservoir mode
it has emitted at another position in space, some time $\Delta t$ ago, does not
decay to $0$ sufficiently rapidly, as $\Delta t$ tends to infinity, to control
the effective dynamics. The reason is that reservoir auto-correlation functions
do not decay in time very fast, uniformly in space. (Assuming they decay in time
integrably fast, uniformly in space, one may hope to face a problem that can be
solved rigorously. However, a solution would still require a major effort; see
\cite{deroeckfrohlichhighdimension,deroeckkupiainendiffusion} for corresponding
results when the external force $F$ vanishes.) We therefore propose to simplify
the problem yet a little further: We assume that, at each site, $x$, of the
lattice $\mathbb{
Z}^{3}$, there is another thermal reservoir, $\res_{x}$, and that reservoirs at
different sites of the lattice are independent. Moreover, all these reservoirs
are isomorphic to one another and are described by free quantum fields at some
positive temperature $\beta^{-1}$. When the particle visits the site $x$ it only
interacts with the reservoir $\res_{x}$. Thus, for memory effects to occur, the
particle has to return to a site it has visited previously. Such memory effects
tend to decay exponentially fast in $\Delta t$.
Models of this sort, but for a vanishing external force field $F$, have been
introduced and studied in~\cite{deroeckfrohlichpizzo}. They are sufficiently
simple to be analyzed rigorously. A somewhat similar model was
treated in \cite{pilletdebievre}.

Assuming that, at this point, the reader has an idea of what the models are that
will be analyzed in this paper (see Section~\ref{sec: model} for precise
definitions), we proceed to summarizing our main results; (for detailed, precise
statements see Section~\ref{sec: results},
Theorems~\ref{thm: stationary} through~\ref{thm: einstein}). We assume
that the strength of interaction between the particle and the reservoirs is
proportional to a coupling constant $\lambda$ that will be chosen appropriately
and that the external force field is given by $F=\lambda^{2}\field$, where $\field \in
\mathbb{R}^3$ is a  fixed vector. The initial state of the system is given by a
product state, $\rho_{\sys}\otimes \rho_{{\res},\beta}$, where $\rho_{\sys}$ is a
density matrix on the Hilbert space, $\mathscr{H}_{\sys} = l^{2}(\mathbb{Z}^3)$, of
the tracer particle localized around some site of the lattice, while~$\rho_{{\res},\beta}$ is the equilibrium state of the reservoirs at inverse
temperature $\beta$. The time evolution of the 
initial 
state in the Schr\"odinger picture is (formally) given by $e^{-\ii
tH}\rho_{\sys}\otimes \rho_{{\res},\beta}\,e^{\ii tH}$, where $H$ is the Hamiltonian
of the system; see Subsections~\ref{particle},~\ref{thereservoir},~\ref{the
interaction}. The precise definition of the Hamiltonian $H$ involves the
dispersion law, $\epsilon$, of the particle and a form factor, $\phi$, that
appears in the interaction Hamiltonian coupling the particle to the reservoirs.
In Subsection~\ref{assume}, two assumptions, on $\epsilon$ and $\phi$, are
formulated that are sufficient for the results summarized below to hold and that
shall not be described here. We define the effective dynamics,
$\mathcal{Z}_{[0,t]}$, of the particle by
\begin{equation}\label{effdyn}
\rho_{\sys,t} \equiv \mathcal{Z}_{[0,t]}\rho_{\sys}\deq \Tr_{\res}\left[ e^{-\ii
tH}(\rho_{\sys}\otimes \rho_{{\res},\beta}) \,e^{\ii tH} \right]\,,
\end{equation}
where $\Tr_{\res}[\,\cdot\,]$ denotes the partial trace over the reservoir degrees of
freedom.  In the thermodynamic limit, the state $\rho_{\res, \beta}$ can no
longer be represented by a density matrix and the above formula needs to be
re-interpreted, but the left-hand side remains meaningful. 

\textit{All} the results summarized below are only known to hold, provided the
coupling constant $\lambda$ is sufficiently small.

Momentum space of the tracer particle is given by the torus $\mathbb{T}^{3}$. 
Let $f$ be an arbitrary  continuous function on momentum space. 
Our first result says that
\begin{equation*}
\lim_{T\to \infty}\frac{1}{T}\int_{0}^{T} \dd t\, 
\Tr_{\sys}\big[f(\,\cdot\,)\rho_{\sys,t}\big] = \langle f,\zeta^{\field,\lambda}
\rangle _{\mathrm{L}^2(\mathbb{T}^3)}\,,
\end{equation*}
where $\zeta^{\field,\lambda}$ is a probability density on $\mathbb{T}^{3}$
describing a stationary state of the tracer particle corresponding to uniform
motion at constant velocity $v\neq 0$. Here, $\Tr_{\sys}[\,\cdot\,]$ denotes the
trace over the tracer particle Hilbert space~$\scrH_{\sys}$. The interpretation of
this result is that the ergodic average over time of the states of the particle
approaches a so-called `non-equilibrium steady state' (NESS). If $\field=0$ then
$\zeta^{\field, \lambda}$ is actually an equilibrium measure on $\mathbb{T}^{3}$
approximately equal to Maxwell's velocity distribution, $\propto
e^{-\beta\epsilon}$, corresponding to the dispersion law $\epsilon$.

Our second result says that, asymptotically, the average motion of the tracer
particle is uniform, with asymptotic velocity
\begin{equation}\label{umotion}
v(\field)= \lim_{t\to \infty} \frac{1}{t} \Tr_{\sys}\big[X\rho_{\sys,t}\big]\,,
\end{equation}
given by  $v(\field)= \langle \nabla \epsilon, \zeta^{\field,\lambda}\rangle$.
Assuming time-reversal invariance, one sees that $v(\field=0)=0$, and one expects
that $\lim_{\field\rightarrow\infty}v(\field) = 0$. To understand the latter, recall
that the Hamiltonian of a  particle hopping on $\bbZ^3$ and driven by a constant
field $F$, has discrete spectrum, corresponding to Bloch oscillations. The
eigenvectors form the so-called Wannier-Stark ladder, they are localized and
their localization length decreases as $\str F\str \to \infty$.  Viewed from
this angle, it is quite remarkable that one gets transport upon coupling to the
reservoirs. 

The third result concerns the fact that the `true' motion of the particle is
diffusive. Because of thermal fluctuations in the reservoirs, the particle
performs quantum Brownian motion around its uniform mean motion, with a
diffusion tensor given by 
\begin{equation}\label{diffusion}
D^{ij}(\field) = \lim_{T\to\infty}\frac{1}{T^2}\int_{0}^{\infty}\dd t\,
e^{-\frac{t}{T}}
\Tr_{\sys}\big[(X^{i}-v^{i}(\field)t)(X^{j}-v^{j}(\field)t)\rho_{\sys,t}\big]\,.
\end{equation}

In the companion paper~\cite{paper2}, we will establish the `Einstein relation'
\begin{equation*}
\frac {\partial}{\partial\field^{i}}\bigg\vert_{\field=0} v^{j}(\field) =
\lambda^{2}\beta D^{ij}(\field=0)\,.
\end{equation*}

Next, we sketch some of the main steps that go into the proofs of these results.
The key idea is to show that the effective dynamics, $\mathcal{Z}_{[0,t]}$, of
the tracer particle defined in \eqref{effdyn} is well approximated by its 
\emph{kinetic limit}. Rescaling space and time as
$(x,t)=\lambda^{-2}(\clasx,\tau)$, the kinetic limit is approached when
$\lambda\rightarrow 0$, with $(\clasx,\tau)$ (arbitrary, but) fixed. Let us
consider the Wigner distribution, $\nu(\clasx,k)$, with $(\clasx,k)\in
\mathbb{R}^{3}\times\mathbb{T}^{3}$, corresponding to the state $\rho_{\sys}$ of
the particle at time $t=0$. Then, in the kinetic limit, time evolution is given
by a linear Boltzmann equation\small
\begin{align} \label{Boltzmann}
\frac{\partial}{\partial \tau}   \nu_{\tau}(\clasx,k)    =  (\nabla
\varepsilon)(k)\cdot\nabla_\clasx
\nu_{\tau}(\clasx,k)-\field\cdot\nabla_k \nu_{\tau}(\clasx,k) +      \int_{\tor} \dd
k'\big[r(k',k)\nu_{\tau}(\clasx,k')-r(k,k')  \nu_{\tau}(\clasx,k )\big]\,,
\end{align} \normalsize
where $r(k,k')dk'$ is the rate for a jump of the particle from momentum $k$ to
momentum $k'$. The kernel $r(k,k')$ can be expressed in terms of the reservoir
auto-correlation function (see Subsections~\ref{assume} and~\ref{sec: strategy
and discussion bis}) and satisfies `detailed balance'. The second, but last term
on the right-hand side of Equation~\eqref{Boltzmann} is a `gain term', the last
term is a `loss term'.

 It is convenient to consider the Fourier transform in the variable $\clasx$ of
Equation~\eqref{Boltzmann}. We set
\begin{equation*}
\hat{\nu}_{\tau}^{\kappa}(k)\deq\frac{1}{(2\pi)^{d/2}}\int_{\R^d}\,\dd
\clasx\,\mathrm{e}^{-\ii
(\kappa,\,\clasx)}\,\nu_{\tau}(\clasx,\,k)\,,
\end{equation*}
where the `pseudo-momentum' $\kappa\in\R^d$ is the variable dual to
$\clasx\in\R^d$. Then $\hat{\nu}_{\tau}^{\kappa}$ satisfies the equation
\begin{equation*}
\frac{\partial}{\partial \tau}\hat{\nu}_{\tau}^{\kappa} =
M^{\kappa,\field}\hat{\nu}_{\tau}^{\kappa}\,,
\end{equation*}
where the operator $M^{\kappa,\field}$ is given by
\begin{equation}\label{kinlim}
(M^{\kappa,\field}
g)(k)\deq\ii\kappa\cdot(\nabla\varepsilon)(k)g(k)-\field\cdot\nabla_{k}
g(k)+ \int_{\tor}\,\dd k' r(k',\,k)g(k')-\int_{\tor}\,\dd k'r(k,\,k')g(k)\,,
\end{equation}
for $g\in \mathrm{L}^{2}(\mathbb{T}^3)$. To understand our approach it is important to know that, for any force field $\field$ and arbitrary $\tau  \geq 0$, 
\begin{align}\label{eq: preview kinetic limit}
(\mathcal{Z}_{[0,\lambda^{-2}\tau ]})_{\lambda^2\kappa} \qquad
\mathop{\longrightarrow}\limits_{\la \to 0} \qquad \e{ \tau
M^{\kappa,\field}}\,,
\end{align}
strongly on $\mathrm{L}^2(\tor)$; see Theorem~\ref{theoremkineticlimit},
Equation~\eqref{eq:04.13}.
The properties of the dynamics generated by the operator~$M^{\kappa,\field}$ are
studied in detail in Section~\ref{sec: kinetic}. It satisfies all the results
summarized above; (approach to a NESS describing uniform motion at a finite
velocity, diffusion around the average motion, Einstein relation, and vanishing
of $v(\field)$, as $\field\to \infty$).

Our goal is then to show that the true dynamics of the tracer particle, as
described by the effective time evolution $\mathcal{Z}_{[0,\lambda^{-2}\tau]}$,
on the fiber corresponding to pseudo-momentum $\lambda^{2}\kappa$ and with
small, but non-zero $\la$, is well approximated by the dynamics in the kinetic
limit, as given by the propagator $\e{ \tau M^{\kappa,\field}}$. Indeed, we will
analyze the properties of $\mathcal{Z}_{[0,t]}$ on the fiber corresponding to
pseudo-momentum $\lambda^{2}\kappa$, for arbitrarily large times~$t$, by viewing
it as a perturbation of the propagator $\e{\lambda^{2}tM^{\kappa,\field}}$. It will
turn out to be convenient to develop the perturbation theory for the Laplace
transforms of $\mathcal{Z}_{[0,t]}$ and $\e{\lambda^{2}tM^{\kappa,\field}}$. The
Laplace transform of the latter is the resolvent of the operator
$\lambda^{2}M^{\kappa,\field}$, while the Laplace transform
of~$\mathcal{Z}_{[0,t]}$ on the fiber with pseudo-momentum $\lambda^{2}\kappa$
is a `pseudo-resolvent' that, in a 
suitable domain of the spectral parameter, can be 
viewed as a small perturbation of the resolvent of $\lambda^{2}M^{\kappa,\field}$.
This analysis is carried out in Section~\ref{sec: analysis of resolvent around
zero}. The formalism in Section~\ref{sec: analysis of resolvent around zero}
relies on material gathered in~\cite{paper2} (standard Dyson
expansion for $\mathcal{Z}_{[0,t]}$ and for time-dependent correlation
functions). The relevant results of~\cite{paper2} are summarized in Section~\ref{section: results from expansions} of the present paper.
 In Section~\ref{sec: equilibrium case}, the system without
external force, i.e., for $\field = 0$, is analyzed in some detail, and, in
Section~\ref{section8.2}, the proofs of the main results are
completed.

Finally, let us also point out an interesting effect that we observe in our
model and that forces us to state our results in the sense of ergodic averages
(note indeed that a more natural expression in ~\eqref{diffusion} would be
$D^{ij}(\field) = \lim_{t\to\infty}\frac{1}{t} 
\Tr_{\sys}\big[(X^{i}-v^{i}(\field)t)(X^{j}-v^{j}(\field)t)\rho_{\sys,t}\big]$.)   
Consider matrix elements $\rho_{\sys,t}(x,x')$ of the reduced density matrix. If
$\str x-x' \str$ is large compared to $\str F \str^{-1}$, then the free
Liouville equation (neglecting the reservoir) is dominated by the driving field
and one can neglect the kinetic term, i.e.,
$\rho_{\sys,t}(x,x') \approx  \e{-\ii F \cdot (x-x')t}\rho_{\sys,0}(x,x')$.  In
that case, the dissipative effect of the reservoir vanishes. Indeed, recall that
in many-body theory dissipation is related to the imaginary part of the
self-energy, which emerges from  virtual transitions. However, if we keep only
the field term in $H_\sys$ (as we just argued), then there are no such
transitions because the coupling to the reservoirs is diagonal in the position
basis.    The conclusion is that these matrix elements $\rho_{\sys,t}(x,x')$ do
not decay with time, as one would naturally expect. In other words, decoherence
is switched off by the field! 
The mathematical expression of this phenomenon is visible in Lemma~\ref{lem:
trivia of delta m}, where we bound the operator $\widetilde M^{\lambda,\kappa,\field}$, which describes
the lowest order (second order) effect of the particle-reservoir coupling. The
point to note is that this operator is not close to the operator $M^{\kappa,\field}$, which
describes the dynamics in lowest order in $\la$. The difference between these two operators
comes from the fact that $\la$ also appears in the force field, since we set $F=\la^2
\field$ and keep $\field$ constant. \\

\textit{Acknowledgements.}
We thank A.\ Kupiainen and A.\ Pizzo for many useful discussions on related
problems. We also thank D.\ Egli, Z.\ Gang and A.\ Knowles for helpful comments.  W.D.R.\ is grateful to the DFG for financial support. 

\section{Definition of the model} \label{sec: model}
In this section, we define our model in a finite volume approximation.
\subsection{Notations and conventions}\label{section2.1.1}
\subsubsection{Banach spaces}
Given a Hilbert space $\scrE$, we use the standard notation 
\begin{equation*} \scrB_p(\scrE)\deq \left\{  A \in \scrB(\scrE)\,:\,
\Tr\left[(A^*A)^{p/2}\right] < \infty  \right\}  ,\qquad   1 \leq p \leq
\infty\,,  \end{equation*}
with $\scrB_\infty(\scrE)\equiv \scrB(\scrE)$ the bounded operators on $\scrE$,
and
\begin{equation*}
\norm A \norm_p \deq \left(\Tr\left[(A^*A)^{p/2}\right]\right)^{1/p}\,, \qquad 
\norm A \norm\deq\norm A \norm_{\infty}\,.
\end{equation*}
 For operators acting on $\scrB_p(\scrE)$, e.g., elements of
$\scrB(\scrB_p(\scrE))$, we often use the calligraphic font: $\caV,\caW$ etc..
An operator $A \in \scrB(\scrE)$ determines  bounded operators
\mbox{$\mathrm{Ad}(A)\,,\adjoint(A)\,,A_{\links}\,,A_{\rechts} $} on $\scrB_p(\scrE)$ by
\begin{align*}
\mathrm{Ad}(A) B\deq ABA\adj\,,\qquad
\adjoint(A) B\deq[A,B]= AB-BA
\end{align*}
and
\begin{align}\label{eq:2.5}
A_{\links}B\deq AB\,,\qquad A_{\rechts}B\deq BA\adj\,,\qquad   B \in
\scrB_p(\scrE)\,.
\end{align}
Note that $(A_1)_{\links}(A_2)_{\rechts}=(A_2)_{\rechts} (A_1)_{\links}$, as
operators on $\scrB_p(\scrE)$, $A_1,A_2\in\scrB(\scrE)$, i.e., the left- and
right multiplications commute. The norm of  operators in $\scrB(\scrB_p(\scrE))$
is defined by \begin{equation*}\label{def: norm on operators}
\norm \caW \norm \deq \sup_{A \in \scrB_p(\scrE)}   \frac{\norm \caW(A)
\norm_p}{\norm A\norm_p}\,.
\end{equation*}
In the following, we usually set $p=1$ or $2$.

\subsubsection{Scalar products}
For vectors $\ka \in \bbC^d$, we let $\mathrm{Re}\, \ka$ denote the vector
$(\mathrm{Re}\, \ka^1, \ldots, \mathrm{Re}\, \ka^d)$, where $\mathrm{Re}$
denotes the real part. Similar notation is used for the imaginary part,
$\mathrm{Im}$. The scalar product on $\bbC^d$ is written as
$(\kappa_1,\kappa_2)$ or  
$\overline{\kappa_1}\cdot\kappa_2$ and the norm as $\str \ka \str \deq
\sqrt{(\ka,\ka)}$. 
The scalar product on  an infinite-dimensional Hilbert space $\scrE$  is written
as $\langle \cdot\,, \cdot \rangle$, or, occasionally, as $\langle \cdot\,,
\cdot \rangle_{\scrE}$. All scalar products are defined to be  linear in the
second argument and anti-linear in the first one.

\subsubsection{Kernels}
 For  $\scrE=\ell^2(\Z^d)$,  we  can  represent
$A\in\scrB_{2}(\scrE)$ by its kernel $A(x,y)$, i.e., $(Af)(x)=\sum_{y} A(x,y)f(y)$, $f\in\scrE$. Similarly, an operator, $\mathcal{A}$, acting on $\scrB_2(\scrE)$ can be represented by
 its kernel $\mathcal{A}(x,y,x',y')$ satisfying
\mbox{$(\mathcal{A}\rho)(x,y)=\sum_{x',y'}\mathcal{A}(x,y,x',y')\rho(x',y')$},
$\rho\in\scrB_2(\scrE)$. 
Occasionally, we use the notation $\ket x$ for $\delta_x  \in \scrE$, defined by \mbox{$\delta_x(x')=\delta_{x,x'}$}, and $\bra x$ for $\langle\,\delta_x\,,\,\cdot\,\rangle$.
 In this
notation $\ket x\bra y$ stands for the rank-one operator $\delta_x\,\langle
\delta_y\,,\,\cdot\,\rangle$. Similarly, for the choice
$\scrE=\mathrm{L}^2(\tor)$, we often use the notation $\ket f$ for
$f\in\mathrm{L}^2(\tor)$ and $\bra g$ for $\langle g,\,\cdot\,\rangle$,
$g\in\mathrm{L}^2(\tor)$. In this `Dirac notation', $\ket f\bra g$ stands for
the rank-one operator $f\langle g,\,\cdot\,\rangle$ on $\mathrm{L}^2(\tor)$.

\subsection{The particle}\label{particle}
Consider the hypercube $ \Lambda=\Lambda_{L}=\bbZ^d \cap [-L/2,L/2]^d$, for some $L\in 2\N$. The particle Hilbert space is chosen as
$\mathscr{H}_{\sys}=\ell^2(\Lambda)$ where the subscript $S$ refers to `system'.
  
To describe the hopping term (kinetic energy), we choose a real function $\ve:
\bbT^d \to \bbR$ and we consider the self-adjoint operator $T\equiv T^{\La} $ on
$\ell^2(\La)$ with symmetric kernel\footnote{{ Later, we will consider $T^{\Lambda}$ as an operator on $\ell^2(\Z^d)$ by the natural embedding of $\ell^2(\La)$ into $\ell^2(\Z^d)$. As such, it has the kernel 
\begin{align}
T^{\Lambda}(x,x')=\begin{cases}
                   \hat\epsilon(x-x')\,,\quad &\textrm{if}\quad x,x'\in \Lambda\\
0 &\textrm{else }
                  \end{cases}\,,
\end{align}
 i.e., we impose Dirichlet boundary conditions.}}
\begin{equation*}
T(x,x') = \hat \ve(x-x')\,,
\end{equation*}
with $\hat \ve$ the Fourier transform of $\ve$. Since we will assume $\ve$ to
be analytic, the hopping is short range.

A natural choice for the dispersion law is $\varepsilon(k) = \sum_j 2 (1-\cos
k^j) $, corresponding to $T=-\Delta$, with $\Delta$ the lattice
Laplacian on $\ell^2(\Lambda)$ with Dirichlet boundary conditions. This choice satisfies all our assumptions, to be stated in 
Section~\ref{assume}.

We define the particle Hamiltonian as
\begin{equation*}
H_{\sys}\deq T-F \cdot X\,,
\end{equation*}
where $F \in \R^d$ is an external force field, e.g., an electric field, and $X\equiv X^{\Lambda}$ 
denotes the position operator on $\scrH_{\sys}$, defined by $Xf(x)=x f(x)$. In what
follows we will write $F= \la^2 \field$, with $\field$ a rescaled field, (a
notation to be motivated later).\\

\subsection{The reservoir}\label{thereservoir}
\subsubsection{Dynamics}
For each $x\in\Z^d$, we define a reservoir Hilbert space at site $x$ by
\begin{align*}
\mathscr{H}_{\res_x}\deq\Gamma_s( \mathrm{L}^2(\bar\Lambda))\,,
\end{align*}
where  $\bar\Lambda=\bar\Lambda_{L}=\R^d\cap[-L/ 2,L/2]^d$ and $\Gamma_s(\caE)$ is the
symmetric (bosonic) Fock space over the Hilbert space~$\caE$. We assume that the
reader is familiar with basic concepts of second quantization, such as Fock
space and creation/annihilation operators; (we refer to,
e.g.,~\cite{derezinski1} for definitions and background).
The total reservoir Hilbert space is defined by
\begin{align*}
\mathscr{H}_{\res}\deq\bigotimes_{x\in\Lambda}\mathscr{H}_{\res_x}\,.
\end{align*}
Note that for all $x$, the spaces $\mathscr{H}_{\res_x}$ are isomorphic to each other. We remark that there is no compelling reason to restrict the one-site reservoirs to
the same region, $[-L/2,L/2]^d$, as the particle system, but this simplifies our
notation. The reservoir Hamiltonian is defined as
\begin{align}\label{def: res ham}
H_{\res}\deq\sum_{x\in\Lambda} \, \, \sum_{q\in\bar\Lambda\adj} \bosondispersion(q)
 a_{x,q}\adj a_{x,q}\,,
\end{align}
where $\bar\Lambda\adj=\frac{2\pi}{L}\Z^d$ is the set of quasi-momenta for the
reservoir at site $x$, and the operators $a^{\#}_{x,q}$ are the canonical
creation/annihilation operators satisfying the commutation relations
\begin{align*}
{ [a_{x,q},a\adj_{x',q'}]=\delta_{x,x'}\delta_{q,q'}\,, \quad\quad[a_{x,q},a_{x',q'}]=[a\adj_{x,q},a\adj_{x',q'}]=0\,,}
\end{align*}
and we choose the dispersion law $\bosondispersion(q)=  \str q
\str + \delta_{q,0}$. Note that this dispersion law corresponds to photons or phonons, except for $q=0$,  where we have modified this dispersion
law at $q=0$ by adding an infrared
regularization that does not affect any of our results; e.g., if we replace
$\delta_{0,q}$ by $K \delta_{0,q}$, with $K >0$, then all  infinite-volume
objects studied in this paper are independent  of $K$.

\subsubsection{Equilibrium state} \label{sec: eq state}
Next, we introduce the {\it Gibbs state} of the reservoir at inverse temperature
$\beta$, $0<\beta<\infty$. It is given by the density matrix 
\begin{equation}\label{eq:2.14}
\rho_{\res, \refer} \deq\frac{1}{Z_{\referres}}\e{-\beta
H_{\res}}\,,\quad\textrm{ where }{Z_{\referres}}=\Tr_{\res}[\e{-\beta
H_{\res}}]\,,
\end{equation}
where $\Tr_{\res}[\,\cdot\,]$ denotes the trace over $\mathscr{H}_{\res}$.

An alternative way to describe this density matrix is to specify the expectation
values of arbitrary observables, which we denote by $ \langle O
\rangle_{\rho_\referres}\deq \Tr_{\res} \left[O\rho_{\res,\refer}\right]$. 
 For $\varphi\in\ell^2(\bar\Lambda^*)$, we write $a_{x}(\varphi)= \sum_{q
\in\bar\Lambda^* }\varphi(q)a_{x,q}$, and we choose observables, $O$, to be
polynomials in the operators $a_{x}(\varphi)$.  One then finds that, for any
$x,x'$ and $\varphi,\varphi'\in\ell^2(\bar\Lambda^*)$:
\begin{itemize}
\item[$i.$] Gauge-invariance:
\begin{equation}\label{eq:gaugeinvariance}
\langle a_x\adj(\varphi)  \rangle_{\rho_\referres} =\langle a_x(\varphi) 
\rangle_{\rho_\referres}=0\,;
\end{equation}
\item[$ii.$]  Two-point correlations: Let
$\varrho_{\beta}\deq(\e{\beta\bosondispersion}-1)^{-1}$, with the one-particle dispersion law
$\bosondispersion(q)=\str q \str+\delta_{q,0}$, be the Bose-Einstein density (operator). Then
\end{itemize}
\begin{align*}
\left( \begin{array}{cc} \langle a^*_{x}(\varphi)
a_{x'}(\varphi')  \rangle_{\rho_\referres}
& \langle a^*_{x}(\varphi) a^*_{x'}(\varphi')  \rangle_{\rho_\referres}  
\\[1mm]
\langle a_{x}(\varphi)a_{x'}(\varphi') \rangle_{\rho_\referres}&\langle
a_{x}(\varphi) a^*_{x'}(\varphi')
 \rangle_{\rho_\referres}
\end{array} \right)  =   \delta_{x,x'} \left(\begin{array}{cc}
 \langle \varphi' ,\varrho_{\beta} \varphi \rangle & 0    \\[1mm]
0 & \langle \varphi , (1+ \varrho_{\beta}) \varphi') \rangle
\end{array}\right)\,;
\end{align*}
\begin{itemize}
\item[$iii.$] Wick's theorem:
 \begin{align}
\langle a^{\#}_{x_{2n}}(\varphi_{2n})  \ldots a^{\#}_{x_{1}}(\varphi_{1}) 
\rangle_{\rho_\referres}  & =  \sum_{\pi\in\mathrm{Pair}(n)} \prod_{(r,s) \in
\pi} \langle a^{\#}_{x_s}(\varphi_s) a^{\#}_{x_r}(\varphi_r) 
\rangle_{\rho_\referres}   \label{eq: gaussian property1}\,,\\[2mm]
\langle a^{\#}_{x_{2n+1}}(\varphi_{2n+1})  \ldots a^{\#}_{1}(\varphi_{1}) 
\rangle_{\rho_\referres}  &=  0\,,\label{eq: gaussian property2}
\end{align}
where $\mathrm{Pair}(n)$ denotes the set of partitions of $\{1,\ldots,2n \}$ into $n$ pairs and
the product is over these pairs $(r,s)$, with the convention that $r<s$. Here,
$\#$ stands either for $\adj$ or nothing.
\end{itemize}

\subsection{The interaction}\label{the interaction}
We define the Hilbert space of state vectors of the coupled system (particle and
reservoirs) by
\begin{equation*}
\mathscr{H}\deq \mathscr{H}_{{\sys}}\otimes\mathscr{H}_{{\res}}\,.
\end{equation*}
We pick  a smooth `structure factor' $\phi\in\mathrm{L}^2(\bbR^d)$ and we define its
finite volume version $\phi^{\La} \in \ell^2(\bar\La^*)$ by
\mbox{$\phi^{\La} (q)\deq  (2\pi/L)^{d/2}\phi(q)$}, with the normalization chosen such that $\norm
\phi\norm_{\mathrm{L}^2(\bbR^d)}= \mathop{\lim}\limits_{L \to \infty}
\norm \phi^\La\norm_{\ell^2({\bar\La^*})} $. We will drop the superscript
$\La$. The interaction between the particle and the reservoir at site $x$ is given by
\begin{equation*}
\lone_x  \otimes  \Psi_x(\phi),  \quad \textrm{where} \quad    \Psi_x(\phi)=  
a_x(\phi)+ {a}\adj_x(\phi)\
\end{equation*}
is the field operator, and $\lone_x=\ket x\bra x$ denotes the projection onto the
lattice site $x$. The interaction Hamiltonian is taken to be
\begin{equation*}
H_{\mathrm{SR}}\deq \sum_{x \in \Lambda}   \lone_x  \otimes \Psi_x(\phi) \quad  
\textrm{on} \quad 
\mathscr{H}_{\sys} \otimes  \mathscr{H}_{\res}\,.
\end{equation*}

The total Hamiltonian of the interacting system on $\scrH$ is then given by
\begin{align}\label{eq:1.13}
H\deq T\otimes\lone-\lambda^2\field\cdot X\otimes\lone+\lone\otimes H_{\res}+\lambda H_{\mathrm{SR}}\,,
\end{align}
where $\lambda\in\R$ is a coupling constant. The interaction term $H_{\sys\res}$
is relatively  bounded w.r.t.\ $H_{\sys}+H_{\res}$ with arbitrarily small
relative bound.  It follows that $H$ is essentially selfadjoint on the domain 
$\scrH_\sys \otimes \Dom(H_\res)$, (where $\Dom(H_\res)$ denotes the domain of
$H_{\res}$).

\subsection{Effective dynamics} \label{effective dynamics}
The time-evolution in the Schr\"odinger picture is given by
\begin{equation*}
\rho_t=\e{-\ii t H}\rho\,\e{\ii t H}\,,\quad\quad\rho\in\mathscr{B}_1(\mathscr{H})\,.
\end{equation*}
We will usually choose an initial state $\rho$ of the form $\rho= \rho_\sys
\otimes \rho_{\res,\refer}$, with $\rho_{\res,\refer}$ as defined above.
Of course, $\rho_t$, with $t>0$, will in general not be a simple tensor product, but we
can always take the partial trace, $\Tr_{\res}[\,\cdot\,]$, over~$\scrH_\res$ to
obtain the `reduced density matrix' $\rho_{\sys,t} $ of the system;
\begin{equation*}
\rho_{\sys,t} =  \Tr_\res \left[ \e{-\ii t H}(\rho_\sys \otimes
\rho_{\res,\refer}) \e{\ii t H} \right]= : \caZ_{[0,t]}\rho_\sys\,,
\end{equation*}
and we call $\caZ_{[0,t]}:
\scrB_1(\mathscr{H}_{\sys})\rightarrow\scrB_1(\mathscr{H}_{\sys}):  \rho_{\sys}
\mapsto \rho_{\sys,t} $ the {\it reduced} or {\it effective} dynamics. 
It  is a trace-preserving and completely positive map.

In the present paper, we will mainly consider observables of the form $O \otimes
\lone$ with $O \in \scrB(\scrH_\sys)$, in which case we can also write
\begin{align}\label{expectationvalue}
\langle O(t)\rangle_{\rho_{\sys}\otimes \rho_{\res,\refer}}\deq\Tr[O(t)\rho_{\sys}\otimes\rho_{\res,\refer}]=\Tr_{\sys}[O
\rho_{\sys,t}]\,,
\end{align}
where the trace $\Tr[\,\cdot\,]$ is over the Hilbert space $\scrH$, the trace  $\Tr_{\sys}[\,\cdot\,]$ is over the particle Hilbert space
$\mathscr{H}_{\sys}$ and~$O(t)$ is the Heisenberg picture time evolution of the observable~$O\otimes\lone$, i.e.,
\begin{align}\label{Heisenberg}
 O(t)\deq\e{\ii tH}(O\otimes\lone)\,\e{-\ii t H}\,.
\end{align}
Note that $O(t)$ is, in general, \textit{not}
of the product form $O' \otimes\lone$, for some $O'$. 

\subsection{Time-reversal} 
We define an anti-linear time-reversal operator $\Theta= \Theta_\sys \otimes
\Theta_\res$, where $\Theta_\sys$ is given by 
\begin{equation*}
\Theta_\sys f(x)  = \overline{f(x)}\,, \qquad  f \in \ell^2(\La)\,,
\end{equation*}
and $\Theta_\res$ by $\Theta_\res\deq \Ga_s(\theta_\res)$, with the one-particle
operator $\theta_\res$ given by
\begin{equation*}
\theta_\res \varphi_{x}(q)  = \overline{\varphi_{x}(-q)}\,, \qquad  \varphi_{x}
\in  \ell^2(\bar \La^*)\,, \qquad x \in \Lambda\,.
\end{equation*}
If the dispersion law $\varepsilon$ of the particle and the form factor $\phi$
are invariant under time-reversal, i.e., $\varepsilon(k)=\varepsilon(-k)$,
$\phi(q)=\overline{\phi(-q)}$ (as will be assumed) then we have that
\begin{equation*}
 \Theta H^{\field=0} \Theta=H^{\field=0}\,,
\end{equation*}
expressing time-reversal invariance of the model.

\section{Assumptions and Results}\label{sec: results}

\subsection{Assumptions}\label{assume}
The model introduced in the last section is parametrized by two functions: the dispersion law $\varepsilon: \bbT^d\to\bbR$, and the form factor
$\phi:\bbR^d\to\bbC$. Here, we formulate our assumptions  on these two
functions.
The  (multi-) strip $\mathbb{V}_{\delta}$ is defined by

\begin{equation}\label{definition of the multi strip}
\bbV_{\delta}\deq\{ z \in (\bbT+\ii \bbT)^d\,:\, |\im z \str \leq \delta \}\,.
\end{equation}

 \renewcommand{\theassumption}{\Alph{assumption}}
 
\begin{assumption} \label{ass: analytic dispersion}\emph{ [Particle
dispersion]} 
The function $\varepsilon$ extends to an analytic function in a region
containing a strip $\bbV_{\delta}, \delta >0$. In particular, the norm
\begin{equation*}
\norm \ve \norm_{\infty,\delta} \deq\sup_{p \in \bbV_{\delta}} \str \ve(p) \str 
\end{equation*}
is finite, for some $\delta>0$. 
Furthermore, there  does not exist any  $v \in\R^d$ such that the function
\[\tor\ni k \mapsto
 ( v, \nabla\varepsilon(k))\]  vanishes identically.
\end{assumption}
This assumption allows us to estimate the free particle propagator
$\e{-\ii t H_{\sys}}$ on the particle Hilbert space \mbox{$\scrH_{\sys}=\ell^2(\La_L)$}
as
follows: 
\begin{align}\label{eq: propagation bound}
\big|\big(\e{-\ii t H_{\sys}}\big)(x,x')\big|\le C\e{-\nu|x-x'|}\e{ t  \norm \im \ve
\norm_{\infty,\nu} }\,.
\end{align}
For $L=\infty$, the bound~\eqref{eq: propagation bound} is
the {\it Combes-Thomas} bound; for finite $L$, it can be established in an
analogous way. If we replace $\Z^d$ by $\R^d$,  any physically acceptable dispersion law
$\varepsilon$ is unbounded, and there is no exponential decay in $|x-x'|$.  This is the main reason why
the system studied in this paper is defined on a lattice.

The next assumption deals with the `time-dependent' correlation function defined (in finite-volume) as
 \begin{equation}\label{eq: clear presentation finite volume correlationfunction one}
\hat \psi^{\La}(t)\deq   \sum_{q \in \bar\La^*}  
\str \phi^{\La}(q)\str^2 \left(  \frac{\e{-\ii t \bosondispersion(q)  }}{\e{\beta\bosondispersion(q)}-1}+\frac{\e{\ii t \bosondispersion(q)}  }{1-\e{-\beta\bosondispersion(q)}}  \right),
\end{equation}
and in the thermodynamic limit as
\begin{equation} \label{eq: clear presentation of correlation function}
\hat\psi(t)\deq
 \int \d q \, \str \phi(q)\str^2  \left(  \frac{\e{-\ii t \str q
\str}}{\e{\beta\str q \str}-1}+\frac{\e{\ii t \str q \str}  }{1-\e{-\beta\str q
\str}}  \right).\
\end{equation}
Since the correlation function $\hat \psi$ is determined by the form factor $\phi$, the following assumption is in fact a constraint on the choice of $\phi$. Define the strip $\bbH_{\beta}$ by
\begin{align}
 \bbH_{\be}\deq \{z \in \bbC\,:\, 0 \leq \im z \leq \be \}\,.
\end{align}
\begin{assumption} \label{ass: exponential decay}\emph{[Decay of reservoir correlation
function]}
The form factor $\phi$ is a spherically symmetric function, i.e., $\phi(q) =:
\phi(\str q \str)$. The correlation functions
$\hat\psi^{\La}(z)$, $\hat \psi(z)$ are uniformly bounded in $ \La$ and $z
\in \bbH_{\be}$, and 
\begin{align*}
\tdl\hat\psi^{\La}(z) = \hat \psi(z) 
\end{align*}
holds uniformly on compacts in $
\bbH_\be$, where $\lim_{\La}$ stands for $\lim_{L \to \infty}$ (recall that $\La\equiv\La_L$).  Furthermore, the number
\begin{equation} \label{eq: gs shift}
\sum_{q \in \bar\La^*}  \bosondispersion(q)^{-1} \str \phi^{\La}(q)\str^2
\end{equation}
is  bounded uniformly in $\La$. Most importantly,  $\hat\psi(z)$ is continuous on $\bbH_\be$ and 
\begin{align*}
\str\hat{\psi}(z)\str \leq C \, \e{-g_\res\str z\str}\,, \qquad z \in \bbH_\be\,.
\end{align*}
\end{assumption}

This assumption mainly states that the
reservoirs exhibit exponential loss of memory. This is a key ingredient for our
analysis.

Often, one also considers the `spectral density'
\begin{equation}\label{definition of psi ohne hut}
\psi (\bosondispersion) = \frac{1}{2 \pi} \int_{-\infty}^{\infty} \d t  \, \hat \psi(t) \,\e{\ii t \bosondispersion}\,.
\end{equation}
It satisfies the so-called `detailed balance' property $\e{\be \bosondispersion}\psi(\bosondispersion)= \psi(-\bosondispersion)$, which
expresses, physically, that the reservoir is in thermal equilibrium at inverse temperature $\beta$.

Assumptions~\ref{ass: analytic dispersion} and~\ref{ass: exponential decay} are
henceforth required and will not be repeated.

\subsection{Thermodynamic limit}\label{sec: thermo}
Up to this point, we have considered a system in a finite volume (cube), $\La$
or $\bar \La$, characterized by its linear size $L$. However, if we wish to
study dissipative effects, we must, of course, pass to the thermodynamic limit,
in order to eliminate finite-volume effects such as Poincar\'e recurrence. This
amounts to taking $\La= \bbZ^d, \bar \La=\bbR^d$ and is accomplished below. 

In this section, we will explicitly put a label $\La$ on all quantities
referring to a system in a finite volume.  As an example, $\scrH_\sys$ now
stands for $\ell^2(\bbZ^d)$, and we write $\scrH^\La_\sys$ for 
$\ell^2(\La)$.   The shorthand $\lim_{\La}$ stands for  the thermodynamic limit,
$\lim_{L \to \infty}$.

\subsubsection{Observables of the system}\label{sec: observables of the system}
We begin by defining some classes of infinite-volume system observables, (i.e., certain
types of bounded operators on $\scrH_{{\sys}}$). We say that an operator $O \in
\scrB(\scrH_\sys)$ is {\it exponentially localized} whenever
\begin{equation*}
\str O(x,x')\str \leq C \e{-\nu (\str x \str+\str x' \str) }, \qquad \textrm{for
some}\, \nu >0\,.
\end{equation*}
An important r\^{o}le is played by the so-called \textit{quasi-diagonal} operators. These are operators $O\in\scrB(\scrH_{\sys})$ with
the property that

\begin{equation*}
\str O(x,x')\str \leq C \e{-\nu (\str x-x' \str) }, \qquad \textrm{for some}\,
\nu >0\,.
\end{equation*}
We denote by $\mathop{\mathfrak{A}}\limits^{\circ} $ the class of quasi-diagonal operators and by $\mathfrak{A} $ its norm-closure.

An observable $O \in \scrB(\scrH_\sys)$ is said to be \textit{translation-invariant} whenever
$\caT_yO=O$, for arbitrary $y \in \bbZ^d$, where $\caT_y O(x,x') \deq O(x+y,x'+y)$.  
Translation-invariant operators on $\scrH_\sys$ form a \textit{commutative}
$C^*$-algebra denoted by $\mathfrak{C}_{\mathrm{ti}}$. We also introduce
the algebras
\begin{equation*}
{\mathop{\mathfrak{A}}\limits^{\circ}}_{\mathrm{ti}}
\deq\mathfrak{C}_{\mathrm{ti}} \cap \mathop{\mathfrak{A}}\limits^{\circ}\,,
\qquad \mathfrak{A}_{\mathrm{ti}} \deq \mathfrak{C}_{\mathrm{ti}} \cap
\mathfrak{A}\,.
\end{equation*}

An operator $O \in  \mathfrak{C}_{\mathrm{ti}} /
{\mathop{\mathfrak{A}}\limits^{\circ}}_{\mathrm{ti}}/ \mathfrak{A}_{\mathrm{ti}}$
can be identified with a multiplication operator, $M_{f}$, on the Hilbert space~$\mathrm{L}^2(\tor)$, i.e., $M_fg=fg$, $g\in\mathrm{L}^2(\tor)$,
with $f:\tor\mapsto \C$ a bounded and measurable/real-analytic/continuous function. Physically, the variable in $\bbT^d$ is the momentum of the particle.

These classes of operators are introduced because certain expansions used in our analysis will apply to quasi-diagonal operators or translation-invariant quasi-diagonal operators, and they can be extended to the closures of these algebras by density.

In analyzing diffusion and in the proof of the Einstein relation we also need to consider certain observables that are unbounded operators: We introduce the
~$^*$-algebra $\mathfrak{X}$ that consists of
polynomials in the components,~$X^{i}$, $i=1,\ldots,d$, of the particle-position operator $X$.

Given an infinite-volume observable $O\in\scrB(\mathscr{H}_{\sys})$,
$\mathscr{H}_{\sys}=\ell^2(\Z^d)$, or $O \in \mathfrak{X}$, we
associate an observable $O^{\Lambda}= \lone_{\La} O \lone_{\La}$ on $\scrH_{\sys}^{\Lambda}=\ell^2(\Lambda)$ with it, where
$ \lone_{\La}$ is the orthogonal projection  $\ell^2(\bbZ^d) \to
\ell^2(\Lambda)$.

\subsubsection{Dynamics}
We choose not to construct directly the time-evolution of infinite-volume observables and
infinite-volume states, although this could be done by using the Araki-Woods representation of the system in the thermodynamic limit. Instead, we will analyze
the infinite-volume dynamics of \textit{`reduced'} states, i.e., of states restricted to
particle observables and correlation (Green) functions of particle-observables by 
constructing these objects as thermodynamic limits of finite-volume expressions.

An infinite-volume density matrix of the particle system
$\rho_{\sys}\in\mathscr{B}_1(\scrH_{\sys})$ is called {\it exponentially localized}
if
\begin{align}
 |\rho_{\sys}(x,x')|\le C\e{-\nu(|x|+|x'|)}\,,\qquad \textrm{for some}\, \nu >0\,.
\end{align}
Given such an infinite-volume density matrix $\rho_{\sys}$, we
associate finite-volume density matrices 
\begin{align}\label{finite volume density matrix}
 \rho_{\sys}^{\Lambda}\deq
\frac{1}{Z_{\rho_{\sys}}^{\Lambda}}\lone_{\La} \rho_{\sys}
\lone_{\La}\in\mathscr{B}_1(\scrH_{\sys}^{\Lambda})\,,\qquad
Z_{\rho_{\sys}}^{\Lambda}\deq\Tr_{\sys}[\lone_{\La} \rho_{\sys}
\lone_{\La}]\,,                                      
\end{align}
 with it. Note that, due to the normalization by
$Z_{\rho_{\sys}}^{\Lambda}$, $\rho_{\sys}^{\Lambda}$ is a density matrix on
$\scrH_{\sys}^{\Lambda}$. 

Recall the definition of the reduced dynamics, $\caZ^{\La}_{[0,t]}$, introduced
in Section~\ref{effective dynamics}, and set
\begin{equation}\label{eqabove}
\caZ_{[0,t]} \rho_{\sys} \deq\tdl  \caZ_{[0,t]}^{\La} \rho_\sys^{\La}  \,.
\end{equation}
The next lemma asserts that the thermodynamic limit  (as $\Lambda$ and $\bar\Lambda$ increase to $\mathbb{Z}^{d}$, $\R^d$, respectively) in~\eqref{eqabove} exists, 
and that the resulting reduced dynamics 
$\caZ_{[0,t]}$ is translation-invariant.
\begin{lemma} \label{lem: thermodynamic dyn}
The limit on the right side of Equation~\eqref{eqabove} exists in
$\scrB_1(\scrH_\sys)$, and this defines the map $\caZ_{[0,t]}: 
\scrB_1(\scrH_\sys)\to \scrB_1(\scrH_\sys)$.  The map $\caZ_{[0,t]}  $ preserves
the trace, i.e.,  $\Tr_{\sys}[ \caZ_{[0,t]} \rho_\sys]= \Tr_{\sys}[ \rho_\sys]$,
positivity and exponential localization of the state of the particle, i.e., if
$\rho_\sys$ has any of these properties, then so does  $\caZ_{[0,t]}
\rho_\sys$. 
Moreover, $\caZ_{[0,t]}$ is translation-invariant; $\caT_{-y}\caZ_{[0,t]}
\caT_y=\caZ_{[0,t]} $ for $y \in \bbZ^d$  with $\caT_y$ as in Subsection~\ref{sec:
observables of the system}.  As a consequence of the above, 
for $O$ in $\mathfrak{A}$ or $\mathfrak{X}$, and for an exponentially localized
state $\rho_\sys$, we can define
\begin{equation*}
 \langle O(t) \rangle_{\rho_\sys \otimes \rho_\referres}\deq \Tr_{\sys} [O \caZ_{[0,t]} \rho_{\sys}]\,. 
\end{equation*}

\end{lemma}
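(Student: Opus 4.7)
The proof should be driven entirely by the Dyson expansion of $\caZ^\La_{[0,t]}$ in powers of the coupling $\la$, whose termwise bounds are stated (per the paper) in Section~\ref{section: results from expansions} from~\cite{paper2}. The idea is to write
\begin{equation*}
\caZ^\La_{[0,t]}\rho_\sys^\La = \sum_{n\ge 0}\la^{2n} \sum_{\pi\in\mathrm{Pair}(n)} \int_{0\le s_1\le\cdots\le s_{2n}\le t} \dd \vec s \,\, \caK^{\La,\pi}_{\vec s}\,\rho_\sys^\La,
\end{equation*}
where $\caK^{\La,\pi}_{\vec s}$ is built from free particle propagators $\e{-\ii s H_\sys^\La}$ interleaved with insertions of site projectors $\lone_x$ at random sites $x\in\La$, with the Wick contractions producing factors of $\hat\psi^\La(s_i-s_j)$ (Assumption~\ref{ass: exponential decay} and Equation~\eqref{eq: clear presentation finite volume correlationfunction one}). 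To pass to the thermodynamic limit one performs three separate tasks: (i) show termwise convergence of $\caK^{\La,\pi}_{\vec s}\rho_\sys^\La$ to its $\La=\Z^d$ counterpart in $\scrB_1$; (ii) obtain $\La$-uniform bounds on $\norm \caK^{\La,\pi}_{\vec s} \rho_\sys^\La \norm_1$ so that dominated convergence applies inside the $\vec s$-integrals and so that the sum over $n,\pi$ converges; (iii) read off the claimed properties (trace, positivity, exponential localization, translation invariance) from the limit.

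\textbf{Key steps.} For (i): the free propagator kernel $(\e{-\ii sH^\La_\sys})(x,x')$ converges pointwise to the infinite-volume kernel on $\Z^d$, and the site interactions $\lone_x\otimes\Psi_x(\phi)$ are local, so for each fixed diagram and each fixed $\vec s$ only finitely many terms contribute to a given matrix element $(\caK_{\vec s}^{\La,\pi}\rho_\sys^\La)(x,y)$, and these stabilize once $\La$ contains a sufficiently large ball. Combined with the normalization $Z^\La_{\rho_\sys}\to 1$ and $\hat\psi^\La\to\hat\psi$ on compacts (Assumption~\ref{ass: exponential decay}), one gets kernel-wise convergence. For (ii): the Combes--Thomas bound~\eqref{eq: propagation bound} gives $\La$-uniform exponential localization of $\e{-\ii s H_\sys^\La}$, and Assumption~\ref{ass: exponential decay} gives $|\hat\psi^\La(s_i-s_j)|\le C\e{-g_\res|s_i-s_j|}$ uniformly in $\La$. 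The combinatorial estimate on $|\mathrm{Pair}(n)|$ together with the exponential decay of $\hat\psi$ (and, for exponential localization, an extra weight $\e{\nu(|x|+|y|)}$ absorbed by Combes--Thomas) yields the $\La$-uniform summable bound needed to conclude $\scrB_1$-convergence and to propagate exponential localization into the limit. Trace preservation and (complete) positivity pass to the limit because they are preserved under $\|\cdot\|_1$ convergence of the action on a trace-class input. Translation invariance follows because under the shift $\caT_y$ the kinetic piece $T$ is invariant, the reservoir algebra and state $\rho_{\res,\refer}$ are site-wise i.i.d., and the field term $-\la^2\field\cdot X$ changes by the c-number $\la^2\field\cdot y\cdot\lone$, which commutes through the Heisenberg evolution and cancels. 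The finite-volume analogue holds up to boundary terms that are killed in the limit.

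\textbf{Definition of the expectations.} Once the limit $\caZ_{[0,t]}\rho_\sys\in\scrB_1(\scrH_\sys)$ is established together with exponential localization, the pairing $\Tr_\sys[O\caZ_{[0,t]}\rho_\sys]$ is finite for $O\in\mathfrak{A}$ (where $O$ is norm-bounded) and also for $O\in\mathfrak{X}$ (polynomial in $X$), since the polynomial growth of $O(x,x')$ is beaten by the exponential decay of $(\caZ_{[0,t]}\rho_\sys)(x,x')$. Agreement with the finite-volume object $\Tr_\sys[O^\La\caZ^\La_{[0,t]}\rho_\sys^\La]$ is immediate from (i)--(ii) and the definition $O^\La=\lone_\La O\lone_\La$.

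\textbf{Main obstacle.} The technical heart of the argument is task (ii): extracting an $n!$-free combinatorial bound for the Dyson series. A naive estimate yields $(Ct)^{2n}/n!$ times $|\mathrm{Pair}(n)|\sim (2n)!/(2^n n!)$, which does not sum. The resolution is exactly the `connected cluster' reorganization of the expansion developed in~\cite{paper2}: the exponential time-decay $\e{-g_\res|s_i-s_j|}$ of $\hat\psi$ must be used to turn each paired integration $\int\dd s_i\dd s_j \,|\hat\psi(s_i-s_j)|$ into an $\La$- and $t$-uniform finite constant, thereby saving a factor of $n!$ and giving a bound of order $(C\la^2)^n$. This, together with the Combes--Thomas estimate used to absorb any weight $\e{\nu(|x|+|y|)}$ needed for exponential localization, is where Assumptions~\ref{ass: analytic dispersion} and~\ref{ass: exponential decay} enter in an essential way. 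Everything else in the lemma is a soft consequence of this one uniform bound.
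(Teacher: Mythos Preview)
Your approach is essentially the one the paper sketches in Section~\ref{section: results from expansions}: expand $\caZ^\La_{[0,t]}$ via Dyson and Wick, use the Combes--Thomas bound~\eqref{eq: propagation bound} together with Assumption~\ref{ass: exponential decay} to get $\La$-uniform kernel bounds, and then pass to the limit term by term. The paper organizes this slightly differently, first resumming into the irreducible-pairing form~\eqref{expansion neu 2} and then bounding $\caV_I^\La$, but the content is the same.

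One correction to your ``main obstacle'' paragraph: for this lemma the combinatorics is \emph{not} an obstruction. The ordered time-simplex has volume $t^{2n}/(2n)!$, not $t^{2n}/n!$, so the naive bound is
\[
\la^{2n}\cdot\frac{t^{2n}}{(2n)!}\cdot\frac{(2n)!}{2^n n!}\cdot\big(\sup_{s,\La}|\hat\psi^\La(s)|\big)^n
=\frac{(C\la^2 t^2)^n}{n!}\,,
\]
which sums (to something like $\e{C\la^2 t^2}$). Since Lemma~\ref{lem: thermodynamic dyn} concerns a \emph{fixed} time $t$, this already suffices for the thermodynamic limit and for the quasi-diagonal bound $|\caZ_{[0,t]}(x,y,x',y')|\le C(t)\e{-\nu|x-x'|-\nu|y-y'|}$; the paper explicitly notes that its constant $C$ is only uniform for $I$ in compacts. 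The genuine need for the cluster reorganization you describe (turning pair-integrals into $t$-independent constants via the decay of $\hat\psi$) arises later, in Lemma~\ref{lem: pseudoresolvent neu}, where one needs bounds on the Laplace transform uniform in $t$. So your diagnosis of the technical heart is right, but it belongs to the analysis of $\caM(z)$ and $\caR_{\mathrm{ex}}(z)$, not to this lemma.
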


\subsection{Results}\label{sec: small results}
Next, we summarize our \textit{main results}. Throughout this section, it is understood that we consider the infinite-volume system; i.e., $\Lambda=\Z^d$, $\bar\Lambda=\R^d$. 

Our first result describes the approach of
the state of the system to a `non-equilibrium stationary state' (NESS), in the
limit of large times. 

In the theorems below, we use the notation $O(t)$ for
$O^{\field}(t)$, even if $\field \neq 0$. Recall also the multiplication operator $M_f$ on $\mathrm{L}^2(\bbT^d)$ defined in Section \ref{sec: thermo}.

\begin{theorem}\label{thm: stationary}\emph{[Approach to NESS]}
There are constants
$k_\lambda, k_\field, g>0$, such that, for $0 <|\lambda| < k_\lambda$, $|\field|<
k_\field$,  there exists a real-analytic function $\zeta \equiv
\zeta^{\field,\la}$ on $\tor$, satisfying $\zeta\ge0$ and $\int_\tor\dd k\,\zeta(k)=1$, i.e., $\zeta$ is a probability density, such that the following statements hold for any 
exponentially localized density matrix, $\rho_\sys$,  and continuous function $f: \tor \to \bbR$:
\begin{itemize}
\item[$i.$] For $\field\not=0$, 
\begin{equation}\label{results approach to ness 1}
\frac{1}{T}\int_{0}^{T}\,\dd t\,  \langle M_f(t)\rangle_{ \rho_{\sys} \otimes
\rho_\referres}  =     \langle  f, \zeta^{\field,\lambda}  \rangle_{\mathrm{L}^2(\tor)} +
\caO(1/T)\, , \qquad\textrm{ as } T \to \infty\,.
\end{equation}

\item[$ii.$] For $\field\equiv 0$, 
\begin{align}\label{results approach to ness 2}
\langle M_f(t)
\rangle_{ \rho_{\sys} \otimes
\rho_\referres}= \langle f,\zeta^{0,\lambda}\rangle_{\mathrm{L}^2(\tor)} +\caO(\e{-\lambda^2 g t})\,, \qquad\textrm{ as } t \to \infty\,,
\end{align}
 and $\zeta^{0,\lambda}$ satisfies `time
reversal invariance'; $
 \zeta^{0,\la}(k) =    \zeta^{0,\la}(-k) $.
\end{itemize}
\end{theorem}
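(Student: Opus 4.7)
The core strategy, following Section~\ref{sec: intro}, is to analyze the $\kappa=0$ fiber of $\caZ_{[0,t]}$ via its Laplace transform and compare it perturbatively to the semigroup generated by the effective operator $\tilde M^{\la,0,\field}$. Since $M_f$ is translation-invariant and so is $\caZ_{[0,t]}$ (Lemma~\ref{lem: thermodynamic dyn}), the expectation $\Tr_\sys[M_f\caZ_{[0,t]}\rho_\sys]$ depends on $\rho_\sys$ only through the translation-averaged momentum density $\bar\rho_t\in\mathrm{L}^1(\tor)$; the map $\bar\rho_0\mapsto\bar\rho_t$ is the $\kappa=0$ fiber $\caZ^{0}_{[0,t]}$, and the task reduces to proving that $\caZ^{0}_{[0,t]}\bar\rho_0$ converges, in the appropriate sense, to $\bigl(\int\bar\rho_0\bigr)\,\zeta^{\field,\la}$.

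\textit{Spectral analysis.} Using the resolvent analysis of Section~\ref{sec: analysis of resolvent around zero} together with the Dyson-expansion input of Section~\ref{section: results from expansions} imported from~\cite{paper2}, one identifies the Laplace transform $R(z)$ of $\caZ^{0}_{[0,t]}$, for $\re z>0$ small, as a small controllable perturbation of the genuine resolvent $(z-\la^2\tilde M^{\la,0,\field})^{-1}$. Trace preservation makes the constant function $1$ a left null vector of $\tilde M$. A Perron--Frobenius-type argument, exploiting positivity of the semigroup together with strict positivity of the jump kernel $r$ inherited (via detailed balance and exponential decay of $\hat\psi$) from Assumption~\ref{ass: exponential decay}, produces a unique positive right eigenvector $\zeta^{\field,\la}$ normalized by $\int\zeta=1$; real-analyticity is propagated from Assumption~\ref{ass: analytic dispersion} through the eigenvalue equation by working in a space of functions analytic on a strip $\bbV_\delta$. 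For $\field=0$ the operator $M^{0,0}$ is detailed-balance self-adjoint on $\mathrm{L}^2(\tor,\e{-\be\ve}\dd k)$ and has a spectral gap; this gap persists for $\tilde M^{\la,0,0}$ at small $\la$ by analytic perturbation theory. For $\field\neq 0$ the antisymmetric drift $-\field\cdot\nabla_k$ pushes part of the spectrum towards the imaginary axis, so only the isolation of $0$ as a simple eigenvalue survives.

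\textit{Time asymptotics, time-reversal, and main obstacle.} In case (ii), the spectral gap yields $\e{t\la^2\tilde M^{\la,0,0}}=|\zeta^{0,\la}\rangle\langle 1|+O(\e{-\la^2 g t})$, producing~\eqref{results approach to ness 2}; the time-reversal symmetry $\zeta^{0,\la}(k)=\zeta^{0,\la}(-k)$ then follows because $\Theta H^{\field=0}\Theta=H^{\field=0}$ makes $\caZ^{0}_{[0,t]}$ intertwine with the parity $k\mapsto-k$ on $\tor$, so uniqueness of the invariant density forces $\zeta^{0,\la}$ to be even. In case (i), the simple isolated eigenvalue at $0$, combined with a uniform bound $\sup_{\re z>0,\,|z|<\delta}\|z R(z)\|<\infty$, yields Cesaro convergence at rate $O(1/T)$ by a mean-ergodic / Tauberian argument applied to $\frac1T\int_0^T \e{t\la^2\tilde M^{\la,0,\field}}\dd t$. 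The main obstacle is the perturbative identification underlying the spectral analysis: the ``decoherence switched off by the field'' phenomenon highlighted in the introduction means that $\tilde M^{\la,0,\field}$ is genuinely not a small perturbation of the pure kinetic operator $M^{0,\field}$, so controlling the Dyson expansion for $R(z)$ uniformly in a neighbourhood of $z=0$ --- both its absolute convergence and the preservation of the simple-eigenvalue structure at the origin --- demands the careful re-summation and bookkeeping developed in Section~\ref{sec: analysis of resolvent around zero} on the basis of~\cite{paper2}.
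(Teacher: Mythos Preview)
Your overall plan matches the paper's proof: reduce to the $\kappa=0$ fiber, analyze the Laplace transform $(\caR(z))_0$ via the machinery of Section~\ref{sec: analysis of resolvent around zero}, and then split into the gapped case $\field=0$ (inverse Laplace transform gives exponential decay, Theorem~\ref{thm: equilibrium rte}) versus the non-gapped case $\field\neq 0$ (Tauberian theorem applied to the bounded map $t\mapsto(\caZ_{[0,t]})_0$, using that $z(\caR(z))_0\to P^0$ as $z\to 0$).

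One conceptual imprecision is worth correcting. You describe $\zeta^{\field,\lambda}$ as the Perron--Frobenius eigenvector of $\widetilde M$, and the Tauberian step as applied to $\frac{1}{T}\int_0^T \e{t\lambda^2\widetilde M}\,\dd t$. In the paper, $\widetilde M$ is only a leading-order approximation; the true object is the $z$-dependent operator $S(z)=(\caL_\sys+\caM(z)+\caR_{\mathrm{ex}}(z))_{\lambda^2\kappa}$, and $\zeta^{\field,\lambda}$ is extracted from the \emph{residue} $P^0$ of $(z-S(z))^{-1}$ at its unique pole $z^*=0$ (Lemma~\ref{cor:laurent}). Perron--Frobenius is invoked only at the kinetic level (Lemma~\ref{lem: kinetic}, item $v$) to locate the simple eigenvalue of $M$; the positivity and normalization of $\zeta^{\field,\lambda}$ itself come directly from positivity- and trace-preservation of $\caZ_{[0,t]}$ (hence of $\caR(z)$ for real $z>0$), not from a Perron--Frobenius argument on $\widetilde M$ or $S$. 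Likewise, the Tauberian argument is applied to $(\caZ_{[0,t]})_0$ directly---which is not a semigroup---the input being the Laurent expansion~\eqref{eq: splitting of R} with $u(\kappa=0)=0$ and $R^0(z)$ analytic near $0$. This distinction is not cosmetic: $(\caZ_{[0,t]})_0$ has memory and is not generated by any fixed operator, so your phrasing ``applied to $\frac{1}{T}\int_0^T \e{t\lambda^2\widetilde M}\,\dd t$'' would not by itself close the argument.
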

Our next result asserts that the motion of the particle is \textit{diffusive} around an average uniform motion 
(i.e., a drift at a constant velocity).
 \begin{theorem}\label{thm: diffusion}\emph{[Diffusion]}
 Under  the same assumptions as in Theorem~\ref{thm: stationary}, 
\begin{equation*}
\lim_{t\to \infty}\frac{1}{t}    \langle X(t) \rangle_{\rho_{\sys} \otimes
\rho_\referres}  =v(\field)\,,
\end{equation*}
where $v(\field)$ is the `asymptotic velocity' of the particle and is given by
$v(\field)=\langle\nabla\varepsilon,\zeta^{\field,\la}\rangle$. For
$\field\not=0$, we have $v(\field)\not=0$. The dynamics of the particle is
diffusive,
in the sense that the limits
\begin{equation}\label{eq:diffconst}
D^{ij}(\field) \deq 
\lim_{T\to\infty}\frac{1}{T^2}\int_{0}^{\infty}\,\mathrm{d}t\,\mathrm{e}^{
-\frac{t}{T}}\,  \langle (X^{i}(t)-v^i(\field)t)(X^{j}(t)-v^j(\field)t)
\rangle_{\rho_{\sys} \otimes \rho_\referres}
\end{equation}
exist, where the `diffusion tensor' $D(\field)$ is positive-definite, with
$D(\field)=\caO(\lambda^{-2})$, as $\lambda\to 0$.

 \end{theorem}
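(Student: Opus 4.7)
My approach is a spectral analysis of the fiber $(\caZ_{[0,t]})_{\la^2\kappa}$ for small $\kappa$ and small $\la$, using the Laplace-transform formalism sketched around (\ref{eq: preview kinetic limit}). Translation invariance of $\caZ_{[0,t]}$ (Lemma~\ref{lem: thermodynamic dyn}) makes the fibers an analytic family of bounded operators on $\mathrm{L}^2(\tor)$. The analysis of $M^{\kappa,\field}$ in Section~\ref{sec: kinetic} should produce, for $|\kappa|,|\field|$ small, a simple isolated eigenvalue $\mu_0^{\field}(\kappa)$ of $M^{\kappa,\field}$ near zero with
\begin{equation*}
\mu_0^{\field}(\kappa) = \ii\kappa\cdot v_0(\field) - \tfrac{1}{2}\kappa^{T}\tilde D(\field)\kappa + \caO(|\kappa|^3),
\end{equation*}
where $\tilde D(\field)$ is positive-definite. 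The resolvent-perturbation machinery of Section~\ref{sec: analysis of resolvent around zero} then compares the pseudo-resolvent of $(\caZ_{[0,t]})_{\la^2\kappa}$ to the resolvent of $\la^2 M^{\kappa,\field}$ and lifts $\la^2\mu_0^\field(\kappa)$ to an isolated simple eigenvalue $\mu^{\la,\field}(\kappa) = \la^2\mu_0^{\field}(\kappa) + \caO(\la^4)$ of the true fiber, with a rank-one spectral projector $P^{\la,\field}(\kappa)$ and a spectral gap uniform in $\la$ and~$t$.

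\textbf{Drift.} The first moment of $X$ is extracted by differentiating the fiber representation once in $\kappa$ at $\kappa=0$; the spectral decomposition $(\caZ_{[0,t]})_{\la^2\kappa'} = \e{t\mu^{\la,\field}(\kappa')}P^{\la,\field}(\kappa') + R_{t}^{\la,\field}(\kappa')$ with exponentially small remainder $R^{\la,\field}_{t}$ then gives $\tfrac{1}{t}\langle X^i(t)\rangle_{\rho_\sys\otimes\rho_\referres} \to -\ii\la^{-2}\partial_{\kappa^i}\mu^{\la,\field}(0)=:v^i(\field)$. A first-order perturbation computation of $\mu^{\la,\field}$ identifies this quantity with $\langle\nabla\ve,\zeta^{\field,\la}\rangle$. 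Non-vanishing of $v(\field)$ for $\field\neq 0$ reduces, via $\mu^{\la,\field}=\la^2\mu_0^{\field}+\caO(\la^4)$, to the analogous statement at the kinetic level: the drift term $-\field\cdot\nabla_k$ in (\ref{kinlim}) breaks the detailed balance of $r$ and forces $\zeta_0^{\field}$ to carry a nonzero current $\langle\nabla\ve,\zeta_0^{\field}\rangle$; the $\caO(\la^2)$ correction cannot kill this.

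\textbf{Diffusion.} For the diffusion tensor I would apply the analogous second-derivative argument. The Abel-type integral in (\ref{eq:diffconst}) is precisely the value of the Laplace transform at $z=1/T$, so a residue computation around $\mu^{\la,\field}(\kappa)$ localizes the contribution to the rank-one part $P^{\la,\field}(\kappa)$. After subtracting $v^i(\field)t$, the quadratic-in-$t$ contribution to $\langle X^i(t)X^j(t)\rangle$ cancels, and the Abel average extracts the coefficient of the linear-in-$t$ term, yielding schematically
\begin{equation*}
D^{ij}(\field) = -\la^{-4}\partial_{\kappa^i}\partial_{\kappa^j}\mu^{\la,\field}(0) + (\text{contributions of order } 1\text{ from }\partial_\kappa P^{\la,\field}).
\end{equation*}
Inserting $\partial_{\kappa}^2\mu^{\la,\field}(0) = -\la^2\tilde D(\field) + \caO(\la^4)$ then produces $D(\field) = \la^{-2}\tilde D(\field) + \caO(1)$, which accounts for the scaling $D(\field)=\caO(\la^{-2})$. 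Positivity of $D(\field)$ follows at once from the manifest non-negativity of the integrand in (\ref{eq:diffconst}) together with the strict positive-definiteness of $\tilde D(\field)$ established at the kinetic level.

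\textbf{Main obstacle.} The central technical task is the uniform pseudo-resolvent perturbation theory: one must construct $\mu^{\la,\field}(\kappa)$ and $P^{\la,\field}(\kappa)$ with bounds that are uniform as $\la\to 0$ and as $t\to\infty$, even though $(\caZ_{[0,t]})_{\la^2\kappa}$ is not a semigroup and its Laplace transform is only a pseudo-resolvent rather than the resolvent of a single operator. A further subtlety, flagged in the introduction via Lemma~\ref{lem: trivia of delta m}, is that the naive second-order-in-$\la$ approximation $\widetilde M^{\la,\kappa,\field}$ of the fiber generator is \emph{not} equal to $M^{\kappa,\field}$, because the external force $F=\la^2\field$ is of the same order as the self-energy; the discrepancy $\widetilde M^{\la,\kappa,\field} - M^{\kappa,\field}$ must be estimated and absorbed as a perturbation before the spectral scheme above can be applied. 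Once these ingredients are in place, the existence, positivity, and $\la^{-2}$ scaling of $D(\field)$ follow from routine complex-analytic manipulations of the pseudo-resolvent near $z=0$.
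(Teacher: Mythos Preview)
Your overall architecture---fiber-decompose, Laplace-transform, perturb the pseudo-resolvent around the kinetic generator $M^{\kappa,\field}$, and read off drift and diffusion from the Taylor expansion of the isolated eigenvalue near $\kappa=0$---is exactly the paper's. The identifications $v^i(\field)=-\ii\la^{-2}\partial_{\kappa^i}u(0)$ and $D^{ij}(\field)=-\la^{-4}\partial_{\kappa^i}\partial_{\kappa^j}u(0)$, the $\caO(\la^{-2})$ scaling, and the reduction of $v(\field)\neq 0$ and of positive-definiteness of $D$ to the kinetic level all match the paper.

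There is, however, a real gap in your drift argument. You invoke a time-domain decomposition
\[
(\caZ_{[0,t]})_{\la^2\kappa}=\e{t\mu^{\la,\field}(\kappa)}P^{\la,\field}(\kappa)+R_t^{\la,\field}(\kappa)
\]
with \emph{exponentially small} remainder $R_t$. For $\field\neq 0$ the paper cannot prove this, and the obstruction is precisely the $\widetilde M$ vs.\ $M$ discrepancy you flag at the end: $\delta M=\widetilde M-M$ is bounded but \emph{not} norm-small; it is only $\caO(\la^2)$-bounded \emph{relative} to $M$ (Lemma~\ref{lem: trivia of delta m}). Relative boundedness controls the spectrum of $\widetilde M$ only inside a fixed ball $B_r$, not in a whole half-plane $\{\re z>-g\}$ (see the remark following Lemma~\ref{lem: spectrum of tilde m}). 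Hence the Laurent form $(\caR(z))_{\la^2\kappa}=\frac{1}{z-u}P+R(z)$ is established only for $z\in B_{\la^2 r}$ (Equation~\eqref{eq: splitting of R}), and one cannot push the inverse-Laplace contour to the left of the imaginary axis. The exponential asymptotics you assume is proved only for $\field=0$ (Theorem~\ref{thm: equilibrium rte}); this is also why Theorem~\ref{thm: stationary}~$i$ is stated as a Ces\`aro average.

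The paper sidesteps this as follows. For the drift it writes $\langle X(t)\rangle=\langle X\rangle+\int_0^t\langle V(s)\rangle\,\dd s$ with $V=M_{\nabla\ve}$ bounded, so $\tfrac{1}{t}\langle X(t)\rangle$ is itself a Ces\`aro average of $\langle V(s)\rangle$; Theorem~\ref{thm: stationary}~$i$ (proved by a Tauberian theorem using only boundedness of $(\caZ_{[0,t]})_0$ and the \emph{local} pole structure of $(\caR(z))_0$ at $z=0$) then gives $v(\field)=\langle\nabla\ve,\zeta^{\field,\la}\rangle$ directly. For the diffusion tensor, the definition~\eqref{eq:diffconst} is already a Laplace value at $z=1/T$, so the computation stays in the $z$-domain and again needs only the Laurent data in $B_{\la^2 r}$; no global gap or exponential remainder enters. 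This is exactly why~\eqref{eq:diffconst} is stated as an Abel average rather than as $\lim_{t\to\infty}\tfrac{1}{t}\langle(X-vt)^2\rangle$, a limit the paper explicitly says it cannot reach for $\field\neq 0$.
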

Note that the claim about the asymptotic velocity follows formally from
Theorem~\ref{thm: stationary} by defining the velocity operator  as
\begin{equation} \label{def: finite volume velocity}
V^j \deq\ii [ H,X^j]   = \ii [T,X^j]=M_{\nabla^{j} \ve}\,,
\end{equation}
and writing $X(t)=X(0)+\int_0^t \d s V(s)$. Although it is quite easy to make this
reasoning precise, we warn the reader that, at this point, it is
\textit{formal}, because the Heisenberg-picture observables $X(t)$ and $V(t)$ have not been constructed as
operators in the thermodynamic limit. They are formal objects appearing in correlation
functions that are constructed as thermodynamic limits of finite-volume correlation functions.

%---------------------------------------------------
\subsection{Correlation functions and Einstein relation}
In this section, we present some more results on our model, that are proven in~\cite{paper2}.  We begin with introducing correlation functions. Let $O_1,O_2$ be two observables, i.e., $O_1,O_2$ belong to the algebras $\frA$ or $\frX$; see Section~\ref{sec: thermo}.  For $\Lambda=\Lambda_L$, with $L\in2\N$, we set $O_i^{\Lambda}=\lone_{\Lambda}O_i\lone_{\Lambda}$.  Similarly, given an exponentially localized density matrix $\rho_{\sys}\in\scrB_1(\ell^2(\Z^d))$, its finite-volume version $\rho_{\sys}^{\Lambda}\in\scrB_1(\ell^2(\Lambda))$ is defined in~\eqref{finite volume density matrix}. Let $t_1,t_2\in\R$, then we define the (finite-volume) correlation function as
\begin{align}\label{new correlation function 1}
 \langle O_2^{\Lambda}(t_2)O_1^{\Lambda}(t_1)\rangle_{\rho_{\sys}^{\Lambda}\otimes\rho_{\referres}^{\Lambda}}\deq\Tr[O_2^{\Lambda}(t_2)O_1^{\Lambda}(t_1)(\rho_{\sys}^{\Lambda}\otimes\rho_{\referres}^{\Lambda})]\,,
\end{align}
the trace being over the Hilbert space $\ell^2(\Lambda)\otimes\mathrm{L}^2(\bar\Lambda)$. The infinite-volume correlation function is defined as the limit
\begin{align}\label{tdl of correlation function 1}
 \langle O_2(t_2)O_1(t_1)\rangle_{\rho_{\sys}\otimes\rho_{\referres}}\deq\tdl\Tr[O_2^{\Lambda}(t_2)O_1^{\Lambda}(t_1)(\rho_{\sys}^{\Lambda}\otimes\rho_{\referres}^{\Lambda})]\,,
\end{align}
and we claim that this limit exists for any exponentially localized $\rho_{\sys}$ and any $O_1,O_2$ in $\frA$ or $\frX$. We refer to~\cite{paper2} for a proof of this claim.

Apart from an initial state (density matrix) of the product form $\rho_\sys
\otimes \rho_{\res, \refer}$, we also consider the Gibbs state of the coupled
system when the external force field vanishes, $\field=0$. In finite volume, it is defined by
\begin{equation*}
\rho_\be^{\Lambda} \deq \frac{1}{Z_{\be}^{\Lambda}} \e{-\be H^{\field=0}}, \qquad Z_{\be}^{\Lambda}=  \Tr
\e{-\be H^{\field=0}}, \qquad H^{\field=0} =T\otimes\lone+\lone\otimes H_\res +\la H_{\sys\res}\,,
\end{equation*}
and one easily checks that $\rho_\be^{\Lambda} \in \scrB_1(\ell^2(\Lambda)\otimes\mathrm{L}^2(\bar\Lambda))$. We then define, for $O_1, O_2$ as above,
\begin{align}
 \langle O_1^{\Lambda}(t_1)\rangle_{\rho_\refer^{\Lambda}} &\deq\Tr[ O_1^{\Lambda}(t_1)  \rho_{\refer}^{\Lambda}  ]\label{new expectation}\,,\\
\langle O_2^{\Lambda}(t_2) O^{\Lambda}_1(t_1)\rangle_{\rho_\refer^{\Lambda}} &\deq\Tr[ O_2^{\Lambda}(t_2)O_1^{\Lambda}(t_1)  \rho_{\refer}^{\Lambda}  ]\label{new correlation function 2}\,.
\end{align}
One observes that, for $\field=0$, $\langle O_2^{\Lambda}(t_2+t) O^{\Lambda}_1(t_1+t)\rangle_{\rho_\refer^{\Lambda}}=\langle O_2^{\Lambda}(t_2) O^{\Lambda}_1(t_1)\rangle_{\rho_\refer^{\Lambda}}$, for any $t$, i.e., the correlation functions are time-translation invariant. More generally, one checks that, for $\field=0$,  the correlation function~\eqref{new correlation function 2} satisfies the KMS condition. In particular, we have that
\begin{align}\label{kinder kms}
 \langle O_1(t_1)O_2(t_2)\rangle_{\rho_\be^{\Lambda}}=\langle O_2(t_2)O_1(t_1+\ii\beta)\rangle_{\rho_{\beta}^{\Lambda}}\,,\quad\quad (\field=0)\,.
\end{align}
For $O_1,O_2\in \frA_{\mathrm{ti}}$, the infinite-volume versions of~\eqref{new expectation} and~\eqref{new correlation function 2} are well-defined as the limits
\begin{align}\label{new tld limit equilibrium correlation functions}
\langle O_1(t_1)\rangle_{\rho_\refer}\deq\tdl \Tr[ O_1^{\Lambda}(t_1)  \rho_{\refer}^{\Lambda}  ]\,,\quad\quad\langle O_2(t_2) O_1(t_1)\rangle_{\rho_\refer}\deq\tdl\Tr[ O_2^{\Lambda}(t_2)O_1^{\Lambda}(t_1)  \rho_{\refer}^{\Lambda}  ]\,.
\end{align}
Note that we construct the thermodynamic limit of equilibrium correlation functions only for translation-invariant observables, since, pictorially,
the particle is uniformly distributed in space and hence the expectation values of localized observables vanish. For more details we refer to~\cite{paper2}.

Before we discuss the Einstein relation, let us mention that our model relaxes exponentially fast to equilibrium at vanishing external field; cf., Theorem~3.3.\ in~\cite{paper2}: For $O_1,O_2\in\frA_{\mathrm{ti}}$, $t_1,t_2\in\R_+$, there is $g>0$, such that
\begin{align}
 \langle O_2(t_2)O_1(t_1)\rangle_{\rho_{\sys}\otimes\rho_{\referres}}=\langle O_2(t_2)O_1(t_1)\rangle_{\rho_\be}+\caO(\e{-\lambda^2 gt_1})\,,\quad\quad (t_2>t_1)\,,
\end{align}
  as $t_1\to\infty$, for $\lambda$ sufficiently small and $\field=0$. Moreover, the equilibrium correlation functions~\eqref{new correlation function 2} exhibit the following `exponential cluster property':
\begin{align}
 \langle O_2(t_2)O_1(t_1)\rangle_{\rho_\be}=\langle O_2\rangle_{\rho_\be}\langle O_1\rangle_{\rho_\be}+\caO(\e{-\lambda^2 g (t_2-t_1)})\,,
\end{align}
as $t_2-t_1\to\infty$, for $\lambda$ sufficiently small and $\field=0$. Finally, we mention that the equilibrium correlation function~\eqref{new tld limit equilibrium correlation functions} satisfies the KMS condition on the algebra~$\frA_{\mathrm{ti}}$ of translation-invariant observables; cf., Lemma~3.2 in~\cite{paper2}. Of course, there is nothing special about the restriction to correlation functions with one or two observables and one can prove the statements above for any number of observables.

Our next result states that the equilibrium diffusion matrix $D(\field=0)$
(which is in fact a multiple of the identity matrix) is related to the response
of the particle's motion to the field $\field$. The corresponding identity is
known as the `Einstein relation':
\begin{theorem}\emph{[Einstein relation]}\label{thm: einstein}
 Under  the same assumptions as in Theorem~\ref{thm: stationary}, 
\begin{equation}\label{eq: einstein relation}
\frac{\partial}{\partial
\field^i}\bigg|_{\field=0}v^j(\field)={\lambda^2\beta}D^{ij}(\field=0)\,,
\end{equation}
where $D(\field=0)$ is defined in Equation~\eqref{eq:diffconst} and it equals
\begin{align*}
D^{ij}(\field=0)=\frac{1}{2}\int_{\R}\dd t\,\langle
V^i(t)V^j\rangle_{\rho_\beta}\,.
\end{align*}
\end{theorem}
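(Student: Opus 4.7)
The Einstein relation~\eqref{eq: einstein relation} has the familiar Green-Kubo structure: the left side is the mobility, the right side an integrated equilibrium current-current correlator. My plan is to derive, in parallel, a linear-response formula for $\partial_{\field^i}v^j(\field)|_{\field=0}$ and a Green-Kubo representation for $D^{ij}(\field=0)$, each expressed in terms of the equilibrium velocity autocorrelator $\langle V^i(t)V^j\rangle_{\rho_\beta}$, and then to compare them. The key inputs are the ergodic-average characterization of $v^j(\field)$ from Theorem~\ref{thm: diffusion}, the analytic dependence of $\zeta^{\field,\la}$ on $\field$ from Theorem~\ref{thm: stationary}, the KMS condition~\eqref{kinder kms}, the exponential approach to equilibrium of $\rho_\sys\otimes\rho_\referres$ and the equilibrium exponential cluster property at $\field=0$ recalled in Section~\ref{sec: small results}, and time-reversal invariance of $H^{\field=0}$.

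\textbf{Linear response for the mobility.} Since $H^{\field}=H^{\field=0}-\lambda^2\field\cdot X$, Duhamel's principle at finite volume gives
\begin{equation*}
\langle V^j(t)\rangle_{\rho_\sys\otimes\rho_\referres}^{\field} = \langle V^j(t)\rangle_{\rho_\sys\otimes\rho_\referres}^{\field=0} + \lambda^2\field^i\int_0^t\dd s\,\langle \ii[V^j(t),X^i(s)]\rangle_{\rho_\sys\otimes\rho_\referres}^{\field=0} + O(|\field|^2),
\end{equation*}
with remainder uniform in $t$ thanks to the analytic control of $\zeta^{\field,\la}$ supplied by Theorem~\ref{thm: stationary}. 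The zeroth-order term vanishes in the ergodic average by time-reversal. Differentiating in $\field^i$ at $\field=0$, taking the ergodic average in $t$, and replacing $\rho_\sys\otimes\rho_\referres$ by $\rho_\beta$ (justified by exponential approach to equilibrium at $\field=0$) yields
\begin{equation*}
\frac{\partial v^j(\field)}{\partial\field^i}\bigg|_{\field=0} = \lambda^2\lim_{T\to\infty}\frac{1}{T}\int_0^T\dd t\int_0^t\dd s\,\langle\ii[V^j(t),X^i(s)]\rangle_{\rho_\beta}.
\end{equation*}
Using $\partial_s X^i(s)=V^i(s)$ inside correlators, integration by parts in $s$ (the boundary term at $s=0$ vanishing in the ergodic average by time-reversal), and the Kubo identity
\begin{equation*}
\langle\ii[A(t),B]\rangle_{\rho_\beta} = -\int_0^\beta\dd u\,\langle\dot A(t-\ii u)B\rangle_{\rho_\beta},
\end{equation*}
which follows from~\eqref{kinder kms} by deforming the contour, one converts the commutator into velocity autocorrelators. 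Exponential clustering provides absolute convergence and, after a time-reversal symmetrization that folds the time integration onto all of $\R$, one obtains $\partial_{\field^i}v^j|_{\field=0}=\lambda^2\beta\cdot\tfrac{1}{2}\int_\R\langle V^i(t)V^j\rangle_{\rho_\beta}\,\dd t$.

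\textbf{Green-Kubo for $D^{ij}(\field=0)$, and main obstacle.} Starting from~\eqref{eq:diffconst} at $\field=0$, writing $X^j(t)=X^j(0)+\int_0^t V^j(u)\,\dd u$ inside the correlator, and expanding the resulting product of integrals, the stationarity of $\rho_\beta$ together with exponential relaxation of $\rho_\sys\otimes\rho_\referres$ towards it and the Laplace regulator $\e{-t/T}$ (which suppresses the secular quadratic growth) collapses the double integral, as $T\to\infty$, to $D^{ij}(\field=0)=\tfrac{1}{2}\int_\R\langle V^i(t)V^j\rangle_{\rho_\beta}\,\dd t$. Comparing with the previous display yields~\eqref{eq: einstein relation}. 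The main technical obstacle is the unboundedness of $X$: the Heisenberg objects $X^i(s)$ and $V^i(u)$ exist only as formal symbols inside thermodynamic-limit correlators, so every manipulation (Duhamel expansion, integration by parts in time, complex-time shift in the Kubo identity, time-reversal symmetrization) has to be performed in finite volume with bounds uniform in the volume and in $t,T$. The required uniform estimates are supplied by the exponential cluster estimates for correlators of observables in $\frA_{\mathrm{ti}}\cup\frX$ proved in the companion paper~\cite{paper2} together with the analytic pseudo-resolvent perturbation theory developed in Section~\ref{sec: analysis of resolvent around zero} of the present paper.
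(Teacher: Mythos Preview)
Your overall architecture (Duhamel $\to$ KMS/Kubo $\to$ velocity autocorrelator, then compare with Green--Kubo for $D$) is the same as the paper's, but your execution differs in one important way and contains one inaccuracy.

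\textbf{Where the paper is more direct.} You apply Duhamel in the state $\rho_\sys\otimes\rho_\referres$ and then replace this state by $\rho_\beta$ inside the response correlator $\langle\ii[V^j(t),X^i(s)]\rangle$, invoking exponential approach to equilibrium. The paper instead applies Duhamel \emph{directly in} $\rho_\beta$, obtaining
\[
\partial_{\field^i}\big|_{\field=0}\langle V^{j,\field}(t)\rangle_{\rho_\beta}
=-\ii\lambda^2\int_0^t\dd s\,\langle[X^i(-s),V^j]\rangle_{\rho_\beta},
\]
so that stationarity and KMS are available from the start. Two successive contour deformations in complex time (rectangles in the strip $0\le\im z\le\beta$), together with $X(-t)=X+\int_0^{-t}V(s)\,\dd s$ and time-reversal, convert the right side into $\tfrac{\lambda^2\beta}{2}\int_{-t}^t\langle V\,V(s)\rangle_{\rho_\beta}\,\dd s+Q(t)$ with an explicit remainder $Q(t)\to 0$. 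Your detour through $\rho_\sys\otimes\rho_\referres$ is not wrong in principle, but the replacement step you invoke concerns correlators of observables in $\frA_{\mathrm{ti}}$, whereas here $X^i(s)\in\frX$ is unbounded; the cited approach-to-equilibrium statement does not cover that case as written, so extra work would be needed.

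\textbf{An inaccuracy.} You justify the uniform-in-$t$ bound on the $O(|\field|^2)$ remainder by ``analytic control of $\zeta^{\field,\lambda}$''. But $\zeta^{\field,\lambda}$ (and more generally $M^{\field}$, $v(\field)$, $u(\lambda,\kappa,\field)$) is \emph{not} analytic in $\field$; only $C^\infty$ regularity holds, and this is in fact a central technical point (Section~\ref{sec: refined} and the asymptotic---not analytic---perturbation theory in~\cite{paper2}). Smoothness is enough to differentiate $v^j(\field)$ at $\field=0$, but you should not claim analyticity, and the exchange of the $t\to\infty$ limit with $\partial_\field$ rests on this $C^\infty$ control rather than on a convergent Taylor expansion.
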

Note that, by the positivity and isotropy of the diffusion matrix, this theorem
also shows that, for small but non-zero $\field$, $v(\field)$ does not vanish.
The origin of the unfamiliar factor $\la^2$ on the right side of \eqref{eq: einstein
relation} is found in the fact that the driving force field in the Hamiltonian
is $\la^2 \field$, rather than $\field$.

\section{Strategy of proofs and discussion}  \label{sec: strategy and discussion}
Before we are able to present a comprehensible overview of the strategy of the
proofs, we have to introduce some further notions and concepts, such as the
fiber decomposition introduced next.

\subsection{Fiber decomposition}\label{sec: fiber
decomposition}
To start with, we note that $\scrB_{1}(\scrH_\sys)\subset\scrB_2(\scrH_\sys)$,
$\scrH_{\sys}=\ell^2(\Z^d)$. Hence, we may view density matrices on
$\scrH_\sys$ as elements of the space of Hilbert-Schmidt operators,
$\scrB_2(\scrH_\sys)\simeq
\mathrm{L}^2(\bbT^d \times \bbT^d, \d k_{\links}\d k_{\rechts})$. 
We define 
\begin{equation*}
    \widehat O     (k_{\links},k_{\rechts}) \deq\frac{1}{(2\pi)^{d}}
\sum_{x_{\links},x_{\rechts} \in \lat}   
O(x_{\links},x_{\rechts})  \e{- \ii k_{\links} \cdot x_{\links}+\ii k_{\rechts}
\cdot x_{\rechts} }\,, \qquad  O \in    
\scrB_2(\ell^2(\lat))\,.
\end{equation*}
In what follows, we write $O$ for $\widehat O$.
To cope with the translation-invariance of our model, we make the
following
change of variables 
\begin{equation*}
k\deq \frac{k_{\links}+k_{\rechts}}{2}\,, \qquad  p\deq k_{\links}-k_{\rechts}\,,
\end{equation*}
and,   for a.a.\ $p \in \tor$, we obtain a well-defined function $ O_p
\in\mathrm{L}^2(\bbT^d)$ by putting
\begin{equation}\label{eq:2.68}
(O_p)(k) \deq O (k+\frac{p}{2},k-\frac{p}{2})\,.
\end{equation}
This follows from the fact that the Hilbert space  $\scrB_2(\scrH_\sys)
\simeq\mathrm{L}^2(\bbT^d \times \bbT^d, \d k_{\links}\d k_{\rechts})$ can be
represented as a
direct integral
\begin{equation} \label{def: fiber decomposition}
\scrB_2(\scrH_\sys) \simeq \int^\oplus_{ \bbT^d} \d p \,    \scrH_p \,, \qquad  
  O =
 \int^\oplus_{ \bbT^d} \d p \, O_p\,,
\end{equation}
where each `fiber space' $\scrH_p$ can be identified with
$\mathrm{L}^2(\bbT^d)$.
Next, we define, for $\theta=(\theta_{\links},
\theta_{\rechts})\in\C^d \times \C^d$, operators~$\mathcal{J}_{\theta}$ by 
\begin{align} \label{eq:def jkappa}
\mathcal{J}_{\theta}\,O\deq\e{-\ii(\theta_\links,X)}O\,\e{-\ii(\theta_\rechts,
X)}\,,\quad\ O\in\scrB(\mathscr{H}_{\sys})\,.
\end{align}
Note that $\mathcal{J}_{\theta}$ is unbounded  if $\theta$ has an imaginary
part. Also note that a density matrix $\rho_{\sys}\in\scrB_2(\scrH_\sys)$ is
exponentially localized iff $\|\caJ_{\theta}\rho_{\sys}\|_2<\infty$, for
$\theta=(\theta_\links,\theta_\rechts)$ in some complex neighborhood of $(0,0)$.

The following lemma captures some identities used later on. Recall the definition of the strip $\bbV_{\delta}$ in~\eqref{definition of the multi strip}.
\begin{lemma} \label{lemma: fibers}
Let $O \in \scrB_1(\scrH_\sys)$, then 
\begin{equation} \label{eq: trace as integral}
\Tr_{\sys} [O \,\e{\ii p\cdot X }]   =  \langle 1,  O_p   
\rangle_{\mathrm{L}^2(\tor)}=  \int_\tor \d k\, 
O_p(k)\,, \qquad  {\ p\in\tor}\,. 
\end{equation}
If there is a $\delta>0$ such that  $
 \norm \caJ_{\theta/2} O \norm_2  < \infty $, for  $ \str \theta \str  \leq
\delta$, then $p \mapsto  O_p $ is analytic in the interior of the strip
$\bbV_{\delta}  $.
\end{lemma}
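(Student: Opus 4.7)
The plan is to prove the two claims separately, both by direct unwinding of the definitions.

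For the first identity, I would substitute the explicit Fourier-series expression for $\widehat O$ into $O_p(k)=\widehat O(k+p/2,k-p/2)$ and integrate in $k$. The interchange of sum and integral is justified by $O\in\scrB_1(\scrH_\sys)\subset\scrB_2(\scrH_\sys)$, so that $\{O(x,y)\}$ is square-summable. The orthogonality $\int_\tor \e{\ii k\cdot(x_\rechts - x_\links)}\,\dd k=(2\pi)^d\delta_{x_\links,x_\rechts}$ then collapses the double sum onto the diagonal, leaving $\sum_x O(x,x)\e{\ii p\cdot x}$ (up to the sign convention in the Fourier transform), which coincides with $\Tr_\sys[O\,\e{\ii p\cdot X}]$ by writing the trace in the position basis.

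For the analyticity, the key observation is that $O_p$ at complex $p$ can be identified with a fiber of an appropriately twisted operator. A direct computation from the Fourier-transform formulas yields $\widehat{\caJ_\mu O}(k_\links,k_\rechts)=\widehat O(k_\links+\mu_\links,k_\rechts-\mu_\rechts)$, so that choosing $\mu_\links=\mu_\rechts=\ii\gamma/2$ for $\gamma\in\R^d$ gives $(\caJ_\mu O)_\alpha(k)=O_{\alpha+\ii\gamma}(k)$. The hypothesis supplies the Hilbert--Schmidt bound $\norm \caJ_\mu O\norm_2<\infty$ for $|\gamma|<\delta$ (up to factors of $2$ in the norm convention), so that the right-hand side is a bona fide element of $\mathrm{L}^2(\tor)$ for a.e.\ $\alpha$. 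To upgrade this to a pointwise, analytic statement, I would expand $O_p$ in its Fourier series in $k$,
\begin{equation*}
O_p(k)=\sum_{m\in\Z^d}c_m(p)\,\e{\ii k\cdot m},\qquad c_m(p)=\frac{\e{-\ii p\cdot m/2}}{(2\pi)^d}\sum_x O(x,x+m)\,\e{-\ii p\cdot x},
\end{equation*}
noting that each $c_m(p)$ is an absolutely convergent exponential sum, hence entire in $p$.

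A Cauchy--Schwarz argument, in which an auxiliary weight $\e{\epsilon|x|}$ is absorbed by writing $|x|$ as the maximum of the $2d$ linear forms $\pm x^j$ and invoking the hypothesis along each coordinate direction, gives $\sum_m|c_m(p)|^2<\infty$ locally uniformly in $p$ on the interior of $\bbV_\delta$. Consequently, the truncations $\sum_{|m|\leq N}c_m(p)\e{\ii k\cdot m}$ form $\mathrm{L}^2(\tor)$-valued entire functions of $p$ converging locally uniformly to $O_p$, whence the latter is itself analytic. The main technical point will be precisely this uniform estimate, balancing the exponential factor $\e{-\ii p\cdot(x+m/2)}$ against the decay of $O(x,x+m)$ encoded in the hypothesis on $\caJ_{\theta/2}O$; the rest is routine Fourier analysis and direct-integral theory.
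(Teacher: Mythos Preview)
Your proposal is essentially correct and aligned with the paper's reasoning, which is itself very terse: the paper invokes the singular-value decomposition for the first identity and the Paley--Wiener theorem for the second, while you unwind both explicitly.

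One point deserves a small tightening. In the first part you justify the interchange of sum and integral by appealing to $O\in\scrB_2$, i.e., square-summability of the matrix. That only gives $O_p\in\mathrm{L}^2(\tor)$ for \emph{almost every} $p$, and hence the trace identity for a.e.\ $p$. The paper explicitly remarks that for $O\in\scrB_1$ one can define $O_p$ for \emph{every} $p$; the clean way to see this (and to justify the identity pointwise in $p$) is via the SVD $O=\sum_n s_n\,|\phi_n\rangle\langle\psi_n|$, so that $O_p(k)=\sum_n s_n\,\hat\phi_n(k+p/2)\overline{\hat\psi_n(k-p/2)}$ converges in $\mathrm{L}^1(\tor)$ for all $p$ by Cauchy--Schwarz and $\sum s_n<\infty$. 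Your computation then goes through without the a.e.\ caveat.

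For the second part, your hands-on argument (identifying $(\caJ_\mu O)_\alpha$ with $O_{\alpha+\ii\gamma}$, then controlling Fourier coefficients via Cauchy--Schwarz against the exponential weights) is exactly a proof of the relevant Paley--Wiener statement in this setting. The paper simply cites the theorem; your version is more self-contained but not a different route.
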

(In the discussion above, for $O \in \scrB_2(\scrH_{\sys})$ the fiber operator $O_p$ was defined for a.a.\ $p$, but in the context of Lemma~\eqref{lemma: fibers}, $O_p$ can be defined for any $p$.) The first statement of the lemma follows from the singular-value decomposition for trace-class operators and standard properties of the Fourier transform. The second
statement of Lemma~\eqref{lemma: fibers} is the Paley-Wiener theorem, i.e., the relation between exponential
decay of functions and analyticity of their Fourier transforms; see~\cite{reedsimon2}.

The fiber decomposition in Equation~\eqref{def: fiber decomposition} is useful
when one deals with operators $\caA$ acting on $\scrB_2(\scrH_\sys)$ that are
translation invariant (TI), i.e., 
$\caT_z \caA \caT_{-z}= \caA$, with $\caT_z$ defined  as in Section~\ref{sec:
thermo}.   An important example of a TI operator $\caA$ is the
reduced time-evolution $\caZ_{[0,t]}$; see Lemma \ref{lem: thermodynamic dyn}. For TI operators $\caA$, we find that $(\caA O)_p$ depends on $O_p$ only, and
hence it makes sense to write
\begin{equation}   \label{def: fiber decomposition caa}
(\caA O)_p= \caA_p O_p\,, \qquad  \caA = \int_{\tor}^\oplus \dd p\,\caA_p\,.
\end{equation}

Similarly to Lemma~\ref{lemma: fibers} above, we find that, if $\caJ_{\theta/2}
\caA \caJ_{-\theta/2}$ is bounded for all
$\theta=(\theta_\links,\theta_\rechts)$, with $|\theta|\le \delta$, then the map
$p \mapsto \caA_p$ is analytic in a strip $\bbV_\delta$. Or, in other words, the
kernel of the operator~$\caA$, satisfies
\begin{align}\label{definition locally acting superop}
 |\caA(x_\links,x_\rechts, x'_\links,x'_\rechts)|\le C\e{-\nu
|x_\links-x'_\links|-\nu|x_\rechts-x'_\rechts|}\,,\qquad\textrm{for
}\nu<|\theta|/2\,,\qquad (x_\links,x_\rechts, x'_\links, x'_\rechts\in \Z^d)\,.
\end{align}
In particular,~\eqref{definition locally acting superop} means that $\caA$
preserves the subspace of exponentially localized operators in $\scrB_2(\scrH_{\sys})$.
We call such an $\caA$ a quasi-diagonal operator on $\scrB_2(\scrH_{\sys})$.

%%%%%%%%%%%%%%%%
\subsection{Strategy of proofs of main results}\label{sec: strategy and
discussion bis}
\subsubsection{Kinetic theory}
For small values of the coupling  constant $\lambda$, one can, at least
heuristically, understand the model studied in this paper with the help of
semiclassical kinetic theory. The reasoning proceeds as follows:  If $\lambda$
approaches zero one must wait a time of order $\lambda^{-2}$ before one sees an
effect of the particle-reservoir interactions. The effect is that the particle
emits or absorbs a field quantum (i.e., a `photon' or `phonon') of one of the thermal
reservoirs and thus changes its momentum.  Since such emission/absorption
processes are well separated in time, one can assume them to be independent, and
this leads to a description of the particle motion in terms of a stochastic
process. 
 Since the maximal velocity of the particle is bounded (this is an effect of the lattice)
and of order one, despite
the presence of the driving field $\field$,
the particle travels a distance of order~$\lambda^{-2}$ during a time of order
$\lambda^{-2}$. 
This motivates the introduction of the \textit{kinetic scale}: We define `macroscopic'
variables, $(\clasx,\tau)$, by setting $\clasx \deq\lambda^{2}{x}$ and $\tau
\deq\lambda^{2}{t}$, where
 the variables $(x,t)$ are the variables used in the definition of the model, henceforth  
 called `microscopic' variables.
The fact that,  for small enough~$\la$, our model is `well-described' by
kinetic theory can be expressed, impressionistically, as follows:
 \begin{equation*}
 \textrm{Hamiltonian evolution}_{\la} (\la^{-2} \clasx, \la^{-2}\tau)\qquad
\mathop{\longrightarrow}\limits_{\la\to 0}  \qquad  \textrm{Stochastic evolution
}(\clasx, \tau)\,.
 \end{equation*}
 The stochastic evolution appearing on the right side is discussed next. 
\subsubsection{Boltzmann equation}\label{sec:strategy:boltzmannequation}
Consider a classical particle with position $\clasx \in \bbR^d$ and (quasi-)
momentum $k \in \bbT^d$. The momentum $k$ evolves
according to a Poisson process with rate $r(k,k') \d k'$ for a jump from
momentum $k$ to momentum $k'$, where $r(k,k')$ is given by 
\begin{equation}\label{jumprates}
r(k,k')\deq\psi[\varepsilon(k')-\varepsilon(k)]\,,
\end{equation}
and $\psi$ is the spectral density given in Equation~\eqref{definition of psi ohne hut}.
 Between two consecutive jumps, at times $\tau$ and $\tau+\Delta
\tau$, the momentum grows linearly in time $k(\tau+\Delta
\tau)=k(\tau)+\field\Delta \tau$ (where
addition is defined on the torus $\tor$). The change in position is
governed by the
(group-) velocity $\nabla\varepsilon(k)$:
\begin{align*}
\clasx(\tau+\Delta \tau)=\clasx(\tau)+\int_{\tau}^{\tau+\Delta \tau}\dd s\,
\nabla\varepsilon(k(s))\,.
\end{align*}  
From this, a Markov process on $\bbR^d \times \bbT^d$ can be constructed using
standard methods.
We present here the associated Master Equation  describing the time-evolution of
the probability density,
$\nu_\tau(\clasx,k)\ge 0$, on phase space $\bbR^d \times \bbT^d$ (with
normalization $\int \dd \clasx\,\int \dd k\,\nu_\tau(\clasx,k)=1$):
\begin{align} \label{eq: transport}
\frac{\partial}{\partial \tau}   \nu_\tau(\clasx,k)    =  (\nabla
\varepsilon)(k)\cdot\nabla_\clasx
\nu_\tau (\clasx,k)-\field\cdot\nabla_k \nu_\tau(\clasx,k) +      \int_{\tor} \dd
k'\big[r(k',k)\nu_\tau(\clasx,k')-r(k,k')  \nu_\tau(\clasx,k )\big]\,.
\end{align} 
For our purposes, it is convenient to consider the  Fourier transform
\begin{equation*}
\hat{\nu}_{\tau}(\kappa, k)\deq\frac{1}{(2\pi)^{d/2}}\int_{\R^d}\,\dd
\clasx\,\mathrm{e}^{-\ii
(\kappa,\,\clasx)}\,\nu_\tau(\clasx,k)\,,
\end{equation*}
where $\kappa\in\R^d$ is the variable dual to $\clasx\in\R^d$. One verifies that
$\hat{\nu}_{\tau}(\kappa, k)$ satisfies the evolution equation
\begin{equation*}
\frac{\partial}{\partial t}\hat{\nu}_{\tau}=M^{\kappa,\field}\hat{\nu}_{\tau},\
\end{equation*}
where, for smooth functions $g$ on $\tor$,
 \begin{align*}
(M^{\kappa,\field}
g)(k)\deq\ii\kappa\cdot(\nabla\varepsilon)(k)g(k)-\field\cdot\nabla_{k}
g(k)+ \int_{\tor}\,\dd k' r(k',\,k)g(k')-\int_{\tor}\,\dd k'r(k,\,k')g(k)\,.
\end{align*}
One can easily check that $M^{\kappa, \field}$ generates a strongly continuous semigroup on $ \mathrm{L}^2(\tor)$.  Its significance in understanding dynamical properties of our model stems from the fact that it describes the
evolution $\mathcal{Z}_{[0,\lambda^{-2}\tau]}$ in the fiber indexed by
$\lambda^2\kappa$, in the limit $\lambda\to 0$. 

\begin{theorem}\emph{[Kinetic limit]}\label{theoremkineticlimit}
For any $\field$ and arbitrary $\tau \geq 0$, 
\begin{align}\label{eq:04.13}
(\mathcal{Z}_{[0,\lambda^{-2}\tau]})_{\lambda^2\kappa} \qquad
\mathop{\longrightarrow}\limits_{\la \to 0} \qquad \e{\tau M^{\kappa,\field}}\,,
\end{align}
strongly on $\mathrm{L}^2(\tor)$. 
\end{theorem}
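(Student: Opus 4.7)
The plan is a van Hove / Davies-type weak-coupling limit argument based on the Dyson expansion for $\mathcal{Z}_{[0,t]}$ developed in \cite{paper2} and summarized in Section~\ref{section: results from expansions}. Expanding $\e{-\ii t H}(\rho_\sys\otimes\rho_{\res,\be})\e{\ii t H}$ in powers of $\la H_{\mathrm{SR}}$ and tracing out the reservoir, one uses the gauge-invariance and Wick property~\eqref{eq:gaugeinvariance}--\eqref{eq: gaussian property2} to obtain
\begin{equation*}
\mathcal{Z}_{[0,t]}\rho_\sys = \sum_{m\ge 0}\la^{2m}\,\mathcal{Z}^{(2m)}_{[0,t]}\rho_\sys ,
\end{equation*}
where each $\mathcal{Z}^{(2m)}_{[0,t]}$ is a $2m$-fold time-ordered integral summed over pairings $\pi\in\mathrm{Pair}(m)$ of $2m$ interaction vertices. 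Each pair contributes a reservoir two-point function $\hat\psi$ evaluated at the corresponding time difference, together with a constraint forcing the two legs to a common lattice site (coming from the $\lone_x$ in $H_{\mathrm{SR}}$); between consecutive vertices the state evolves under the free-plus-field Liouvillean $\ad(T-\la^2\field\cdot X)$. After the kinetic rescaling $t=\la^{-2}\tau$, the $\la^{2m}$ from the coupling constants is compensated by the $\la^{-2m}$ volume of the time simplex $0<s_1<\dots<s_{2m}<\la^{-2}\tau$, so that every term is of order one as $\la\to 0$.

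I would then split pairings into \emph{ladder} (non-crossing, nested-neighbour) and \emph{non-ladder} (crossing) diagrams. For a ladder diagram, each pair forces its two legs to neighbouring vertices, so the common-site constraint localises the intermediate momentum integration. Working on the fiber at $\la^2\kappa$, the phase produced by the position operator $\e{\ii\la^2\kappa\cdot X}$ reduces in the limit to multiplication by $\ii\kappa\cdot\nabla\varepsilon(k)$, the rescaled driving field $\la^2\field$ in the free evolution produces the generator $-\field\cdot\nabla_k$, and Fourier-transforming $\hat\psi$ in the remaining time difference yields the spectral density $\psi$ of~\eqref{definition of psi ohne hut} evaluated at $\varepsilon(k')-\varepsilon(k)$; this is exactly the jump kernel $r(k,k')$. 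Summing the ladder series in $m$ reproduces the Dyson expansion of the semigroup $\e{\tau M^{\kappa,\field}}$.

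The main obstacle is showing that the non-ladder (crossing) pairings do not contribute in the limit. This relies on the two pillars of the model: the exponential decay of $\hat\psi$ from Assumption~\ref{ass: exponential decay} constrains each paired time difference to a bounded window, while the non-degeneracy of $\nabla\varepsilon$ in Assumption~\ref{ass: analytic dispersion} turns the uncontracted intermediate integrations in a crossing diagram into non-stationary phase integrals that gain an extra factor $\la^2$ compared to the corresponding ladder diagram. Combined with a standard combinatorial bound on the number of pairings, one obtains error estimates uniform in $\tau$ on compact sets and summable in $m$, so the limit $\la\to 0$ may be exchanged with the infinite sum. Strong $\mathrm{L}^2(\tor)$-convergence finally follows by density: pointwise convergence on a dense core of smooth functions, combined with a uniform operator bound on $(\mathcal{Z}_{[0,\la^{-2}\tau]})_{\la^2\kappa}$ inherited from Lemma~\ref{lem: thermodynamic dyn} and the trace-preserving positivity of the reduced dynamics, implies strong convergence on all of $\mathrm{L}^2(\tor)$.
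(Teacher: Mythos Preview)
The paper does not give its own proof of this theorem; it states it as motivation and remarks that the claim is equivalent to results in~\cite{erdos}. The paper's technical route to Theorems~\ref{thm: stationary}--\ref{thm: diffusion} bypasses the fixed-$\tau$ kinetic limit entirely, working instead with the Laplace transform $\caR(z)$ (Sections~\ref{section: results from expansions}--\ref{sec: analysis of resolvent around zero}) and showing that $(\caR(z))_{\la^2\ka}$ has a pole structure controlled by the resolvent of $\la^2 M^{\ka,\field}$.

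Your van Hove/Dyson outline is the standard one and is broadly correct, but the mechanism you give for the suppression of non-ladder diagrams is wrong for this particular model. You claim that the non-degeneracy of $\nabla\ve$ in Assumption~\ref{ass: analytic dispersion} turns the intermediate momentum integrals in a crossing diagram into non-stationary-phase integrals yielding an extra factor $\la^2$. That is the mechanism in random-Schr\"odinger-type kinetic limits, where reservoir correlations do not decay in time and one must exploit dispersion of the free propagator. Here the reservoirs are on-site and independent, and $\hat\psi$ decays exponentially (Assumption~\ref{ass: exponential decay}); no oscillatory estimate is needed, nor is one available, since the relevant time differences in a crossing block are already $O(1)$ rather than $O(\la^{-2})$. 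The correct counting is purely combinatorial: in any irreducible block of $2n\ge4$ vertices, the constraints $\str t_s-t_r\str=O(1)$ for each pair $(r,s)$ chain together (irreducibility means the intervals $[t_r,t_s]$ cover $[t_1,t_{2n}]$) and force \emph{all} $2n$ times into a single $O(1)$ window. This leaves one macroscopic integration variable, so the block contributes $\la^{2n}\cdot O(\la^{-2})=O(\la^{2(n-1)})$, which vanishes for $n\ge2$. This is exactly the counting behind $\norm\caR_{\mathrm{ex}}(z)\norm=\caO(\la^4)$ in Lemma~\ref{lem: pseudoresolvent neu}. Assumption~\ref{ass: analytic dispersion} enters the expansion only through the Combes--Thomas bound~\eqref{eq: propagation bound} used to control the spatial sums $\sum_x$, which is a separate issue from stationary phase and does not produce the extra $\la^2$.
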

The restriction to  fibers of order $\la^2$ is equivalent to considering a
macroscopic length scale $\sim \la^{-2}$. 
One can also
convince oneself that $(\rho)_{\lambda^2\kappa}\in\mathrm{L}^2(\tor)$ (the space
which the operator on the left side of~\eqref{eq:04.13} acts on) is the
rescaled Wigner transform of $\rho\in\scrB_2(\scrH_{\sys})$, and one may check
that 
the claim~\eqref{eq:04.13} is equivalent to the results in~\cite{erdos}.

\subsubsection{Perturbation around the kinetic limit} 
The strategy we follow to control the long-time behavior when~$\la$ is small but non-zero, is
basically the same as in~\cite{deroeckfrohlichpizzo}: We represent the Laplace
transform of the fibered dynamics 
$(\mathcal{Z}_{[0,\lambda^{-2}\tau}])_{\lambda^2\kappa} $ as a small (in $\la$)
perturbation of the resolvent of the Boltzmann generator $M^{\ka,\field}$. This
is accomplished by appropriately resumming diagrams, and this is the tedious part of our analysis, which is described in \cite{paper2}. The ideas underlying this analysis are elementary, and our technique is actually a time-dependent counterpart of the use of
`translation-analyticity in the spectral form factor' first applied to the study
of `confined'  open quantum systems by~\cite{jaksicpillet1}. Later, these confined open
quantum systems, where the particle does not have translational degrees of
freedom, have been treated in greater generality; see~\cite{bachfrohlichreturn,derezinskijaksicreturn,merklicommutators}. 

A complication not present in~\cite{deroeckfrohlichpizzo}, is that we
need to keep track of the dependence of the poles in the Laplace transform on 
$\la,\ka,\field $. However, the Hamiltonian at $\field \neq 0$ is not relatively
bounded w.r.t.\ the one at $\field=0$.  This means that we need to develop some
version of asymptotic (rather than analytic) perturbation theory, and this is done in Section~\ref{sec:
analysis of resolvent around zero}.

\subsubsection{Einstein
relation}\label{section:greenkubo} 
In this subsection, we derive the Einstein relation in finite volume (now
dropping the superscript $\La$, because all formulae of this subsection refer to
a finite volume).
We define the velocity operator  as
\begin{equation} \label{def: finite volume velocity finite}
V^j \deq\ii [ H,X^j]   = \ii [T,X^j]\,.
\end{equation}
Note that, because of the finite lattice, this operator is not
translation-invariant; however, its thermodynamic limit is.
Using Duhamel's principle we obtain
\begin{align}\label{duhamel}
\frac{\partial}{\partial \field^i}\bigg|_{\field=0}\langle
V^{j,\field}(t)\rangle_{\rho_\be}=&
-\ii\lambda^2\int_0^t\,\dd s\,\langle
[X^{i,0}(t-s),V^{j,0}(t)]\rangle_{\rho_\beta}.
\end{align}
For simplicity, we drop the spatial indices $i,j$ in the following. Note that
the right-hand side of \eqref{duhamel}  is independent of $\field$.   By
stationarity of the state  $\rho_{\beta}$, it can be written as
$-\ii\lambda^2\int_0^t\,\dd s\,\langle [X^0(-s),V]\rangle_{\rho_\beta}$. In the
remainder of this section, we always set $\field =0$ and we drop this symbol
from our notation. Using the KMS condition we find
\begin{align*}
\int_0^t\,\dd s\,\langle [X(-s),V]\rangle_{\rho_\beta}&=\int_0^t\dd s\,\langle
XV(s)\rangle_{\rho_\beta}-\int_0^t\dd s\,\langle
XV(\ii\beta+s)\rangle_{\rho_\beta}\nonumber\\
&=\ii\int_0^{\beta}\dd u\,\langle XV(\ii u)\rangle_{\rho_\beta}-
\ii\int_0^{\beta}\dd u\,\langle XV(\ii u+t)\rangle_{\rho_\beta}\nonumber\\
&=\ii\int_0^{\beta}\dd u\,\langle XV(\ii
u)\rangle_{\rho_\beta}-\ii\int_0^{\beta}\dd u\,\langle X(-t)V(\ii
u)\rangle_{\rho_\beta}\,,
\end{align*}
where, in the second line, we have used that the integral of the function
$z\mapsto\langle XV(z)\rangle_{\beta}$ vanishes along the contour
$0,t,t+\ii\beta,\ii\beta, 0$. The third line follows by time-translation
invariance. Next,  using $X(-t)=\int_0^{-t}\dd s\, V(s)+X(0)$, by~\eqref{def:
finite volume velocity}, we get
\begin{align*}
 \int_0^t\,\dd s\,\langle [X(-s),V]\rangle_{\rho_\beta}  &= \int_{0}^{\beta}\dd
u\,\int_0^t\dd s\,\langle V(s)V(\ii u)\rangle_{\rho_\beta} \\ 
 &=  \int_{0}^{\beta}\dd u\,\int_0^t\dd s\, \langle V(-s)V(\ii \beta-\ii
u)\rangle_{\rho_\beta}   \\
  &=   \frac{1}{2}\int_{0}^{\beta}\dd u\,\int_{-t}^t\dd s\,\langle VV(s+\ii
u)\rangle_{\rho_\beta} \\
  &=     \frac{\beta}{2}\int_{-t}^{t}\,\dd s\,\langle
VV(s)\rangle_{\rho_\beta}+Q(t)\,.
\end{align*}
The second and third equality follow from time-reversal invariance and the KMS
condition. To arrive at the last equality, we have used that the integral of the
map $z\mapsto\langle VV(z)\rangle_{\rho_\beta}$ vanishes along the contour
$-t,t,t+\ii \beta$, $-t+\ii\beta, -t$, and we have introduced the remainder term
\begin{align*}
Q(t)\deq\frac{\ii}{2}\int_{0}^{\beta}\,\dd u\,\int_0^u\,\dd s\,\langle VV(\ii
s+t)\rangle_{\rho_\beta}-\frac{\ii}{2}\int_{0}^{\beta}\,\dd%
u\,\int_0^u\,\dd s\,\langle VV(\ii s-t)\rangle_{\rho_\beta}\,.
\end{align*}
Recalling our starting point \eqref{duhamel}, we conclude that
\begin{equation*}
\frac{\partial}{\partial \field}\bigg|_{\field=0}\langle
V^{\field}(t)\rangle_{\rho_\be}=
-\ii   \frac{ \lambda^2 \beta}{2}\int_{-t}^{t}\,\dd s\,\langle
VV(s)\rangle_{\rho_\beta} +Q(t)\,,
\end{equation*}
where the dynamics used on the right side is taken at $\field=0$. We now claim that, in the thermodynamic limit, $Q(t) \to 0$, as $t \to \infty$; we refer to~\cite{paper2} for a proof. This proves
the Einstein relation, which relates a non-equilibrium response (left-hand side)
to an equilibrium correlation function (right-hand side). 
%------------------------------------------
\section{Kinetic limit: Linear Boltzmann evolution}\label{sec: kinetic}
As announced in the previous section, we have to study the operator  $M\equiv
M^{\ka, \field}$ introduced in Subsection~\ref{sec:strategy:boltzmannequation}, in
order to unravel properties of the long-time dynamics of the particle.
This operator is of the form
\begin{equation}\label{eq:4.6}
M^{\kappa,\field}=\ii \kappa \cdot (\nabla\varepsilon)- \field\cdot\nabla+G+L\,,
\end{equation}
with the \emph{gain-} and \emph{loss} terms given by
\begin{align*}
(Gg)(k) \deq \int_{\tor}\,\dd k' r(k',\,k)g(k')\,, \qquad  (Lg)(k) \deq
-\int_{\tor}\,\dd k'r(k,\,k')g(k)\,.
\end{align*}
The operator $M^{\kappa,\field}$ is closable (as an operator defined on a domain
dense in $\mathrm{L}^2(\tor)$), because it is a bounded perturbation of the
anti-selfadjoint operator $\field\cdot\nabla$, a core being, e.g., 
$\mathrm{C}^{\infty}(\tor)$. Our main aim is to control the spectrum of
$M^{\kappa,\field}$ and to establish good bounds on the resolvent
$(z-M^{\kappa,\field})^{-1}$. In our analysis, we will emphasize the usefulness of
the $C_0$-semigoup generated by $M^{\kappa,\field}$.

\subsection{Concepts from the theory of $C_0$-semigroups}\label{sec: concepts
czero}
First, we recall some
definitions and results from the theory of strongly continuous semigroups
(hereafter $C_0$-semigroups). For a detailed discussion we refer
to~\cite{engelnagelshortcourse,hillephilips}.  For definiteness, we assume from
the onset that the semigroup acts on $\mathrm{L}^2(\tor)$.
 We say that
$f\in \mathrm{L}^2(\tor)$ is $positive$ ($f\ge0$), or \textit{strictly positive}
($f>0$) iff 
 $f(k)\ge0$, or $f(k)>0$, respectively, for almost every $k \in \tor$.  

A $C_0$-semigroup $(T_t)_{t\ge0}$  is called
\begin{itemize}
\item[--] \emph{positivity-preserving} (or \emph{positive})  if $0\le f $ 
implies $0\le T_t
f$, for each $t\ge0$;
\item [--] \emph{positivity-improving}  if $0\le f, f\not\equiv0$, 
implies $0<T_tf$,  for each $t>0$.
\end{itemize}
For a $C_0$-semigroup $(T_t)_{t\ge0}$, the {\it
growth bound}, $\omega_0$, is defined by
\[\omega_0 \deq\inf\{\omega\in\R\,:\,\exists K_{\omega}, \textrm{ with } 1\leq K_{\omega}<\infty, 
\textrm{ such that }\|T_t\|\le K_{\omega}\e{\omega t}\,,\forall t\ge0\,\}.\] 
Here and in the following we use the symbol $\|\cdot\|$ for the norm on
$\mathrm{L}^2(\tor)$ $and$ for the operator norm on $\scrB(\mathrm{L}^2(\tor))$.
 A $C_0$-semigroup $(T_t)_{t\ge 0}$ has a closed generator,
$A$, and we use the standard notation $T_t=\e{tA}$.  The {\it spectral bound},~$s(A)$, of $A$ is defined as
\[s(A)\deq\sup\{\mathrm{Re}\,z\,:\,z\in\sigma(A)\}\,,\] 
where $\si(A)$ is the spectrum of $A$. 
For any $z\in\C$ with
$\mathrm{Re}\,z > \omega_0$, $(z-A)^{-1}=\int_{0}^{\infty}\dd
t\,\e{-zt}\,T_t$ exists, and we infer the bound 
$\|(z-A)^{-1}\|\le \frac{K_{\omega_0}}{\mathrm{Re} z-\omega_0}$ 
and the inequality $s(A)\le\omega_0$.

Let $O$ be a densely defined closed operator whose spectrum is not all of
$\bbC$. 
We say
(following~\cite{davieslinearoperators}, see also~\cite{katoperturbation}) that
$z\in\C$ belongs to the {\it essential spectrum}, $\sigma_{\mathrm{ess}}(O)$, of
$O$, iff
 $z\lone-O$ is not a Fredholm operator.    We call $r_{\mathrm{ess}}(O)=
\sup\{\str z \str: z \in \sigma_{\mathrm{ess}}(O) \}$ the \textit{essential
spectral radius}.
 A $C_0$-semigroup $T_t$ is called {\it quasi-compact} if $r_{\mathrm{ess}}(T_t)
<1$, for  $t >0$.

\subsection{Spectral analysis of $M^{0,\field}$: Preliminaries}
For now, we neglect the advection term $\ii\kappa\cdot(\nabla\epsilon)$, i.e., we put $\kappa=0$, and study
$M^{0,\field}=-\field\cdot\nabla+G+L$. We therefore omit the superscript $\ka$
everywhere in this section, and we simply write $M^\field\equiv M^{0,\field}$. 
Our main results are summarized in Lemma~\ref{lem: kinetic}.

We say that a function~$f$ on~$\tor$ is real-analytic
if it is analytic in a region containing a multistrip $\bbV_{\delta}$, for some
$\delta>0$. Starting from Assumptions~\ref{ass: analytic dispersion} and~\ref{ass:
exponential decay}, it is straightforward to verify that function $r(k,\,k')$, as defined in~\eqref{jumprates}, is real-analytic in $k$ and $k'$, for some $\delta>0$ determined by $g_\res$ and
$\delta_{\varepsilon}$. Moreover, $r(k,\,k')$ is strictly positive almost everywhere for real arguments. Thus the functions $r(\cdot,k'), r(k,\cdot)$ can vanish
only in isolated points. Therefore,
\begin{align}\label{eq:apm}
a_0\deq\inf_{k\in\tor}\int\,\dd k' r(k,\,k')  >0 \,. 
\end{align}
Setting $L(k)\deq- \int \d k' r(k,k') $ (the loss operator $L$ being defined as
multiplication by $L(k)$), we have that $L(k)\le -a_0$, for all $k\in\tor$. The
strict positivity of the rates $r(\cdot,\cdot)$ implies that the gain operator
$G$ is positivity improving. Moreover, by the smoothness of $r(\cdot,\cdot)$,
$G$ is a compact operator.

\begin{lemma}\label{lemma1}
The operator $-\field\cdot\nabla+L$ generates a positivity-preserving
$C_0$-semigroup $(S_t)_{t\ge0}$ on $\mathrm{L}^2(\tor)$ with growth bound
$\omega_0 \leq  -a_0$, and
\begin{align}
\|(z+\field\cdot\nabla-L)^{-1}\|&\le|\mathrm{Re}\,z+a_0|^{-1}\,,\quad\textrm{for
}\quad
\mathrm{Re}\,z>-a_0\,.\  \label{bound3}
\end{align}
The operator $M^{0,\field}$ generates a positivity-improving $C_0$-semigroup,
$(T_t)_{t\ge0}$, on $\mathrm{L}^2(\tor)$. It has the constant function $1$ as a
left-eigenvector corresponding to the eigenvalue $1$, i.e.,\
 $ \langle 1,T_tf\rangle=\langle 1,f\rangle$, for all $f \in
\mathrm{L}^2(\tor)$ and all $t\geq 0$.
\end{lemma}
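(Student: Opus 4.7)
My plan is to treat the two semigroups separately: first the one generated by the transport-plus-loss operator $-\field\cdot\nabla+L$, then to add the bounded gain $G$ as a perturbation. The operator $-\field\cdot\nabla$ is skew-adjoint on $\mathrm{L}^2(\tor)$, generating the translation group $(U_tf)(k)=f(k-t\field)$, while $L$ acts as bounded multiplication by $L(k)\le -a_0<0$. Consequently $-\field\cdot\nabla+L$ generates a $C_0$-semigroup which, by the method of characteristics, is given explicitly by
\begin{equation*}
(S_tf)(k)=\e{\int_0^t L(k-u\field)\,\dd u}\,f(k-t\field).
\end{equation*}
The exponential prefactor is strictly positive and bounded by $\e{-a_0 t}$, so together with the unitarity of translation we obtain $\|S_t\|\le\e{-a_0 t}$; hence $\omega_0\le -a_0$, and $S_t$ is manifestly positivity-preserving. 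The claimed resolvent bound then follows by norm convergence of the Laplace transform $(z+\field\cdot\nabla-L)^{-1}=\int_0^\infty\e{-zt}S_t\,\dd t$ for $\re z>-a_0$.

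The gain operator $G$ is bounded (in fact Hilbert--Schmidt, since its kernel $r$ is smooth on the compact $\tor\times\tor$), so $M^{0,\field}=(-\field\cdot\nabla+L)+G$ generates a $C_0$-semigroup $(T_t)$ by bounded perturbation of the generator of $(S_t)$. For positivity-improvement I would invoke the Dyson series
\begin{equation*}
T_t=\sum_{n=0}^\infty\int_{0\le s_n\le\cdots\le s_1\le t}S_{t-s_1}\,G\,S_{s_1-s_2}\,G\cdots G\,S_{s_n}\,\dd s_1\cdots\dd s_n,
\end{equation*}
which converges in operator norm thanks to $\|S_t\|\le\e{-a_0 t}$ and $\|G\|<\infty$. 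Every term is positivity-preserving, so for $f\ge 0$, $f\not\equiv 0$ and $t>0$ it suffices to examine the $n=1$ contribution $\int_0^tS_{t-s}GS_sf\,\dd s$: the explicit form of $S_s$ shows that $S_sf$ is nonzero on a set of positive measure; the strict positivity of $r(k,k')$ then forces $GS_sf>0$ almost everywhere; finally, $S_{t-s}$ preserves strict positivity, since it multiplies by a strictly positive function and translates. Hence $T_tf>0$ a.e.

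To conclude, $\langle 1,T_tf\rangle=\langle 1,f\rangle$ follows by density once one verifies $\langle 1,M^{0,\field}g\rangle=0$ on $\mathrm{C}^\infty(\tor)$: the transport term integrates to zero by periodicity of $g$, and Fubini yields
\begin{equation*}
\int(Gg)(k)\,\dd k=\int g(k')\int r(k',k)\,\dd k\,\dd k'=-\int(Lg)(k)\,\dd k,
\end{equation*}
so gain and loss cancel. The main obstacle is the positivity-improving property: $S_t$ alone merely preserves zero sets, so one must exploit the strict positivity of the collision kernel $r$ via the Dyson expansion to show that a single collision already populates the whole torus. The remaining arguments are routine applications of bounded-perturbation theory for $C_0$-semigroups.
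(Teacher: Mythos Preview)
Your proposal is correct and follows essentially the same approach as the paper: an explicit formula for $S_t$ via the method of characteristics giving the growth bound and resolvent estimate, a Dyson expansion of $T_t$ around $S_t$, positivity-improvement deduced from the $n=1$ term using that $G$ is positivity-improving and $S_t$ preserves strict positivity, and the left-eigenvector property checked on smooth functions by noting that the transport term and $G+L$ separately annihilate $\langle 1,\cdot\rangle$. The only cosmetic difference is that you spell out the $n=1$ argument in slightly more detail than the paper does.
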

\begin{proof} Define the $C_0$-semigroup $(S_t)_{t\ge0}$ by
\begin{align}\label{spectral:05}
(S_tf)(k)\deq f(k-{\field}t)\e{\int_0^t\,\dd s\,L(k+\field(s-t))}\,,\quad
t\ge0\,,\quad f\in\mathrm{L}^2(\tor)\,.
\end{align}
It is easy to check that $-\field\cdot\nabla+L$ is the generator of
$(S_t)_{t\ge0}$, and the growth bound of $(S_t)_{t\ge0}$ is smaller than $
-a_0$.
Since $G$ is bounded, the construction of the semigroup   $(T_t)_{t\ge0}$ is
standard, e.g., by using the norm-convergent Dyson series
\begin{align}\label{eq:10.27}
T_t\deq S_t + \sum_{n=1}^{\infty}\, \mathop{\int}\limits_{0\le t_1<\ldots
t_n\le t} \d t_1 \ldots \d t_n  \, S_{t-t_n}G
S_{t_n-t_{n-1}}G\cdots S_{t_1}\,.
\end{align}
Clearly, we have that $\|T_t\|\le\e{t(-a_0+\|G\|)}$. Observe that the semigroup
$(S_t)_{t\ge0}$ defined in Equation~\eqref{spectral:05} has the property that 
$f>0$ implies $S_tf>0$, for any finite $t\ge0$. Together with the fact that  $G$
is
positivity-improving, this implies that $(T_t)_{t\ge0}$ is positivity-improving,
for any $t>0$. 
One easily checks that, for smooth $f$, 
\begin{align*}
\frac{\dd}{\dd t}\langle 1,f_t \rangle=-{\langle1,\field\cdot\nabla
f_t\rangle}+\langle 1,(G+L)f_t\rangle=0\,, \qquad   f_t \deq T_t f\,,
\end{align*}
(note that both terms vanish separately), and $\langle 1,T_t f\rangle=\langle
1,f\rangle$ holds for arbitrary $f\in\mathrm{L}^2(\tor)$, by a limiting argument. 
\end{proof}
From this lemma, we obtain information on the spectrum of the
 generator $M^{\field}$. 
  \begin{lemma}\label{lem: kinetic}\emph{[Spectrum of $M^{\field}$]}
All statements below hold  for any  $\field\in\R^d$:
\begin{itemize}
\item[$i.$] The essential spectrum of $M^{\field}$ is contained in the region
$\{z\in\C\,:\,\mathrm{Re}\,z\le -a_0\}$. Furthermore, the semigroup generated by
$M^\field$ is quasi-compact. 
\item[$ii.$] Let  $ z \notin \sigma(M^\field)$. If $f $ is real-analytic then
$(z-M^\field)^{-1}f$ is real-analytic, too. 
\item[$iii.$]  
The spectrum of $M^{\field}$ in the region $\{z\in\C\,:\,\re z>-a_0 \}$ consists
of isolated eigenvalues of
finite (algebraic) multiplicity.   The associated eigenvectors  are
real-analytic functions.
\item[$iv.$]  There is a constant $m >0$ (independent of $\field$) such that the region $\{z\in\C\,:\,  \str\im z \str \geq m,\, \re
z > -a_0/2 \}$ does not contain any spectrum.
\item[$v.$]
The only eigenvalue $\mu$ of $M^{\field}$ with $\re \mu \geq 0$ is $\mu=0$, and
it is simple. The spectral projection associated with the eigenvalue $\mu=0$,
$P^{0,\field}\equiv P^{\field}$, is of the form $P^\field=\ket{\zeta^\field}\bra
1$,
where $\zeta^\field$ is a strictly positive function, with $\langle
1,\zeta^\field\rangle=1$.  Moreover, $\sup_\field \norm \zeta^\field \norm
<\infty$. 

\end{itemize}
\end{lemma}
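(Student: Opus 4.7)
My plan is to prove the five assertions by combining Lemma \ref{lemma1} with a Dyson-type comparison of $M^\field$ and $-\field\cdot\nabla+L$, and then Perron-Frobenius / Krein-Rutman ideas for the peripheral spectrum.

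For (i), I would start from the Dyson series \eqref{eq:10.27}, rewriting
\[
T_t - S_t = \int_0^t \d s\, T_{t-s}\, G\, S_s .
\]
Since $G$ is compact and $T_{t-s}, S_s$ are bounded, every integrand is compact, hence so is $T_t-S_t$ by norm-continuity. Weyl's theorem for the semigroup gives $\sigma_{\mathrm{ess}}(T_t)=\sigma_{\mathrm{ess}}(S_t)$, so $r_{\mathrm{ess}}(T_t)\le \|S_t\|\le K e^{-a_0 t}$, yielding quasi-compactness. Passing from $T_t$ to its generator via the spectral inclusion theorem puts $\sigma_{\mathrm{ess}}(M^\field)\subset\{\re z\le -a_0\}$. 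For (ii), I would observe that the kernel $r(k,k')$ is jointly real-analytic (Assumptions A, B), so $G$ maps $\mathrm{L}^2(\tor)$ into functions analytic on some $\bbV_\delta$; and that the resolvent of $-\field\cdot\nabla+L$, written as the Laplace transform $\int_0^\infty e^{-zt} S_t\, \d t$ of the explicit formula \eqref{spectral:05}, manifestly preserves real-analyticity. Then the second resolvent identity
\[
(z-M^\field)^{-1} = (z+\field\cdot\nabla-L)^{-1}\bigl(1+G(z-M^\field)^{-1}\bigr)
\]
shows that, for $f$ real-analytic, $(z-M^\field)^{-1}f$ is real-analytic. Claim (iii) is then a routine consequence: outside the essential spectrum, Fredholm theory produces isolated eigenvalues of finite algebraic multiplicity, and if $M^\field f=\mu f$ with $\re\mu>-a_0$, the identity $f=(\mu+\field\cdot\nabla-L)^{-1}Gf$ combined with (ii) forces $f$ to be real-analytic.

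For (iv), I would again use the Neumann series $(z-M^\field)^{-1}=\bigl(1-R_0^z G\bigr)^{-1}R_0^z$, with $R_0^z\deq(z+\field\cdot\nabla-L)^{-1}$, and show $\|G R_0^z\|<1$ once $|\im z|=|\tau|$ is large enough and $\re z>-a_0/2$. Writing $R_0^z=\int_0^\infty e^{-zt} S_t\,\d t$ and using the explicit form of $GS_t$ (whose kernel $r(k'+\field t,k)\exp\{\int_0^t L(k'+\field s)\d s\}$ is $C^\infty$ in $t$, with derivatives uniformly integrable against the exponentially decaying factor), a single integration by parts in $t$ yields the bound $\|GR_0^z\|\le C/|\tau|$ uniformly in $\field$ and in $\re z\ge -a_0/2$. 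Hence the Neumann series converges for $|\tau|\ge m\deq 2C$.

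The real work is (v). First I want $\sigma(M^\field)\subset\{\re z\le 0\}$: because $T_t$ is positivity-preserving with $\langle 1,T_tf\rangle=\langle 1,f\rangle$, splitting $f=f_+-f_-$ shows $\|T_tf\|_{\mathrm{L}^1}\le\|f\|_{\mathrm{L}^1}$; the dual $T_t^\ast$ on $\mathrm{L}^\infty$ is also a contraction since $T_t^\ast 1\le 1$ (in fact $=1$); Riesz-Thorin then gives $\|T_t\|_{\mathrm{L}^2\to \mathrm{L}^2}\le 1$, hence $s(M^\field)\le 0$. By (i) and (iii), the spectrum in $\{\re z\ge 0\}$ is a finite union of eigenvalues, and since $1$ is a left-eigenvector at eigenvalue $0$, the value $0$ lies in $\sigma(M^\field)$. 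To identify $0$ as the only peripheral eigenvalue and to show it is simple with a strictly positive right-eigenvector, I apply the Perron-Frobenius / Krein-Rutman theorem for irreducible quasi-compact positive semigroups: since $(T_t)$ is positivity-improving, the fixed-point space $\{f: T_tf=f\}$ is one-dimensional and spanned by a strictly positive $\zeta^\field$, normalized by $\langle 1,\zeta^\field\rangle=1$, and the only eigenvalue of $M^\field$ on $i\R$ is $0$. This gives $P^\field=\ket{\zeta^\field}\bra 1$.

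The only remaining point is the uniform bound $\sup_\field\|\zeta^\field\|<\infty$, which is the main obstacle because the advection $\field\cdot\nabla$ is unbounded as $|\field|\to\infty$ and Lemma \ref{lemma1} is the natural source of $\field$-independent estimates. Here I would use the fixed-point equation
\[
\zeta^\field = \bigl(z+\field\cdot\nabla-L\bigr)^{-1}\bigl(z+G\bigr)\zeta^\field,\qquad z>0\text{ small},
\]
combined with \eqref{bound3} to obtain $\|\zeta^\field\|_{\mathrm{L}^2}\le|z+a_0|^{-1}(z+\|G\|)\|\zeta^\field\|_{\mathrm{L}^2}$, which is useless as stated; the right move is to iterate this identity and bootstrap from the $\mathrm{L}^1$-normalization $\int\zeta^\field=1$ (using that $\zeta^\field\ge 0$) to an $\mathrm{L}^2$-bound via the smoothing action of $G$ (whose kernel is bounded). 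More precisely, $\zeta^\field=(\field\cdot\nabla-L)^{-1}G\zeta^\field$ in a suitable limiting sense, and $G$ maps $\mathrm{L}^1$ into $\mathrm{L}^\infty$ with $\|G\zeta^\field\|_\infty\le \|r\|_\infty$, while $\|(\field\cdot\nabla-L)^{-1}\|_{\mathrm{L}^\infty\to \mathrm{L}^\infty}\le a_0^{-1}$ uniformly in $\field$ (this last is just the semigroup bound applied on $\mathrm{L}^\infty$, which follows from the same contraction argument as for $\mathrm{L}^1$, by duality). This produces $\|\zeta^\field\|_\infty\le a_0^{-1}\|r\|_\infty$ uniformly in $\field$, and hence the desired $\mathrm{L}^2$-bound.
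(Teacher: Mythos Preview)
There is a genuine gap in your Riesz--Thorin step for (v). You establish $\|T_t\|_{L^1\to L^1}\le 1$ and $\|T_t^\ast\|_{L^\infty\to L^\infty}\le 1$, but by duality these are the \emph{same} bound: you have only one endpoint. Interpolation would additionally require $\|T_t\|_{L^\infty\to L^\infty}\le 1$, i.e.\ $T_t 1\le 1$, and this is false in general because $(M^\field 1)(k)=\int r(k',k)\,dk'-\int r(k,k')\,dk'$ has no sign unless $r$ is symmetric; the semigroup is typically not an $L^2$-contraction. The paper bypasses this entirely: Perron--Frobenius for quasi-compact positivity-improving semigroups produces a unique dominant eigenvalue $\mu_0$ with strictly positive eigenvector $\zeta^\field$, and then $\langle 1,T_t\zeta^\field\rangle=\langle 1,\zeta^\field\rangle>0$ forces $e^{\mu_0 t}\equiv 1$, hence $\mu_0=0$ is the spectral bound. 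Your argument is easily repaired along these lines, or by noting that any eigenvector with $\re\mu>-a_0$ lies in $L^\infty\subset L^1$ via $f=(\mu+\field\cdot\nabla-L)^{-1}Gf$ and the smoothing of $G$, so that the $L^1$-contractivity already gives $\re\mu\le 0$.

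There is a second issue in (iv). Your integration by parts generates a $t$-derivative of the kernel containing $\field\cdot\nabla_{k'}r(k'+\field t,k)$, which carries an explicit factor $|\field|$; the resulting bound is $\|GR_0^z\|\le C(1+|\field|)/|\tau|$, so the constant $m$ you obtain depends on $\field$. The paper takes a different route---bounding $|\langle f,\ii\field\cdot\nabla f\rangle|$ for an eigenvector $f$ via the analyticity estimate \eqref{eq:10.34}---but as written that argument also delivers an $m$ growing with $|\field|$, so the $\field$-independence claimed in the statement is stronger than what either proof establishes. This does not affect the rest of the paper, which only uses small $\field$. The remaining items (i), (iii), and the uniform bound on $\|\zeta^\field\|$ in (v) are correct and essentially match the paper; for (ii) your second-resolvent-identity argument is a valid alternative to the paper's conjugation by $e^{\gamma\cdot\nabla}$, with the minor caveat that it needs $(z+\field\cdot\nabla-L)^{-1}$ to exist, hence only covers $\re z>-a_0$.
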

\begin{proof}
$i.$ By Lemma~\ref{lemma1}, we know that $s(-\field\cdot\nabla+L) =-a_0$. 
 Since $G$ is compact, {\it Weyl's theorem} on the stability
of the essential spectrum implies
$\sigma_{\mathrm{ess}}(-\field\cdot\nabla+L)=\sigma_{\mathrm{ess}}
(-\field\cdot\nabla+L+G)$.  Moreover, compact perturbations of generators with
strictly negative growth bound generate quasi-compact semigroups; see
\cite{engelnagelshortcourse}. 

$ii.$ By the analyticity of the function $r(k,k')$ in a strip, we have that   
\begin{equation} \label{eq: analyticity of mf}
\norm \e{\ga\cdot\nabla } M^\field \e{-\ga\cdot\nabla } -M^\field \norm \leq
O(\ga)\,, 
\end{equation}
for sufficiently small $\ga \in \bbC^d$.
Hence, by standard perturbation theory, $\e{\ga\cdot\nabla }(z- M^\field)^{-1}
\e{-\ga\cdot\nabla } $ remains bounded for sufficiently small $\ga$ (depending
on $z$), for any $z \notin \sigma (M^\field)$, and hence
$$
\norm \e{\ga\cdot\nabla }(z- M^\field)^{-1} f \norm \leq  C(z, \ga)  \norm
\e{\ga\cdot\nabla }f \norm\,.
$$ 
The claim on analyticity follows then from the Paley-Wiener theorem.

$iii.$ For any $z\in\C$ with $\mathrm{Re}\,z>-a_0$, we write 
\begin{align*}
z-(-\field\cdot\nabla+L+G)=(z-(-\field\cdot\nabla+L))(\lone-\frac{1}{
z+\field\cdot\nabla-L}G)\,.
\end{align*}
It follows that  $z-(-\field\cdot\nabla+L+G)$ is invertible if and only if
$(\lone-\frac{1}{z+\field\cdot\nabla-L}G)$ is. Since $G$ is
compact, the analytic Fredholm theorem implies that
$(\lone-\frac{1}{z+\field\cdot\nabla-L}G)^{-1}$ is a meromorphic function with
only finitely
or countably many isolated poles of finite (algebraic and geometric)
multiplicity, the
residues of which are finite rank operators. It follows that the spectrum of
$M^{\field}$ in the region $\{z\in\C\,:\, \mathrm{Re}\,z>-a_0 \}$ consists of
isolated eigenvalues of finite multiplicity. Let $M^{\field}f=\mu f$,
$f\not\equiv 0$, with $\re\mu > -a_0$. Since
$\mu\not\in\sigma(-\field\cdot\nabla+L)$, we can rewrite this eigenvalue
equation as
\begin{align}\label{eq: def rewriting eigenvector}
 f=(\mu+\field\cdot\nabla-L)^{-1}{Gf}\,.
\end{align}
 Consequently, for sufficiently small $\ga\in\C^d$, 
\begin{align}\label{eq:10.34}
 \norm \e{\ga\cdot\nabla }  f \norm  &\leq \norm (\mu+\field\cdot\nabla- 
\e{\ga\cdot\nabla } L \e{-\ga\cdot\nabla} )^{-1} \norm  \, \norm
\e{\ga\cdot\nabla } G \norm  \,   \norm f \norm  \nonumber\\
 & \leq  \frac{C}{\str \re \mu- a_0 \str+ \caO(\ga) }  \,    \norm f \norm\,.
\end{align}
Indeed, the  bound on the resolvent follows by Neumann series expansion, using~\eqref{eq: analyticity of mf}, with $M^\field$ replaced by~$L$, and the
resolvent bound on $(z+\field\cdot \nabla -L)^{-1}$ from Lemma~\ref{lemma1},
whereas the bound $\norm \e{\ga\cdot\nabla } G \norm <\infty$ follows from the
analyticity of the kernel $r(\cdot, \cdot)$ of $G$.   By the Paley-Wiener
theorem, $f$ is real-analytic.

$iv.$
   Let 
$M^{\field}f=\mu f$, with $\norm f \norm=1$.  Then, on one hand, 
 \begin{equation} \label{eq:eigenvalue with large f}
 \str \im \mu-  \langle f, \ii \field\cdot \nabla   f \rangle  \str \leq   \str
\re \mu \str+ \norm (G+L)f \norm\,.
 \end{equation}
 On the other hand, by the functional calculus,
 \begin{equation*}
 \str \langle f, \ii \field\cdot \nabla   f \rangle \str \leq  \frac{1}{\nu} 
\langle f, \e{  \nu\str \field \cdot \nabla \str }  f \rangle\,,  \qquad
\nu>0\,.
 \end{equation*}
Since $\re \mu \geq -a_0/2$,  the right-hand side of this equation can, for
sufficient small $\nu$, be bounded independently of $\im \mu$. This follows from
statement $iii$ and Equation~\eqref{eq:10.34}.  Combining this $\im
\mu$-independent bound with \eqref{eq:eigenvalue with large f} yields the claim
$iv$. 

$v.$ The claim that there is a unique simple eigenvalue with maximal real part
(and strictly positive eigenvector~$\zeta^\field$) follows from a
Perron-Frobenius-type argument; see, e.g., Chapter 6, Thm 3.5, in
\cite{engelnagelshortcourse}. This theorem uses the quasi-compactness and the
fact that, for sufficiently large real $z$, $(z-M^{0,\field})^{-1}$ is positivity-improving, which in our case
follows from the fact that $T_t$ is positivity-improving.   The claim that this
eigenvalue is zero, follows then immediately from the relation $\langle 1, T_t
\zeta^\field \rangle=\langle 1, \zeta^\field \rangle $ and the spectral mapping
theorem for generators of quasi-compact semigroups; see, e.g., Chapter~5,
Theorem~4.7 in~\cite{engelnagelshortcourse}. Finally, the uniformity in~$\field$
of the bound on $\Vert\zeta^{\field}\Vert$ follows from~\eqref{eq:10.34},
applied to $f= \zeta^{\field}$, since 
\begin{equation*}
\norm \zeta^\field \norm \leq  \norm (-\field\cdot\nabla +L)^{-1} \norm \norm{G
\zeta^\field} \norm \leq (1/a_0)   \sup_{k'}\|r(\,\cdot\,,k')\|\norm
\zeta^\field \norm_1\,,  
\end{equation*}
where $\norm \zeta^\field \norm_1\deq \int \d k \str \zeta^\field\str= \langle
1, \zeta^\field\rangle=1 $, and we have used the explicit form of $G$. 

\end{proof}

\subsection{Refined spectral analysis of $M^\field=M^{0,\field}$}  \label{sec:
refined}
In this subsection, we refine the conclusions of Lemma~\ref{lem: kinetic}.  The
main difficulty we face is that the operator $M^{\field}$ is not analytic (in
any reasonable sense) in the parameter~$\field$, and hence, a priori,
perturbation theory does not apply to the isolated eigenvalue at $0$ and the
corresponding eigenvector. This difficulty is overcome in the next  lemma that
shows that the resolvent  and the spectrum of $M^\field$ can be controlled in terms
of $M^0$, for $\field$ sufficiently small.  Note that this would be obvious if
$M^{\field}$ were analytic in $\field$.   Afterwards, we also comment on the
case of large $\field$. 

\begin{lemma}\label{lem: resolventbound} 
There are constants $C\,,C'$ such that
\begin{align}\label{eq:5.47}
\bigg\|\frac{1}{z-M^{\field}}\bigg\|\le\frac{C}{|\mathrm{Re}\,z+g\kin
+\caO(\field)|}+\frac{C'}{|z|}\,,\quad\textrm{ for } \quad\mathrm{Re}\,z> -g\kin
+\caO(\field)\,,
\end{align}
as $\field\rightarrow 0 $, 
where $g\kin \deq \mathrm{dist} (\ii\R,\si(M^0) \setminus \{ 0\}) $.  
\end{lemma}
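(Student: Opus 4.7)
The plan is to isolate the contribution of the simple eigenvalue at $z=0$ and then show that the rest of the spectrum of $M^\field$ remains within $\caO(|\field|)$ of its position at $\field=0$, despite the fact that $M^\field$ is not analytic in $\field$ (indeed $-\field\cdot\nabla$ is not relatively bounded with respect to $M^0=G+L$). This non-analyticity is precisely what makes the lemma nontrivial; we bypass it by exploiting the smoothing property of the gain operator $G$.

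By Lemma~\ref{lem: kinetic}(v), $P^\field \deq \ket{\zeta^\field}\bra{1}$ is the spectral projection of $M^\field$ at the simple isolated eigenvalue $0$, with $\sup_\field \| P^\field \| < \infty$. Since $P^\field$ commutes with $M^\field$ and $M^\field P^\field = 0$, we split
\begin{equation*}
(z - M^\field)^{-1} = \frac{1}{z} P^\field + (z - M^\field)^{-1}(1 - P^\field),
\end{equation*}
which yields the $C'/|z|$ contribution. By Lemma~\ref{lem: kinetic}(i), $M^\field$ generates a quasi-compact semigroup, so on $\Ran(1 - P^\field)$ its growth bound coincides with its spectral bound; combined with Lemma~\ref{lem: kinetic}(iv), which confines the relevant spectrum to $\{|\im z| \leq m\}$ uniformly in $\field$, the task reduces to showing that for small $\field$ the finitely many isolated eigenvalues of $M^\field$ in the compact box $\{-a_0/2 \leq \re z,\, |\im z| \leq m\}\setminus\{0\}$ shift by at most $\caO(|\field|)$ from their positions at $\field=0$. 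The desired bound on the reduced subspace then follows from the standard inequality $\| (z - A)^{-1} \| \leq K / |\re z - s(A)|$ for generators of quasi-compact semigroups.

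The key technical step is the norm estimate
\begin{equation*}
\| GR^\field(z) - GR^0(z) \| \leq C\,|\field|, \qquad R^\field(z) \deq (z + \field \cdot \nabla - L)^{-1},
\end{equation*}
uniform for $z$ in compact subsets of $\{\re z > -a_0/2\}$. Using the semigroup representation $R^\field(z) = \int_0^\infty \e{-zt} S_t^\field \, \dd t$ from Equation~\eqref{spectral:05}, the difference becomes $\int_0^\infty \e{-zt} G(S_t^\field - S_t^0)\,\dd t$. A change of variable inside the kernel of $G(S_t^\field - S_t^0)$, combined with the analyticity of $r(\cdot,\cdot)$ and $L$ (Assumptions~\ref{ass: analytic dispersion} and~\ref{ass: exponential decay}), gives $\|G(S_t^\field - S_t^0)\| \leq C |\field| (t + t^2) \e{-a_0 t}$ for $|\field|$ small, and integrating against $\e{-\re z\,t}$ yields the claim. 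Plugging this into the factorization $(z - M^\field)^{-1} = R^\field(z)(1 - GR^\field(z))^{-1}$ and running the analytic-Fredholm argument from the proof of Lemma~\ref{lem: kinetic}(iii) then produces the desired $\caO(|\field|)$ continuity of the isolated eigenvalues.

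The main obstacle is the norm estimate above. The naive identity $R^\field - R^0 = -R^\field (\field \cdot \nabla) R^0$ is useless because $\field \cdot \nabla\,R^0$ is unbounded on $\mathrm{L}^2(\tor)$; only after composing with $G$ on the left, which converts the action of the derivative into a bounded operation via the analytic regularity of the kernel of $G$, can one extract the quantitative rate $|\field|$. This is where the analyticity hypotheses on $\varepsilon$ and $\phi$ enter the argument in an essential way.
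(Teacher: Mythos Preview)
Your approach is genuinely different from the paper's, and the key technical estimate $\|GR^\field(z)-GR^0(z)\|\le C|\field|$ (via the semigroup representation and the change of variable inside the kernel of $G$) is correct and elegant. It does give $\caO(|\field|)$-continuity of the isolated eigenvalues in the box $\{-a_0/2\le \re z,\ |\im z|\le m\}$, as you claim. However, the final step --- passing from spectral information to the resolvent bound via ``the standard inequality $\|(z-A)^{-1}\|\le K/|\re z-s(A)|$ for generators of quasi-compact semigroups'' --- is a genuine gap. No such inequality exists with a universal $K$: quasi-compactness only guarantees $\omega_0=s(A)$, so for each $\omega>s(A)$ one has $\|(z-A)^{-1}\|\le K_\omega/(\re z-\omega)$ with $K_\omega$ the semigroup constant, and $K_\omega$ may blow up as $\omega\downarrow s(A)$ and, more to the point, may depend on $\field$. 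Knowing where the spectrum sits does not, for a non-normal operator, control the size of the resolvent; the pseudospectrum can be much larger than the spectrum.

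The paper avoids this issue entirely by exploiting the detailed-balance identity $r(k,k')=r(k',k)\e{-\beta(\varepsilon(k')-\varepsilon(k))}$. Conjugating by $\e{\beta\varepsilon/2}$ turns $G+L$ into a \emph{self-adjoint} operator and leaves $-\field\cdot\nabla$ anti-self-adjoint, so on $\Ran\bar P^0_\varepsilon$ one writes $z-M^\field_\varepsilon=R+\ii I$ with $R,I$ self-adjoint and $R\ge \re z+g_M+\caO(\field)$ (the $\caO(\field)$ coming from the \emph{bounded} multiplication operator $\tfrac{\beta}{2}\field\cdot\nabla\varepsilon$). Then $\|(R+\ii I)^{-1}\|\le\|R^{-1}\|$ gives the sharp bound $C/|\re z+g_M+\caO(\field)|$ directly. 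A second ingredient you are missing is the algebraic relation $\bar P^\field(z-M^\field)^{-1}\bar P^\field=\bar P^\field(\bar P^0(z-M^\field)\bar P^0)^{-1}\bar P^\field$, which follows from the fact that the \emph{left} eigenvector of $M^\field$ at $0$ is $1$ for every $\field$ (so $P^\field P^{\field'}=P^\field$); this lets one work on the $\field$-independent range $\Ran\bar P^0$, which your factorization $(z-M^\field)^{-1}=R^\field(z)(1-GR^\field(z))^{-1}$ does not. Your estimate on $GR^\field-GR^0$ could perhaps be combined with the $\field=0$ self-adjoint bound to complete the argument, but as written the proof does not close.
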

In the proof of this lemma we make use of the transformation
$B_{\varepsilon}\deq\e{\frac{\beta}{2}\varepsilon} B
\e{-\frac{\beta}{2}\varepsilon} $, for any operator~$B$ acting on~$\mathrm{L}^2(\tor)$. Here $\varepsilon$ is the dispersion law of the particle. Since $\varepsilon$ is positive and
bounded, it immediately follows that the spectra of $B_{\varepsilon}$ and $B$
coincide. In particular, we will consider
\begin{equation}\label{eq:10.48}
M_{\varepsilon}^{\field}=G_\varepsilon
+L+\frac{\beta}{2}
(\field\cdot\nabla\varepsilon)-\field\cdot\nabla\,.
\end{equation}
This transformation is useful, because the  rates satisfy the identity
\begin{align}\label{detailedbalance}
r(k,k')=r(k',k)\,\e{-\beta(\varepsilon(k')-\varepsilon(k))}\,,
\end{align}
where $\beta$ is the inverse temperature of the reservoirs. This identity is
known as the
\emph{detailed balance} condition.  It is a consequence of the
KMS condition for the reservoirs and the time-reversal symmetry; (recall
the discussion following Assumption~\ref{ass: exponential decay} in
Section~\ref{assume}).
Equation~\eqref{detailedbalance} implies that
${G}_{\varepsilon}$ and
${G}_{\varepsilon}+L+\frac{\beta}{2}(\field\cdot\nabla\varepsilon)$ are
 selfadjoint. But note that $-\field\cdot\nabla$ is anti-selfadjoint. By
$P^\field$ we denote the spectral projection associated to the eigenvalue
$\mu=0$ of $M^\field$, and we set $\bar{P}^\field\deq\lone-P^\field$. For
$\field=0$, we infer from the detailed balance condition that
$P^0=\ket{\zeta^0}\bra 1$, with $\zeta^{0}$ the `Gibbs state'
$\zeta^0(k)=\frac{1}{\langle
1,\e{-\beta\varepsilon}\rangle}\e{-\beta\varepsilon(k)}$. Finally, we note that
$P^0_{\varepsilon}$ and $\bar{P}^0_{\varepsilon}$ are orthogonal projections.

\begin{proof}[Proof of Lemma~\ref{lem: resolventbound}]

We split
\[(z-M^{\field})^{-1}=P^\field(z-M^{\field})^{-1}P^{\field}+\bar
P^\field(z-M^{\field})^{-1}\bar P^{\field}\,,\]
and we remark that the first term is bounded by $\frac{C}{|z|} $, since the
eigenvalue $\mu=0$ is simple and $\norm \zeta^\field \norm$ is bounded uniformly
in $\field$, by Lemma~\ref{lem: kinetic}, $v$. To deal with the second term, 
note that the left-eigenvector at $\mu=0$ does not depend on $\field$, i.e.,
$P^{\field}=\ket{\zeta^{\field}}\bra 1$, which implies
\begin{align}\label{relationP}
P^\field P^{\field'}=P^{\field}\,,\quad\bar P^\field\bar P^{\field'}=\bar
P^{\field'}\,,\quad P^\field\bar P^{\field'}=0\,,
\end{align}
 (however, $\bar P^\field P^{\field'}\not=0$, for $\field\not=\field'$). In
particular, we have that
 \begin{equation*}
 M^{\field}=   \bar P^0 M^{\field} \bar P^0   +\bar P^0 M^{\field}P^0\,,
 \end{equation*}
 and,  by straightforward algebra,
 \begin{equation*}
 (z-M^{\field})^{-1}  =   {z}^{-1}P^0+\bar P^0(z\bar P^0-\bar P^0 M^{\field} 
\bar P^0 )^{-1}\bar P^0+\bar P^0(z \bar P^0-  \bar P^0 M^{\field}  \bar P^0
)^{-1}      \bar P^0 M^{\field}P^0.    
 \end{equation*}
Using \eqref{relationP}, this leads to
\begin{equation*}
\bar P^\field (z-M^{\field})^{-1}\bar P^\field=\bar P^\field (\bar P^0(z- 
M^{\field})  \bar P^0 )^{-1} \bar P^\field\,.
\end{equation*}
We recall the conjugation $B\to B_{\ve}$, introduced above, and write
\begin{align*}
( \bar P^0_\varepsilon(z-M^{\field}_\varepsilon) \bar P^0_\varepsilon)^{-1} = 
(R+\ii I)^{-1} \,,
\end{align*}
where $R$ is defined by $R \deq \bar P^0_\varepsilon ( \re z -(G+L)_\varepsilon
+\frac{\beta}{2}
(\field\cdot\nabla\varepsilon)  )\bar P^0_\varepsilon $ and $I \deq  \bar
P_\varepsilon^0 ( \im z +\ii\field \cdot \nabla  )\bar P_\varepsilon^0$. Note
that~$R$ and~$I$ are selfadjoint operators. By the spectral calculus and the
boundedness of $\field\cdot\nabla\varepsilon$,
\begin{align*}
R\ge \re z-g\kin+\caO(\field)\,.
\end{align*}
Hence, $R>0$ (strictly positive in the sense that $\inf\sigma(R)>0$), for $\re
z>g\kin+\caO(\field)$, and we find that
\begin{align*}
 (R+\ii I)^{-1}= \frac{1}{\sqrt{R}} \left(1 +\ii\frac{1}{\sqrt{R}} 
I\frac{1}{\sqrt{R}} \right)^{-1}   \frac{1}{\sqrt{R}}\,.
\end{align*}
Using the selfadjointness of $I$, the boundedness of $\e{\beta\varepsilon/2}$
and $P^\field$ (uniform in $\field$), we obtain the bound
\begin{equation*}
\left\norm  (z-M^{\field})^{-1}\bar P^\field \right\norm \leq  C \left\str g\kin
+ \caO(\field)+ \re z\right\str^{-1}, \qquad \re z >  -g\kin+ \caO(\field)\,.
\end{equation*} 
\end{proof}

As promised, we now turn to a discussion of the model for large $\field$. For simplicity, we restrict
our attention to the one-dimensional case, $d=1$, and comment on higher
dimensions at the end of this discussion. 

\begin{lemma}\label{lem: perturbation large field}
 Let $d=1$. There are constants $C,C'$ such that,  for sufficiently large
$\str\field \str $,
  \begin{equation} \label{eq: resolvent bound large fields}
\left \norm \frac{1}{z-M^\field} - \frac{1}{z} Q  \right\norm \leq \frac{C}{\str
\field \str}\,,
 \end{equation}
 for  $z \in \bbC$, with $\str\im z\str \leq C' $. 
 Here $Q$ is the orthogonal projection on the space of constant functions
on~$\bbT$. 
\end{lemma}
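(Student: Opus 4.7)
The plan is to exploit a block decomposition of $M^{\field}$ with respect to the orthogonal projection $Q$ onto constant functions and its complement $\bar Q\deq\lone-Q$. In $d=1$, the key structural fact is that the skew-adjoint operator $\bar Q\partial_k\bar Q$ on $\bar Q\mathrm{L}^2(\bbT)$ has purely imaginary spectrum $\{\ii n\,:\,n\in\bbZ\setminus\{0\}\}$, with a \emph{gap of size one} around $0$; multiplying by $\field$ enlarges this gap to $|\field|$, so the corresponding resolvent is $\caO(|\field|^{-1})$ in the strip $|\im z|\le C'$. The bounded gain-plus-loss operator $V\deq G+L$ will then be absorbed by a Neumann series.

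First I would establish the identity $QV=0$: by Lemma~\ref{lemma1} one has $\langle 1,T_t f\rangle=\langle 1,f\rangle$, and differentiating at $t=0$ gives $\langle 1,M^{\field}f\rangle=0$ on a core. Since $\langle 1,\partial_k f\rangle=0$ by periodicity, it follows that $\langle 1,Vf\rangle=0$, i.e.\ $QV=0$, so $V$ maps every function into $\bar Q\mathrm{L}^2(\bbT)$. Combined with $\partial_k Q=Q\partial_k=0$, this puts $z-M^{\field}$ into strict lower-triangular form
\begin{equation*}
z-M^{\field}=\begin{pmatrix} zQ & 0 \\ -\bar Q V Q & D \end{pmatrix},\qquad
D\deq z\bar Q+\field\,\bar Q\partial_k\bar Q-\bar Q V\bar Q.
\end{equation*}

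The main step is to estimate $\|D^{-1}\|$. Expanding in the Fourier basis $\{\e{\ii nk}\}_{n\neq 0}$ of $\bar Q\mathrm{L}^2(\bbT)$ diagonalises $\bar Q\partial_k\bar Q$ with eigenvalues $\{\ii n\}_{n\neq 0}$, and hence
\begin{equation*}
\bigl\|\bigl(z\bar Q+\field\,\bar Q\partial_k\bar Q\bigr)^{-1}\bar Q\bigr\|=\sup_{n\neq 0}\frac{1}{|z+\ii\field n|}\le \frac{1}{|\field|-C'},\qquad |\im z|\le C'.
\end{equation*}
A Neumann expansion treating $-\bar Q V\bar Q$ (bounded by $\|V\|$) as a perturbation then yields $\|D^{-1}\|\le C/|\field|$ once $|\field|$ is large enough. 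Inverting the lower-triangular block matrix gives the explicit formula
\begin{equation*}
(z-M^{\field})^{-1}=z^{-1}Q+z^{-1}D^{-1}VQ+D^{-1}\bar Q,
\end{equation*}
and therefore $\|(z-M^{\field})^{-1}-z^{-1}Q\|\le |z|^{-1}\|V\|\,\|D^{-1}\|+\|D^{-1}\|=\caO(|\field|^{-1})$, uniformly on the relevant region (with $|z|$ bounded below, an implicit restriction needed to keep the $|z|^{-1}$ factor harmless).

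The main obstacle, and the reason for the restriction to $d=1$, is precisely the uniform-in-$\field$ estimate on $(z+\field\,\bar Q\partial_k\bar Q)^{-1}$. In one dimension this rests on the elementary spectral gap $\inf_{n\neq 0}|n|=1$ of $\partial_k$ on mean-zero functions. In higher dimensions the corresponding spectrum $\{\ii\field\cdot n\,:\,n\in\bbZ^d\setminus\{0\}\}$ accumulates near $0$ for typical $\field$ (small-denominator phenomenon), so the analogous bound cannot be made uniform as $\field\to\infty$; this is why only $d=1$ is treated here.
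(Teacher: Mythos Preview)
Your proof is correct and follows essentially the same approach as the paper: block-decompose $M^{\field}$ with respect to $Q$ and $\bar Q$, use $Q(G+L)=0$ to obtain a triangular structure, exploit the spectral gap $|\field|$ of $-\field\,\partial_k$ on $\bar Q\mathrm{L}^2(\bbT)$, and absorb the bounded collision part by a Neumann series. Your parenthetical remark that $|z|$ must be bounded away from $0$ is a valid observation---the paper's statement is slightly loose on this point, and your explicit block-inverse formula makes the $z^{-1}D^{-1}VQ$ term (hence the need for such a restriction) transparent.
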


\begin{proof}
In $d=1$, the spectrum of the self-adjoint operator $\field\cdot X$ is the
lattice $\field \,\bbZ$, and each of the eigenvalues corresponds to a
one-dimensional eigenspace. The eigenspace corresponding to $0$ is the space of
constant functions. Hence, we may write
\begin{equation*}
M^{\field} =     Q M^0 Q   +  \bar Q  (\field\cdot X)   \bar Q   +    W\,,
\qquad  \textrm{with} \qquad  W =     Q M^0  \bar Q  +    \bar Q M^0  Q\,,
\end{equation*}
and $\bar Q = \lone -Q$.   As argued previously,   $Q (G+L) =0$  (the constant function is a left
eigenvector of $M^\field$). Since $\bar Q  (\field\cdot
X)\bar Q $ has spectrum in the region $\str\im z\str \geq \str \field \str$ and
$W$ is bounded, we obtain the claim of the lemma by a straightforward Neumann
series expansion. 
 \end{proof}

In higher dimensions, things are more subtle. If $\field$ is a multiple of some element in
$\bbZ^d$, then the above lemma is  easily generalized by replacing $Q$ with the
(infinite-dimensional) projection corresponding to the kernel of $\field \cdot
X$ and  the term   $\frac{1}{z} Q$  in \eqref{eq: resolvent bound large fields}
by $ Q \frac{1}{z- QM^0 Q } Q $. In this situation, the spectrum of $\field
\cdot X$ is a lattice and the gap between $0$ and the rest of the spectrum
increases proportionally to $\str \field \str$. Since the spectral analysis of
$QM^0 Q $ can be carried out similarly to that of $M^0$, this allows us to
control the eigenvalue at $0$ by perturbation theory, as $\field\to \infty$.  
 However, if the line $\field \bbR$ does not hit any lattice point $\bbZ^d$,
then the spectrum of $\field\cdot X$ covers the whole real line and we need more
subtle considerations to perform the limit $\field\to \infty$.  This is expected to be
manageable; but we do not wish to address this point here.

\subsection{Asymptotics of the semigroup $\e{tM^\field}$}  \label{sec:
asymptotics of semigroup}
In this section, we discuss the large time asymptotics of the semigroup
generated by $M^{\kappa,\field}$. In particular, we show that the diffusion
tensor at vanishing external field is positive-definite.

Our interest is in the asymptotic behavior of solutions of the linear Boltzmann
equation, i.e.,\ of the probability density $\nu_t(x,k)$ satisfying the
evolution Equation~\eqref{eq: transport}.  The distribution $ \hat \nu^0_{t}(k) \deq  \int \d x \,\nu_t(x,k) $ of the
particle's momentum  evolves according to
the semigroup $\e{t M^{0,\field}}$. We recall that this semi-group is
quasi-compact (and positivity improving) and that $\mu=0$ is an isolated
eigenvalue of $M^{\field}$, by Lemma~\ref{lem: kinetic}. Hence, we conclude
(see, e.g.,~\cite{engelnagelshortcourse}) that, for any $\hat \nu^0_{t=0} \in
\mathrm{L}^2(\tor)$, 
\begin{equation*}
\hat \nu^0_{t}  = \zeta^{\field} +  \caO(\e{-g\kin(\field) t })\,, \qquad  t \to
\infty\,,
\end{equation*}
where $g\kin(\field)= \mathrm{dist}(\ii \R,\si(M^{0,\field}) \setminus \{0\})$.
We know that $g\kin(\field)>0$, for all $\field$; but only for small $\field$ we
have established uniformity in~$\field$; see Lemma~\ref{lem: resolventbound}. 
Information on the distribution of the particle's position is obtained from operators on
fibers at non-zero $\ka$, (as explained in Section~\ref{sec: strategy and
discussion}). Hence we have to determine the asymptotic behavior of $\e{t
M^{\ka,\field}}$, for small $\ka$. Recall that
$M^{\ka,\field}=M^{0,\field}+\ii\ka\cdot\nabla{\ve}$. Since
$\ii\ka\cdot\nabla{\ve}$ is a bounded operator, analytic perturbation theory in
$\kappa$ implies that the operator $M^{\ka,\field}$ has an isolated, simple
eigenvalue, $\eig\kin(\ka)$, close to $0$, for small $\kappa\in\C^d$. Moreover,
by the resolvent bound in Lemma~\ref{lem: resolventbound}, 
$g\kin(\ka,\field)\deq \mathrm{dist}(\ii\R,\si(M^{0,\field}) \setminus
\{\eig\kin(\ka)\})$ equals  $ g\kin(\field)+\caO(\ka)$. For $\ka$ small enough,
the semigroup generated by~$M^{\ka,\field}$ is quasi-compact, and we conclude,
by similar reasoning as above, that
\begin{equation*}
\e{ t M^{\ka,\field}} = P\kin^{\ka,\field} \e{t \eig\kin(\ka,\field)} +
\caO(\e{t (\eig\kin(\ka,\field)-g\kin(\ka,\field))})\,,
\end{equation*}
with $P\kin^{\ka,\field}$ a rank-one operator that is a small perturbation of
$\ket{\zeta^{\field}}\bra 1$. 

Following the discussion in Section~\ref{sec: strategy and discussion}, we know
that the asymptotic velocity $v\kin(\field)$ and diffusion constant
$D\kin(\field)$ can be derived from $\eig\kin$, with 
\begin{equation*}
v^i\kin(\field)= \ii\frac{\partial }{\partial \ka^i}\bigg|_{\kappa=0}
\eig\kin(\ka,\field), \qquad   D^{ij}\kin(\field)=-\frac{1}{2} \frac{\partial
}{\partial \ka^{i} \partial \ka^j}\bigg|_{\ka=0} \eig\kin(\ka,\field)\,.
\end{equation*}
These expressions can be computed using the standard Rayleigh-Schr\"odinger
expansion of analytic perturbation theory:
\begin{align*}
 \eig\kin(\ka,\field)=\ii\langle
1,(\kappa,\nabla\epsilon)\zeta^{0,\field}\rangle-\langle
1,(\kappa,\nabla\epsilon)S^{\field}(\kappa,\nabla\epsilon)\zeta^{0,\field}
\rangle+\caO(\kappa^3)\,,\quad |\ka|\rightarrow 0\,,
\end{align*}
where by $S^{\field}$ we denote the `reduced resolvent' of $M^{0,\field}$ at
$z=0$, i.e., $S^{\field}=(\field\cdot\nabla-G-L)^{-1}\bar P^{\field}$, with
$\bar P^{\field}=\lone-P^{\field}$, $P^{\field}=\ket{\zeta^{\field}}\bra {1}$.
Note that
$\overline{\eig\kin(\ka,\field)}=\eig\kin(-\overline{\kappa},\field)$, so that
$v\kin(\field)$ and $D\kin(\field)$ have real entries.

For $\field=0$, the detailed balance condition implies that
$\zeta^{\field=0}(k)\propto \e{-\beta\epsilon(k)}$, hence $v\kin(\field=0)=0$.
To prove that the diffusion constant at vanishing external field is strictly
positive, we use the transformation $B\mapsto B_{\epsilon}$ (as defined in
Section~\ref{sec: refined}) to find that, for any $a\in\R^d$,
\begin{align*}
 (a,D_M(\field=0)a)=-\frac{1}{2}\langle
1,(a,\nabla\epsilon)S^{0}(a,\nabla\epsilon)\zeta^{0}\rangle=-\frac{1}{2\langle
1,\e{-\beta\epsilon}\rangle}\langle
\e{-\beta\epsilon/2},(a,\nabla\epsilon)S^{0}_{\epsilon}(a,\nabla\epsilon)\e{
-\beta\epsilon/2}\rangle\,.
\end{align*}
By the spectral calculus and Lemma~\ref{lem: kinetic}, we know that
$-S^0_{\epsilon}$ is strictly positive. Moreover, by Assumption A,
$(a,\nabla\epsilon)$ does not vanish identically. It follows that
$D\kin(\field=0)$ is a positive-definite matrix.

For $\field\not=0$, one can establish smoothness (but $not$ analyticity) of
$\eig\kin(\ka,\field)$, $v\kin(\field)$ and $D\kin(\field)$ in $\field$, using
asymptotic perturbation expansions in $\field$, for $\field$ small enough. This
method is outlined in Lemma~6.1 of~\cite{paper2}. Relying on these
expansions and using the ideas just presented, it is straightforward to prove
that $(\field,v\kin(\field))\not=0$, for $\field\not=0$, and that
$D\kin(\field)>0$, for small external forces $\field$. Moreover, one can also
confirm the validity of the Einstein relation within the kinetic theory:
$\frac{\partial }{\partial \field}\big|_{\field=0}v\kin(\field)= \beta
D\kin(0)$. 
\\

 Finally, let us turn to the large-$\field$ regime. In dimension $d=1$,
Lemma~\ref{lem: perturbation large field} allows us to apply perturbation theory
in the parameter $1/\str \field \str$, and we derive easily that both
$v(\field)$ and $D\kin(\field)$ vanish as~$1/\str \field \str$, for $\str \field
\str \to \infty$.
The main gap in our knowledge is for moderate $\str \field \str$: We are not
able to prove that  $D\kin(\field) >0$. The fact that this is difficult using
spectral methods should not come as a surprise. In fact, modern approaches to the central
limit theorem often use martingale techniques. 
However, a standard method we are aware of, which one may want to apply to prove
the positivity  of the diffusion constant and a central limit type theorem,
the \emph{graded sector condition} introduced in \cite{komorowski2003sector},
does not appear to be applicable here.

\section{Results from expansions}\label{section: results from expansions}
The aim of this section is to summarize properties of the effective dynamics $\caZ_{[0,t]}$ that can be proven using expansion techniques. We only describe the main ideas and present formal arguments. Mathematically precise arguments are given in~\cite{paper2}, Sections~4 and~5, where elaborate expansion techniques are developed that can also be used to analyze correlation functions.

\subsection{Survey of expansions}
Here we sketch expansion techniques that are used to study the reduced dynamics of the tracer particle. Let
$I\subset\R_+$ be a finite interval. We define the free particle dynamics,
$\caU^{{\Lambda}}_I$, on $\scrB_2(\scrH_{\sys})$, with $\scrH_{\sys}=\ell^2(\Lambda)$, by
\begin{align}
 \caU^{{\Lambda}}_I\deq\e{-\ii |I|\ad(H_{\sys})}\,,\quad\quad
H_{\sys}\equiv H_{\sys}^{\Lambda}=T^{\Lambda}-\lambda^2\field\cdot X^{\Lambda}\,,
\end{align}
and the particle-reservoir interaction, $H_{\mathrm{SR}}(t)\equiv H_{\mathrm{SR}}^{\Lambda}(t)$ in the interaction picture,
which we may write as a sum over spatially localized terms, by
\begin{align}
 H_{\mathrm{SR}}^{}(t)\deq\e{\ii t H_{\res}}\,H_{\mathrm{SR}}\e{-\ii t
H_{\res}}=\sum_{x\in{\Lambda}}\lone_{x}\otimes\e{\ii t
H_{\res}}(a_x(\phi)+a\adj_x(\phi))\,\e{\ii t H_{\res}}\,.
\end{align}
Iterating Duhamel's formula

\begin{align*}
\e{\ii t\ad(H_{\res})}\e{-\ii t
\ad(H)}&=\mathcal{U}_{[0,t]}-\ii\lambda\int_0^t\dd
s\,\mathcal{U}_{[s,t]}\,\ad(H_{\mathrm{SR}}(s))\,\e{\ii s \ad(H_{\res})}\,\e{-\ii
s\ad(H)}\,,
\end{align*}
we find the (Lie-Schwinger-) Dyson series for $\mathcal{Z}_I$:\small
\begin{align*}
\mathcal{Z}_I^{\Lambda}(\,\cdot\,)=\sum_{n\ge0}(-\ii
\la)^n\mathop\int\limits_{{\infi<t_1<\ldots<t_n<\supi}}\dd t_1\cdots\dd
t_n\Tr_{\res}\big[\mathcal{U}_{[t_n,\supi]}^{\Lambda}\ad(H_{\mathrm{SR}}(t_n))
\cdots\ad(H_{\mathrm{SR}}(t_1))\mathcal{U}_{[\infi,t_1]}^{\Lambda}
(\cdot)\otimes\rho_{\referres}^{\Lambda}\big]\,,
\end{align*}\normalsize
where $\infi$, $\supi$ denote the infimum and supremum of the interval $I$, respectively.
The trace over the reservoir Hilbert space can be evaluated using Wick's
theorem; see~\eqref{eq: gaussian property2}. For this purpose, we introduce the shorthand
notations
\begin{equation*}
\lone_{x,\vs}\deq (\lone_x)_{\vs}\,, \qquad \Psi_{x,\vs}(t) \deq (-\ii \Psi_{x}(t)
)_{\vs}\,,
\end{equation*}
for  $x\in\La$, $\vs\in\{ \links,\rechts \}$ (the left- and right multiplications, $(\,\cdot\,)_{\vs}$, were introduced in~\eqref{eq:2.5}). We denote by
$\mathrm{Pair}(n)$, the set of pairings of $2n$ elements and denotes by $\underline{x}$, $\underline{\vs}$ elements of $\Lambda^{2n}$, $\{\links,\rechts\}^{2n}$, respectively. In this notation, the formal Dyson series for $\mathcal{Z}_I$ can, upon
using Wick's theorem, be written as
\begin{align}\label{expansion neu 1}
 \mathcal{Z}^{{\Lambda}}_I=\sum_{n\ge0}\mathop{\int}\limits_{\infi<
t_1<t_2<\ldots t_{2n}<\supi}&\left(\prod_{i=1}^{2n} \dd
t_i\right)\sum_{\underline{x},\underline{\vs}}\sum_{\pi\in\mathrm{Pair}(n)}\zeta^{
{\Lambda}}((\underline{x},\underline{t},\underline{\vs}),\pi)\nonumber\\
&\times\caU^{{\Lambda}}_{[t_{2n},\supi]}\lone_{x_{2n},\vs_{2n}}\caU^{{\Lambda}}_{[
t_{2n-1},t_{2n}]}\cdots\lone_{x_1,\vs_1}\caU^{{\Lambda}}_{[\infi,t_1]}\,,
\end{align}
where $\zeta^{\Lambda}$ denotes a reservoir correlation function given by
\begin{align}\label{definition of zeta}
 \zeta^{{\Lambda}}((\underline{x},\underline{t},\underline{\vs}),\pi)\deq\prod_{(r
,s)\in\pi}\lambda^2   h^{{\Lambda}}(t_s, t_r,\vs_s, \vs_r)
\delta_{x_r,x_s}\,,
\end{align}
with
\begin{align}\label{eq:6.200bis}
 h^{{\Lambda}}(u,v ,\vs, \vs') \deq  \begin{cases} -\hat\psi^{{\Lambda}}(u-v)\,,&\textrm{if }
\vs=\links\,,\phantom{\rechts}\vs'=\links\,, \\
-{\hat\psi^{{\Lambda}}(v-u)}\,,&\textrm{if }\vs=\rechts\,,\phantom{\links}\vs'=\rechts\,,\\
\hat\psi^{{\Lambda}}(v-u)\,,&\textrm{if }\vs=\rechts\,,\phantom{\links}\vs'=\links\,,\\
{\hat\psi^{{\Lambda}}(u-v)}\,,&\textrm{if }\vs=\links\,,\phantom{\rechts} \vs'=\rechts\,.
\end{cases}
\end{align}
The reservoir two-point correlation function $\hat\psi^{{\Lambda}}$ has been
defined in~\eqref{eq: clear presentation finite volume correlationfunction one}.

In the next step, we decompose the expansion~\eqref{expansion neu 1} into a
sum/integral over irreducible pairings:  A pairing~$\pi$ is irreducible whenever for any $m =1, \ldots, 2n-1$, there is a pair $(r,s) \in \pi$ such that $s \leq m < r$.  

To that end, we define an operator $\caV_I^{\Lambda}$, by
\begin{align}
 \caV^{{\Lambda}}_I&\deq\sum_{n=0}^{\infty}\mathop{\int}\limits_{
\infi=t_1<\ldots<t_{2n}=\supi}\left(\prod_{i=2}^{2n-1} \dd
t_i\right)\sum_{\underline{x},\underline{l}}\sum_{\substack{\pi\in\mathrm{Pair}
(n)\\ \pi\textrm{ is
irreducible}}}\zeta^{{\Lambda}}((\underline{x},\underline{t},\underline{\vs}),
\pi)\nonumber\\
&\qquad\qquad\times\lone_{x_{2n},\vs_{2n}}\caU^{{\Lambda}}_{[t_{2n-1},t_{2n}]}\,
\cdots\caU^{{\Lambda}}_{[t_1,t_2]}\,\lone_{x_1,\vs_1}\,,
\end{align}
where the second sum is over \textit{irreducible} pairings, and we only
integrate over $2n-2$ time coordinates, with $t_1=\infi$ and
$t_{2n}=\supi$ fixed. It is then easy to check that expression~\eqref{expansion neu
1} can be rewritten as
\begin{align}\label{expansion neu 2}
\mathcal{Z}^{{\Lambda}}_{I}=\sum_{l\ge0}\mathop{\int}\limits_{
\infi<t_1<\ldots<t_l<\supi}\dd
\underline
t\,\,\mathcal
U^{{\Lambda}}_{[t_{2l},{\supi}]}\mathcal{V}^{{\Lambda}}_{[t_{2l-1},t_{2l}]}\,
\mathcal
U^{{\Lambda}}_{[t_{2l-2},t_{2l-1}]}\cdots\mathcal{V}^{{\Lambda}}_{[t_1,t_2]}\,
\mathcal U^{{\Lambda}}_{[\infi,t_1]}\,.
\end{align}
Representation~\eqref{expansion neu 2} is the starting point for the proof of
Lemma~\ref{lem: thermodynamic dyn}. The details of our proof can be found in~\cite{paper2}; Section~5.1. Here we just indicate some of the main ideas underlying it: First, we recall the bound on $\caU_I$
from~\eqref{eq: propagation bound}:
\begin{align}\label{thomas combes 2}
 |\caU^{\Lambda}_I(x,y,x',y')|\le C\e{ct
\|\im\epsilon\|_{\infty,\nu}}\e{-\nu|x-x'|-\nu|y-y'|}\,,
\end{align}
which holds uniformly in $\Lambda\subseteq\mathbb{Z}^{d}$. We observe that Assumption~\ref{ass:
analytic dispersion} ensures that $\tdl\caU_I^{\Lambda}=\caU_I$, in the sense of convergence of
kernels. Furthermore, the reservoir two-point correlation functions
$\Psi^{\Lambda}(t)$ and $\Psi(t)$ are bounded uniformly in $t$ and
$\Lambda$, and, by Assumption~\ref{ass: exponential decay}, we have that
$\tdl\Psi^{\Lambda}(t)=\Psi(t)$, uniformly in $t$ on compact subsets of $\mathbb{R}$. One then proves that
$\caV_{I}^{\Lambda}$ defines a bounded operator on $\scrB_2(\scrH_{\sys})$, for any
$\Lambda$, including $\Lambda=\Z^d$,  and that $\tdl\caV_{I}^{\Lambda}=\caV_I$,
in the sense of kernels. It is then straightforward to show that the expansion
for $\caZ_I^{\Lambda}$ converges absolutely in norm as an operator on
$\scrB_2(\scrH_{\sys})$, the bounds being uniform in $\Lambda$, and that $\tdl\caZ_I^{\Lambda}$ has a
limit $\caZ_I$, in the sense of convergence of kernels. Moreover, 
repeated use of~\eqref{thomas combes 2} reveals that
\begin{align}
 |\caZ_I^{\Lambda}(x,y,x',y')|\le
C\e{-\nu|x-x'|-\nu|y-y'|}\,,\quad\quad|\caZ_I(x,y,x',y')|\le
C\e{-\nu|x-x'|-\nu|y-y'|}\,,
\end{align}
for some $\nu>0$, where the constant $C$ is independent of $\Lambda\subseteq\mathbb{Z}^{d}$
 and is uniform in $I$, for $I$ contained in compact subsets of $\mathbb{R}$. Thus, for any
exponentially localized density matrix $\rho_{{\sys}}\in\scrB_1(\scrH_{\sys})$ and any
finite time $t$, we can define
\begin{align}
 \langle
O(t)\rangle_{\rho_{\sys}\otimes\rho_{\referres}}=\Tr_{\sys}[O\caZ_{[0,t]}\rho_{\sys}]
\deq\tdl\Tr_{\sys}[O^{\Lambda}\caZ_{[0,t]}^{\Lambda}\rho_{\sys}^{\Lambda}]\,,
\end{align}
where $O^{\Lambda}=\lone_{\Lambda}O\lone_{\Lambda}$, with $O\in{\mathop{\mathfrak{A}}\limits^{\circ}}$ or $\frX$.
By the same reasoning, one also establishes that the infinite-volume objects are translation-invariant, i.e., $\caA=\caT_{-y}\caA\caT_y $, for
$\caA=\caZ_{[0,t]}, \caU_I,\caV_I$ and $y\in\Z^d$. This becomes plausible if one recalls that translation-invariance in finite volume was broken only because of the Dirichlet boundary condition.

The advantage of representation~\eqref{expansion neu 2}
over~\eqref{expansion neu 1} is that, after Laplace
transformation, it can be resummed: For $z\in\C$, with $\re z$ sufficiently large, we define
\begin{align}\label{11eq}
 \caR(z)\deq\int_0^{\infty}\dd t\,\e{-zt}\caZ_{[0,t]}\,.
\end{align}
In order to identify the leading contributions to $\caZ_{[0,t]}$ and $\caR$,
respectively, we define an operator
\begin{align}\label{definition of the dressing operator}
 \caV_I^{(2)}&\deq\sum_{n=2}^{\infty}\mathop{\int}\limits_{\infi=t_1<\ldots<t_{
2n}=\supi}\left(\prod_{i=2}^{2n-1} \dd
t_i\right)\sum_{\underline{x},\underline{l}}\sum_{\substack{\pi\in\mathrm{Pair}
(n)\\ \pi\textrm{ is
irreducible}}}\zeta^{{}}((\underline{x},\underline{t},\underline{\vs}),
\pi)\nonumber\\
&\qquad\qquad\times\lone_{x_{2n},\vs_{2n}}\caU^{{}}_{[t_{2n-1},t_{2n}]}
\cdots\caU^{{}}_{[t_1,t_2]}\lone_{x_1,\vs_1}\,,
\end{align}
and the Laplace transforms
\begin{align}\label{laplace transfroms neu}
 \caM(z)\deq\int_0^{\infty}\dd t\,\left(\caV_{[0,t]}-\caV_{[0,t]}^{(2)}
\right)\,,\quad\quad \caR_{\mathrm{ex}}(z)\deq\int_0^{\infty}\dd t\,\e{-zt}\caV_{[0,t]}^{(2)}\,.
\end{align}
Recalling the definition of $\zeta$ in~\eqref{definition of zeta} and of $\caV$ and $\caV^{(2)}$, we observe that, roughly speaking, $\caM(z)$ contains all contributions to second order in $\lambda$ from the correlation functions $\zeta$, but higher orders of $\lambda$ enter $\caM(z)$ trough the field term $\lambda^2\field\cdot X$.

Recall the definition of the operator $\caJ_{\theta}$ in~\eqref{eq:def jkappa}.
\begin{lemma}\label{lem: pseudoresolvent neu}
The operator-valued function  $(z,\theta) \mapsto \caJ_{\theta} \caA(z)
\caJ_{-\theta}$, with $\caA= \caM, \caR_{\mathrm{ex}}$ is
analytic  in the region  $|\theta|<k_\theta, \mathrm{Re}\,z>-k_z$, for some
$k_{z},k_\theta >0$,  and satisfies the bounds (as $\la\to 0$)
\begin{equation}  \label{eq: bounds on carex neu}
\sup_{|\theta|<\theta_0\,,\,\mathrm{Re}\,z>-g_0}  \left\{  \begin{array}{rr} 
\|\mathcal{J}_{\theta}\mathcal{M}(z)\mathcal{J}_{-\theta}\|
&= \,\,\caO(\lambda^2) \,,   \\[2mm]
\|\mathcal{J}_{\theta}\mathcal{R}_{\mathrm{ex}}(z)\mathcal{J}_{-\theta}\|
&=\,\,\caO(\lambda^4)  \,.  \\[2mm]
  \end{array} \right.
\end{equation}
Moreover, for $\re z>0$, 
\begin{align}\label{eq:6.4 neu}
\mathcal{R}(z) &=(z-\caL_\sys-\mathcal{M}(z)-\mathcal{R}_{\mathrm{ex}}(z))^{-1}\,,
\end{align} 
where $\caL_\sys=\ad (H_\sys)$ is the Liouvillian of the particle system.
\end{lemma}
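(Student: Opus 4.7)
The lemma has three components: joint analyticity of $(z,\theta) \mapsto \caJ_\theta \caA(z) \caJ_{-\theta}$ for $\caA \in \{\caM, \caR_{\mathrm{ex}}\}$, the two norm estimates, and the resolvent identity. My strategy is to work directly from the resummed Dyson representation~\eqref{expansion neu 2}. Taking its Laplace transform term by term, one reads off the Laplace transform of each building block: for $\caU_{[0,t]}$ one has $\hat{\caU}(z)=(z-\caL_\sys)^{-1}$ (absorbing the standard $\ii$-factor into $\caL_\sys$), and the Laplace transform of $\caV_{[0,t]}$ is exactly $\caM(z)+\caR_{\mathrm{ex}}(z)$ by the definitions in~\eqref{laplace transfroms neu}. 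Because~\eqref{expansion neu 2} is a convolution series in time, Laplace transformation turns it into the geometric series
\begin{equation*}
\caR(z) \;=\; \hat{\caU}(z) \sum_{\ell \geq 0}\bigl(\hat{\caV}(z)\,\hat{\caU}(z)\bigr)^{\ell} \;=\; \hat{\caU}(z)\bigl(1 - \hat{\caV}(z)\hat{\caU}(z)\bigr)^{-1},
\end{equation*}
with $\hat{\caV}(z)=\caM(z)+\caR_{\mathrm{ex}}(z)$, which rearranges algebraically to $\bigl(z-\caL_\sys-\caM(z)-\caR_{\mathrm{ex}}(z)\bigr)^{-1}$, establishing~\eqref{eq:6.4 neu}. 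Extension from $\re z$ large, where the manipulations are justified by absolute convergence, to $\re z > 0$ proceeds by analytic continuation, using the bounds proved next.

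The norm estimates and the analyticity assertion rest on two inputs: the Combes--Thomas bound $\|\caJ_\theta \caU_{[s,t]} \caJ_{-\theta}\| \leq C\,\e{(t-s)\|\im\ve\|_{\infty,|\theta|}}$ extracted from~\eqref{eq: propagation bound}, and the exponential decay $|\hat\psi(t)| \leq C\,\e{-g_\res t}$ from Assumption~\ref{ass: exponential decay}. An irreducible pairing $\pi$ of $2n$ time points contributes a factor $\la^{2n}$ times a product of pair correlations, each decaying as $\e{-g_\res (t_r-t_s)}$. The irreducibility condition---for every $m\in\{1,\ldots,2n-1\}$ some pair straddles the gap $[t_m,t_{m+1}]$---yields the key inequality
\begin{equation*}
\sum_{(r,s)\in\pi}(t_r - t_s) \;=\; \sum_{m=1}^{2n-1}(t_{m+1}-t_m)\,\bigl|\{(r,s)\in\pi : s \leq m < r\}\bigr| \;\geq\; t_{2n}-t_1,
\end{equation*}
so the total reservoir decay $\e{-g_\res \sum(t_r-t_s)}$ dominates the total propagator growth $\e{\|\im\ve\|_{\infty,|\theta|}(t_{2n}-t_1)}$, provided $|\theta| < k_\theta$ is chosen so that $\|\im\ve\|_{\infty,|\theta|} < g_\res$. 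The Laplace weight $\e{-zt}$ then makes the $(2n-2)$ remaining time integrations absolutely convergent, uniformly on the strip $\re z > -k_z$ for some $k_z>0$. Since $\caV_{[0,t]}-\caV_{[0,t]}^{(2)}$ retains only the $n=1$ term (a single pair, hence $\la^2$), we obtain $\caM(z)=\caO(\la^2)$; and $\caV^{(2)}$ starts at $n=2$, giving $\caR_{\mathrm{ex}}(z)=\caO(\la^4)$ after summing the $n\geq 2$ tail. Joint analyticity in $(z,\theta)$ is inherited term by term---$\theta$ enters only through the Combes--Thomas extension of $\caJ_\theta\caU\caJ_{-\theta}$ and $z$ only through the Laplace integral---and it passes to the sum by uniform convergence.

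The main obstacle I anticipate is the combinatorial control at high order: the number of irreducible pairings on $2n$ points grows like $(2n)!$, and one must verify that this growth is defeated by the $1/(2n-2)!$ suppression from the ordered time simplex together with the irreducibility-forced decay, all uniformly in $(z,\theta)$ on the domain claimed. This careful bookkeeping, which must simultaneously track the analytic dependence, is the heart of the argument; it is carried out in detail in the companion paper~\cite{paper2}, and the present lemma packages its output.
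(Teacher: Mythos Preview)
Your proposal is correct and follows essentially the same approach as the paper's own sketch: convolution structure of~\eqref{expansion neu 2} under Laplace transform yields the resolvent identity, Combes--Thomas controls the spatial sums at the cost of mild exponential growth in time, and the exponential decay of $\hat\psi$ together with the irreducibility constraint absorbs that growth and produces the $\caO(\la^2)$, $\caO(\la^4)$ bounds. Your explicit spanning inequality $\sum_{(r,s)\in\pi}(t_r-t_s)\geq t_{2n}-t_1$ for irreducible pairings is in fact a slightly sharper articulation of the mechanism the paper only describes verbally, and both accounts correctly defer the full combinatorial estimate (controlling the number of irreducible pairings against the simplex volume) to the companion paper~\cite{paper2}.
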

The proof of this Lemma is contained in~\cite{paper2}, Section~5.3. Here we just sketch the main key ideas. First, one observes that the time integrals in~\eqref{expansion neu 2} are convolutions. Thus, when taking the Laplace transform, it suffices to consider the Laplace transforms of $\caU_{[0,t]}$ and $\caV_{[0,t]}$. The former being given by the resolvent of $\caL_{\sys}=\ad(H_{\sys})$, it suffices to consider the operators $\caV_{[0,t]}-\caV_{[0,t]}^{(2)}$ and $\caV_{[0,t]}^{(2)}$, respectively. The Laplace transform of $	\caV_{[0,t]}-\caV_{[0,t]}^{(2)}$ can be computed explicitly (see below), and the claims concerning $\caM$ in Lemma~\ref{lem: pseudoresolvent neu} can be checked easily. It remains to analyze $\caV_{[0,t]}^{(2)}$, as defined in~\eqref{definition of the dressing operator}. The sum over the spatial coordinates $\underline{x}$ in~\eqref{definition of the dressing operator} can be bounded using the Combes-Thomas bound for the propagators $\caU_{I}$ (see~\eqref{eq: propagation bound}), at 
the 
price of a `mild' 
exponential growth in time; (it is `mild' because  $\nu$ on the right hand side of~\eqref{eq: propagation bound} can be chosen arbitrarily small, as long as it does not depend on $\la$). To bound the integrals over the time coordinates $\underline{t}$, we use the exponential decay in time of the correlation function $\zeta$. We can  cope with the `mild' exponential growth coming from the sum over the spatial coordinates by slightly reducing the decay rate in the exponential decay coming from the correlations $\zeta$. Then we are left with the problem of analyzing a one-dimensional `gas' of pairings between points confined to an interval of the real line, with integrable (in fact, exponential) decay in the distance between points in each pair.  It has been remarked repeatedly that, for such systems, one can integrate over all times and sum over all possible pairings. For more details, we refer to an earlier paper~\cite{deroeckfrohlichpizzo} and to the companion paper~\cite{paper2}. 
\\

From here on, our analysis proceeds as follows: In the next subsection, we explicitly
calculate the term $\caM(z)$ and show that, in some sense to be made precise, it
is close to the generator $M$ of the linear Boltzmann equation discussed in
Section~\ref{sec: kinetic}. In a next step, carried out in Section~\ref{sec:
analysis of resolvent around zero}, we show that $\caR(z)$ is comparable to
the resolvent of $M$ when restricted to fibers corresponding to `small' momenta.

\subsection{Calculation of $\mathcal{M}(z)$}\label{sec: calculation ladders}
We start with the calculation of $\caM$ defined in~\eqref{laplace transfroms neu}, i.e.,
\begin{align} \label{eq: expression cal z}
 \mathcal{M}(z)=&\la^2 \int_0^{\infty} \dd t\, \e{-zt}
\sum_{x\in\Z^d}\,\sum_{\vs_1,\vs_2\in\{ \links,\rechts \}}   h(0,t, \vs_1,\vs_2)
\lone_{x, \vs_2} \e{-\ii t \ad(H_\sys)}\lone_{x, \vs_1}\,.
 \end{align}
We recall the fiber decomposition introduced in Section~\ref{sec: fiber
decomposition}. Identifying the fiber spaces $\scrH_p$ with
$\mathrm{L}^2(\tor)$, we interpret $\caM(z)_p$ as an operator acting on
$\mathrm{L}^2(\tor)$.
 The action of the unitary group $\e{-\ii t H_{\sys}}$, with 
\mbox{$H_{\sys}=T-\lambda^2\field\cdot X$}, is given by
 \begin{align}
(\e{-\ii tH_{\sys}}f)(k)=f(k-\lambda^2{\field}t)\e{-\ii \Phi_k(t)}\,,\qquad\
f\in\mathrm{L}^2(\tor)\,, \label{eq: explicit unitary}
\end{align}
where $\Phi_k(t)\deq \int_0^t\dd
s\,\varepsilon(k-\lambda^2{\field}s)$. 
We split $\mathcal{M}(z)$ into $\mathcal{M}(z)= \sum_{\vs_1,\vs_2}\caM^{\vs_1,\vs_2}(z)$,
corresponding to the second sum in \eqref{eq: expression cal z}.
A tedious but straightforward calculation, using~\eqref{eq: explicit unitary}, 
yields the fiber operators
 \begin{align*}
 (\mathcal{M}^{\links\links}(z)_pf )(k)&=-\lambda^2\int_0^{\infty}\dd
t\,\hat\psi(t)\,\int_{\tor}\dd k'\e{-zt-\ii
\Phi_{k'+\frac{p}{2}}(t)+\ii\Phi_{k-\frac{p}{2}}(t)}
f(k -\lambda^2t \field)\,,\nonumber\\[1mm]
 (\mathcal{M}^{\rechts\rechts}(z)_p f)(k)&=-\lambda^2\int_0^{\infty}\dd
t\,\hat\psi(-t)\,\int_{\tor}\dd k'\e{-zt-\ii
\Phi_{k+\frac{p}{2}}(t)+\ii\Phi_{k'-\frac{p}{2}}(t)}f (k -\lambda^2t
\field)\,,\nonumber\\[1mm]
 (\mathcal{M}^{\links\rechts}(z)_pf)(k)&=\lambda^2\int_{0}^{\infty}\dd
t\,\hat\psi(-t)\,\int_{\tor}\dd k'\e{-zt-\ii
\Phi_{k'+\frac{p}{2}}(t)+\ii\Phi_{k-\frac{p}{2}}(t)}f(k' -\lambda^2t
\field)\,,\nonumber\\[1mm]
( \mathcal{M}^{\rechts\links}(z)_pf) (k)&=\lambda^2\int_{0}^{\infty}\dd
t\,\hat\psi(t)\,\int_{\tor}\dd k'\e{-zt-\ii
\Phi_{k+\frac{p}{2}}(t)+\ii\Phi_{k'-\frac{p}{2}}(t)}
f(k'-\lambda^2t\field)\,,\nonumber
 \end{align*}
 where $f\in\mathrm{L}^2(\tor)$.
 
\subsection{Analysis of ladder diagrams}\label{sec:
analysis ladders}
The idea of our analysis is to expand the restrictions to the fibers $\scrH_{\lambda^2\kappa}$ of the operators 
$\mathcal{R}(z)$, i.e.,
$(\caR(z))_{\lambda^2\kappa}$, for small $z$, around the contributions of order $\la^2$.
Note that, because of the small fiber momentum, $\lambda^{2}\kappa$, there are no contributions of order $1$. To capture these contributions, we will define an operator 
$\widetilde M\equiv \widetilde M^{\la,\ka,\field}$ acting on
$\mathrm{L}^2(\tor)\simeq\scrH_{\lambda^2\kappa}$ that satisfies
 \begin{align}  \label{eq: deltam rules}
(\caL_\sys+\mathcal{M}(0) + \caR_{\mathrm{ex}}(0))_{\lambda^2\kappa} =
\lambda^2\widetilde M + \caO(\lambda^4(1+\str \ka \str)) \,,
\end{align}
as $\lambda\to 0$, $\kappa\to 0$. Note that we have set the spectral
parameter~$z$ in $\caM$ and $\caR_{\mathrm{ex}}$ to zero. As the notation
suggests, $\widetilde M$ is closely related to $M\equiv M^{\ka,\field}$, the
operator introduced and analyzed in Section~\ref{sec: kinetic}; (note that, for
$\field=0$, $\widetilde M=M$). The norms below refer to
$\scrB(\mathrm{L}^2(\tor))$.
\begin{lemma}\label{lem: trivia of delta m}
Define  $\widetilde M \deq M+\delta M$ with
\begin{equation*}
\delta M=\delta M^{\field,\la} \deq K(\field)-K(0), \qquad \textrm{where}\,\,  
K(\field) = (\caM(z=0,\field))_{\la^2\ka}\,.
\end{equation*}
The  operator $\delta M$ is bounded, $\norm \delta M^{\field,\la}\norm \leq C$,
and relatively bounded w.r.t.\ to  $\field\cdot\nabla$ and $M$, with a bound of
order~$\la^2$. More precisely,
\begin{equation}  \label{eq: relative bound}
\norm \delta M f \norm \leq \la^2 C ( \norm  f \norm+  \norm \field\cdot\nabla f
\norm)  \,,
\end{equation}
for any function $f$ in the domain of $\field\cdot\nabla$. Furthermore,
\eqref{eq: deltam rules} holds.
\end{lemma}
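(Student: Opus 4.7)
Plan: My approach is a direct computation from the explicit fiber formulas for $(\caM(0,\field))_{\la^2\ka}$ given in Subsection~\ref{sec: calculation ladders}, the bound $\caR_{\mathrm{ex}}(0)=\caO(\la^4)$ from Lemma~\ref{lem: pseudoresolvent neu}, analyticity of $\ve$ (Assumption~\ref{ass: analytic dispersion}), and exponential decay $|\hat\psi(t)|\leq Ce^{-g_\res t}$ (Assumption~\ref{ass: exponential decay}). The central mechanism is to Taylor-expand the two places where $\la^2\field$ and $\la^2\ka$ enter the integrand, namely the phases $\Phi_{k\pm\la^2\ka/2}(t;\field)=\int_0^t\d s\,\ve(k\pm\la^2\ka/2-\la^2\field s)$ and the shifted argument $f(k-\la^2 t\field)$, while the decay of $\hat\psi$ tames the polynomial factors of $t$ generated by those Taylor remainders.

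For the uniform bound, each of the four fiber operators $\caM^{\vs_1\vs_2}(0)_{\la^2\ka}$ has operator norm on $\mathrm{L}^2(\tor)$ at most $C\la^2\int_0^\infty|\hat\psi(t)|\,\d t<\infty$, since the momentum shift $f\mapsto f(\,\cdot\,-\la^2 t\field)$ is an isometry and all phase factors are unimodular; hence $\|K(\field)\|\leq C\la^2$ uniformly in $\field,\la$, and thus $\|\delta M\|\leq 2\|K\|\leq C$. For the relative bound I split $\delta M$ into a \emph{shift} piece from $f(k-\la^2 t\field)-f(k)$ and a \emph{phase} piece from $\Phi_k(t;\field)-t\ve(k)$. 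The fundamental theorem of calculus gives $\|f(\,\cdot\,-\la^2 t\field)-f\|_{\mathrm{L}^2}\leq\la^2 t\|\field\cdot\nabla f\|$, while Assumption~\ref{ass: analytic dispersion} yields $|\Phi_k(t;\field)-t\ve(k)|\leq C\la^2|\field|t^2$, whence $|e^{-i\Phi}-e^{-it\ve}|\leq C\la^2|\field|t^2$. Multiplying by the prefactor $\la^2$ and integrating against $|\hat\psi(t)|\leq Ce^{-g_\res t}$ produces the finite moments $\int_0^\infty t^n e^{-g_\res t}\,\d t$ for $n=1,2$, giving $\|\delta M f\|\leq\la^2 C(\|f\|+\|\field\cdot\nabla f\|)$. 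Relative boundedness with respect to $M$ then follows because $\|\field\cdot\nabla f\|\leq\|Mf\|+C(1+|\ka|)\|f\|$, the remainder $M-(-\field\cdot\nabla)=i\ka\cdot\nabla\ve+G+L$ being a bounded operator of norm $C(1+|\ka|)$.

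For the displayed identity I decompose the left-hand side as $(\caL_\sys)_{\la^2\ka}+K(\field)+(\caR_{\mathrm{ex}}(0))_{\la^2\ka}$; the last term is $\caO(\la^4)$ by Lemma~\ref{lem: pseudoresolvent neu}, and $\la^2\delta M=\la^2(K(\field)-K(0))=\caO(\la^4)$ is also absorbable into the remainder, so the task reduces to showing
\[
(\caL_\sys)_{\la^2\ka}+K(\field)=\la^2 M^{\ka,\field}+\caO(\la^4(1+|\ka|))
\]
at fixed bounded~$\field$. The Liouvillian on fiber~$\la^2\ka$ acts as multiplication by $-i(\ve(k+\la^2\ka/2)-\ve(k-\la^2\ka/2))$ together with $-\la^2\field\cdot\nabla_k$, which Taylor-expands to $\la^2(i\ka\cdot\nabla\ve-\field\cdot\nabla)+\caO(\la^6|\ka|^3)$ by analyticity of $\ve$. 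For $K(\field)$, setting $\la=0$ inside the integrand while retaining the prefactor $\la^2$ collapses the phases to $\pm it(\ve(k')-\ve(k))$ and eliminates the shift in~$f$; summing over the four $\vs_1\vs_2$ choices and performing the change of variable $t\mapsto-t$ to extend $\int_0^\infty\d t$ to $\int_\R\d t$, combined with the spectral definition $2\pi\psi(\omega)=\int_\R\hat\psi(t)e^{it\omega}\d t$, yields precisely $\la^2(G+L)$, the imaginary Lamb-shift contributions cancelling pairwise between the like-sided and unlike-sided pairings. The first-order Taylor remainders in $\la^2\ka$ and $\la^2\field$ inside the integrand contribute $\caO(\la^4(|\ka|+|\field|))$, the time moments again controlled by $\hat\psi$-decay.

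The chief obstacle is the final bookkeeping step: one must verify that the four $\vs_1\vs_2$ terms of $K(\field)$ combine \emph{exactly} into $\la^2(G+L)$ at leading order, with all Lamb-shift type imaginary pieces cancelling via the extension of the $t$-integral to~$\R$, and that no hidden cross-term of order $\la^4|\ka||\field|$ escapes the $\caO(\la^4(1+|\ka|))$ tolerance at fixed bounded~$\field$. Throughout the argument, Assumption~\ref{ass: exponential decay}'s exponential decay of $\hat\psi$ is what turns the polynomial-in-$t$ Taylor remainders into finite constants, and analyticity of $\ve$ (Assumption~\ref{ass: analytic dispersion}) is what guarantees that the Taylor expansions in $\la^2\ka,\la^2\field$ are uniformly controlled in $k\in\tor$.
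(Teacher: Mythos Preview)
Your arguments for the uniform bound $\|\delta M\|\leq C$ and the relative bound \eqref{eq: relative bound} are correct and essentially identical to the paper's: the exponential decay of $\hat\psi$ converts the factors of $t$ and $t^2$ produced by the phase and shift estimates into finite constants.

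However, your argument for \eqref{eq: deltam rules} contains a genuine gap. You reduce the task to showing $(\caL_\sys)_{\la^2\ka}+K(\field)=\la^2 M+\caO(\la^4(1+|\ka|))$ in operator norm, which amounts to $K(\field)-K(0)=\caO(\la^4)$ in norm (once $K(0)=\la^2(G+L)+\caO(\la^4|\ka|)$ is known). But this is false. Your Taylor-remainder bound on the shift, $\|f(\cdot-\la^2 t\field)-f\|\leq\la^2 t\|\field\cdot\nabla f\|$, controls $\delta M$ only \emph{relative to} $\field\cdot\nabla$, not in operator norm; since $\field\cdot\nabla$ is unbounded, no $\caO(\la^4)$ norm bound follows. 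Concretely, in the $\links\links$ and $\rechts\rechts$ terms the translation acts on the outer variable $k$ and cannot be absorbed by a change of integration variable (as it can in the $\links\rechts,\rechts\links$ terms). The paper's remark immediately after the proof makes exactly this point: for constant $\ve$ one gets $\psi(\ii\la^2\field\cdot\nabla)-\psi(0)$, whose operator norm is $\sup_{n\in\bbZ^d}|\psi(\la^2\field\cdot n)-\psi(0)|$, which does \emph{not} tend to zero since arbitrarily large $n$ contribute. So your claimed ``first-order Taylor remainders in $\la^2\field$ contribute $\caO(\la^4|\field|)$'' is simply wrong for these two pieces.

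This is precisely why \eqref{eq: deltam rules} is stated with $\widetilde M$ rather than $M$: the $\field$-dependence of $\caM$ cannot be pushed into the $\caO(\la^4)$ error and must instead be \emph{kept} as part of the answer via $\delta M$. The paper's route is to compute only at $\field=0$ inside $\caM$, obtaining $\la^{-2}(\caM(0,\field{=}0))_{\la^2\ka}=G+L+\caO(\la^2\ka)$ (no shifts present), and then write $K(\field)=K(0)+\delta M$ tautologically, so that the left side of \eqref{eq: deltam rules} becomes $\la^2 M+\la^2\delta M+\caO(\la^4(1+|\ka|))=\la^2\widetilde M+\caO(\la^4(1+|\ka|))$. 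Your step ``$\la^2\delta M=\caO(\la^4)$ is absorbable'' should thus be replaced by keeping $\la^2\delta M$ on the right side as part of $\la^2\widetilde M$; then only the $\field=0$ computation of $K(0)$ is required, and no operator-norm smallness of $\delta M$ is ever invoked.
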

\begin{proof}
By inspection of the expressions for $\caM$ in Section~\ref{sec: calculation
ladders}, we get
\begin{equation*}
\norm K(\field)f - K(0)  f \norm \leq  C\int \d t \,\str\psi(t) \str   \left(
\min(\str \la^2 t \field \str, 1)\norm f \norm + \norm f(\cdot+ \la^2 t
\field)-f(\cdot) \norm \right)\,,
\end{equation*}
and \eqref{eq: relative bound} follows from the exponential decay of $\psi(t)$.  
To prove \eqref{eq: deltam rules} we verify that
\begin{align*}
\la^{-2}(\caL_\sys)_{\la^2 \ka} & =\ii   \ka\cdot \nabla \varepsilon -  \field
\cdot \nabla +  \caO(\la^2 \ka)\,,  \\
\la^{-2}(\caM(z=0, \field=0))_{\la^2 \ka}  &=  G+L  +\caO(\la^2 \ka)\,,
\end{align*}
by explicit computation. (The $\links$-$\links$ and $\rechts$-$\rechts$ terms in
$\caM$ give rise to $L$, the mixed ones to $G$.)  

\end{proof}

We conclude with the remark that, presumably, Lemma \ref{lem: trivia of delta m} cannot be improved to 
\beq \label{eq: far desire}
\lim_{\field \to 0}\norm \delta M  \norm=0\,.
\eeq 
Such an estimate can easily be obtained for the $\links$-$\rechts$ and $\rechts$-$\links$  terms but not for $\links$-$\links$ and $\rechts$-$\rechts$. 
To get a feeling for this, let us consider a constant dispersion law, $\varepsilon(k)=\varepsilon(0)$  (which would actually violate our assumptions, but this should not matter here). Then $\Phi_k(t)=t\varepsilon(0)$, and, by spectral calculus,
\beq
(\mathcal{M}^{\links\links}(0)_p+\mathcal{M}^{\rechts\rechts}(0)_p)f (k)=- (2 \pi)^d \lambda^2\int_{-\infty}^{\infty}\dd
t\,\hat\psi(t)\
f(k -\lambda^2t \field)  =  - (2 \pi)^d \lambda^2  \psi(\ii \field \cdot \nabla) f (k)\,.
\eeq
Obviously $\lim_{\field \to 0}\norm \psi(\ii \field \cdot \nabla) - \psi(0) \norm \to 0$ holds only if the function $\psi$ is constant.

\subsection{Analysis of $\widetilde M$} \label{sec: analysis tilde m}
In this subsection, we show that, in a small open neighborhood of
the origin, $\widetilde M$ has an isolated simple eigenvalue $\la^2$-close to
that of $M$. We recall the definition of the gap $g\kin(\field)$ (see Section~\ref{sec:
kinetic}) and define~$B_r$ to be the disk $B_r \deq\left\{ z \in \bbC\,:\,\str z \str \leq r
\right\}$. 

\begin{lemma}\label{lem: spectrum of tilde m}
 There is a constant $r>0$, $r \propto g\kin(0)$, such that, inside the ball $B_r$, $M$ and $\widetilde M=M+\delta M$  have unique simple
eigenvalues $\eig\kin\equiv\eig\kin(\ka,\field)$ and $\eig\kini\equiv
\eig\kini(\la,\ka,\field) $, respectively, with $\str
\eig\kini-\eig_{M^{\phantom{'}}}\str =\caO(\la^2)$.  Moreover, for $z \in B_r$, 
\begin{equation*}
\frac{1}{z-\widetilde M } = \frac{1}{z-u\kini} P\kini + \caO(z^0)\,.
\end{equation*}
\end{lemma}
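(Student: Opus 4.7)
The plan is to treat $\widetilde M = M + \delta M$ as a (non-analytic) perturbation of $M = M^{\ka,\field}$ via the Riesz-projection formalism. The conceptual obstacle is that $\|\delta M\|$ is only bounded, not small, so ordinary analytic perturbation theory in $\la$ does not apply; the entire argument hinges on the relative bound of Lemma~\ref{lem: trivia of delta m}, which upgrades to $\|\delta M(z-M)^{-1}\|=\caO(\la^2)$ on a suitable contour enclosing the isolated eigenvalue of $M$.

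I first fix a radius $r>0$, proportional to $g\kin(0)$ and independent of $(\la,\ka,\field)$ in a neighborhood of the origin, so small that by Lemma~\ref{lem: kinetic}$(v)$, Lemma~\ref{lem: resolventbound} and the analytic perturbation theory in the bounded parameter $\ka$ discussed in Section~\ref{sec: asymptotics of semigroup}, $M$ has a unique simple eigenvalue $u\kin\in B_r$ and satisfies $\|(z-M)^{-1}\|\le C/r$ on $\partial B_r$. Using the explicit form $M=\ii\ka\cdot\nabla\ve-\field\cdot\nabla+G+L$, one rewrites
\[
\field\cdot\nabla\,(z-M)^{-1}\;=\;1-z(z-M)^{-1}+(\ii\ka\cdot\nabla\ve+G+L)(z-M)^{-1},
\]
so $\|\field\cdot\nabla(z-M)^{-1}\|\le C/r$ on $\partial B_r$, and Lemma~\ref{lem: trivia of delta m} then gives $\|\delta M(z-M)^{-1}\|\le C\la^2/r$ there. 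For $\la$ small enough a Neumann series yields
\[
(z-\widetilde M)^{-1}\;=\;(z-M)^{-1}\bigl(1-\delta M(z-M)^{-1}\bigr)^{-1},
\]
uniformly bounded on $\partial B_r$.

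With the Riesz projections $P\kin,P\kini$ defined as $(2\pi\ii)^{-1}$ times the counterclockwise contour integrals of $(z-M)^{-1}$ and $(z-\widetilde M)^{-1}$ over $\partial B_r$, the second resolvent identity $(z-\widetilde M)^{-1}-(z-M)^{-1}=(z-M)^{-1}\delta M(z-\widetilde M)^{-1}$ together with the bounds above yields $\|P\kini-P\kin\|=\caO(\la^2)$. Since $P\kin$ has rank one by Section~\ref{sec: asymptotics of semigroup}, so does $P\kini$ for $\la$ small enough, and consequently $\widetilde M$ carries a unique simple eigenvalue $u\kini\in B_r$. Applying the analogous integral identity to $z(z-\,\cdot\,)^{-1}$ gives $\|u\kini P\kini-u\kin P\kin\|=\caO(\la^2)$, and pairing with the right and the (uniformly bounded) left eigenvectors of $M$ at $u\kin$ yields $|u\kini-u\kin|=\caO(\la^2)$. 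Finally, since $(z-\widetilde M)^{-1}$ has in $B_r$ only the simple pole at $u\kini$, its Laurent expansion reads $(z-\widetilde M)^{-1}=(z-u\kini)^{-1}P\kini+S(z)$ with $S$ holomorphic on $B_r$; the Cauchy integral formula, together with the uniform bound of $(z-\widetilde M)^{-1}$ on $\partial B_r$, gives $\|S(z)\|=\caO(1)$ on, say, $B_{r/2}$, which is the asserted $\caO(z^0)$ statement.
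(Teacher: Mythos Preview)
Your proof is correct and follows essentially the same approach as the paper's: both use the relative bound of Lemma~\ref{lem: trivia of delta m} to show $\|\delta M(z-M)^{-1}\|=\caO(\la^2)$ on a contour of radius $\propto g\kin(0)$, then run Neumann series and Riesz-projection perturbation theory. Your argument is simply more explicit---you spell out the algebraic identity $\field\cdot\nabla(z-M)^{-1}=1-z(z-M)^{-1}+(\ii\ka\cdot\nabla\ve+G+L)(z-M)^{-1}$ that underlies the paper's inequality \eqref{eq: relative resolvent bound}, and you carry the Riesz-projection comparison through in detail, whereas the paper just refers to \cite{katoperturbation}.
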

\begin{proof}
For $M$, this has already been proven at $\ka=0$ in Section~\ref{sec: kinetic} and
extended to $\ka \neq 0$ by using perturbation theory of isolated eigenvalues
(Section~\ref{sec: asymptotics of semigroup}). 
Since $\delta M$ is relatively bounded w.r.t.\ $M$, we can again use
perturbation theory to prove the claim for $\widetilde M$.
Estimating the resolvent $(z-\widetilde M)^{-1}$ by using a Neumann series expansion in $\delta
M$ and applying \eqref{eq: relative bound}, we obtain that
\begin{align} \label{eq: relative resolvent bound}
\norm \delta M (z-M)^{-1} \norm &  \leq   \la^2 C\left( 1+ \str z\str +  \norm
(z-M)^{-1} \norm \right) \,,
\end{align}
which can be used to complete the proof of the lemma. (We refer the reader to \cite{katoperturbation}
for details on the perturbation theory for isolated eigenvalues.)

\end{proof}

Although this will not be used in our analysis, it is worthwhile pointing out an important
difference between the spectral analysis of $M$ and that of $\widetilde M$:
Thanks to the resolvent bound~\eqref{lem: resolventbound}, we know that the spectrum of
$M$, apart from the eigenvalue $\eig$, is bounded away from the real axis. In the
nomenclature of Section~\ref{sec: asymptotics of semigroup}, $g\kin(\ka,\field) >0$.
In fact, by Lemma~\ref{lem: kinetic}, item $iv$, we have an explicit bound, uniform in
$\field$, on the real part of eigenvalues with large imaginary part. For $\widetilde M$,
analogous statements do \textit{not} hold, because  $\delta M$ is not small in
norm, but only relative to $M$; see \eqref{eq: relative resolvent bound}.
 To guarantee that $z \notin \si(\widetilde M)$, the right-hand side of~\eqref{eq:
relative resolvent bound} should be strictly smaller than one. Clearly, for any
fixed $\la$, this is not the case when $\im z \to \pm\infty$.

\section{Analysis of $\caR(z)$ around $z=0$} \label{sec: analysis of resolvent
around zero}
In this section, we analytically continue the operator $\caR(z)$ (see \eqref{11eq}), a priori only defined for $\re
z>0$, to the region $\{z\in\C\,:\, |z|< \lambda^2 r\}$, for some
$r>0$ and $\la$ sufficiently small. This is accomplished by applying
perturbation theory to the fiber operators  $(\caR(z))_{\lambda^{2}\kappa}$. The guiding
idea is that  $(\caR(z))_{\la^2 \ka}$ is a small perturbation of $(z-\lambda^2
M^\ka)^{-1}$, where $M^{\ka}$ has been analyzed in Section~\ref{sec: kinetic}.
However, it turns out to be more convenient to replace $M$ by the operator $\widetilde M$
introduced in Section~\ref{sec:
analysis ladders}. In Section~\ref{sec: perturbation kinetic limit}, we implement the perturbation
theory developed on the basis of  Lemma~\ref{lem: pseudoresolvent neu}.  The
small parameters are the coupling constant $\la$, the (scaled) fiber
momentum $\ka$ and the field $\field$. All these three parameters must be assumed
to be sufficiently small throughout our analysis, and we do not repeat this assumption in every step.

\subsection{Perturbation around the kinetic limit}\label{sec: perturbation
kinetic limit}
 Recall the definition of
$\mathcal{R}(z)$ in equation~\eqref{11eq}.
\begin{align*}
\mathcal{R}(z)=\int_0^{\infty}\dd t\,\e{-zt}\,\Matz\,.
\end{align*}
 The main results of this subsection state that the operator 
 $(\caR(z))_{\la^2\ka}$ has a unique simple pole in a neighborhood of $z=0$, 
 whose residue, $P\equiv P^{\lambda,\ka,\field}$, is a rank-one operator (see
Lemma~\ref{lemma: the pole is simple}) with the property that, in the fiber
indexed by $\kappa=0$,
\begin{align}
 P^{\la,\ka=0,\field}=\ket\zeta\bra1\,,\qquad\textrm{ with
}\qquad\|\zeta-\zeta_M\|_{\mathrm{L}^2(\tor)}=\caO(\lambda^2)\,,
\end{align}
where $\zeta_M$ is the invariant state of the generator, $M$, of the linear
Boltzmann evolution; see Section~\ref{sec: kinetic}. This result is stated in
Lemma~\ref{cor:laurent}. Moreover, we show that $P^{\la,\ka,\field}$ is an
analytic function of $\kappa$ and a regular function of $\field$; see
Lemma~\ref{lem: asym perturbation}. On an intuitive level, this means that the
long-time dynamics of $(\caZ_{[0,t]})_{\la^2\ka}$, is dominated by the linear
Boltzmann evolution $\e{tM}$. This statement is formalized in Section~\ref{sec:
equilibrium case}, for $\field=0$, and in Section~\ref{sec: proof of main results}.2,
for~$\field\not=0$.

To start with, we define an operator, $S$, acting on $\mathrm{L}^2(\tor)$ by
\begin{align}\label{eq: definition of S}
 S \equiv &   S(z,\field,\la, \ka) \deq  (\caL_\sys+ \caM(z) +
\caR_{\mathrm{ex}}(z))_{\la^2 \ka}\,.
\end{align}
Note that $(\mathcal{R}(z))_{\la^2 \ka}= (z-S)^{-1}$ (whenever both sides are well-defined),  
and that $S(z)$ is a closed operator on~$\mathrm{L}^2(\tor)$.
It  is bounded except for the term $ \field\cdot \nabla$  that originates from
$\caL_\sys$. 

For simplicity, we often abbreviate $S(z,\field,\la,\ka)$ by, for example, $S(z)$, when we
consider the operator-valued function \mbox{$z\mapsto S(z)$}, with the other variables kept
fixed. We use similar shorthand notation for $u_M\equiv
u_M(\lambda,\kappa,\field)$, $P\equiv P^{\lambda,\kappa,\field}$, etc.\ in this
and the remaining sections.

Let $\caD \subset \mathrm{L}^2(\bbT^d)$ be the dense subspace of real-analytic
functions on $\bbT^d$; cf.\ Section~\ref{sec: kinetic}.  Recall the constant
$k_z$ from Lemma~\ref{lem: pseudoresolvent neu}. 

\begin{lemma}\label{lem: prop for asymptotic}
$ $
\begin{itemize}
 \item[$i.$] $\caD$ is a core for $S$ and $S \caD \subset \caD$. For all $ z 
\in \bbC$ satisfying $\re z \geq -k_z $ and such that $ (z-S(z))^{-1}$ exists (i.e., as a bounded operator), we have that $(z-S(z))^{-1} \caD \subset
\caD$. 
 \item[$ii.$]  The differences $S(z)-S(z=0)$ and $S(\ka)-S(\ka=0)$ are bounded
operators, and they are analytic in the variables $\ka,z$
 in the region
\mbox{$\mathrm{Re}z>-k_z$} and $|\kappa|<k_\theta$.

\end{itemize}
\end{lemma}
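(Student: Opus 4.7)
\emph{Structure, and part (ii).} The plan is to isolate the unique unbounded ingredient of $S$ and to treat the remainder as an analytic bounded perturbation. A direct fiber computation yields $(\caL_\sys)_{\la^2\ka} = M_{V_\ka} - \ii\la^2\field\cdot\nabla$ with $V_\ka(k)\deq \varepsilon(k+\la^2\ka/2)-\varepsilon(k-\la^2\ka/2)$ (bounded by Assumption~\ref{ass: analytic dispersion}), while the contributions from $\caM(z)$ and $\caR_{\mathrm{ex}}(z)$ are bounded by Lemma~\ref{lem: pseudoresolvent neu}. Writing $S = -\ii\la^2\field\cdot\nabla + B(z,\ka,\la,\field)$ with $B$ bounded, part (ii) follows almost immediately: the $z$-difference involves only $\caM$ and $\caR_{\mathrm{ex}}$, and their analyticity and boundedness in $z$ on $\{\re z>-k_z\}$ are precisely the contents of Lemma~\ref{lem: pseudoresolvent neu}; the $\ka$-difference acquires from $\caL_\sys$ the bounded operator $M_{V_\ka}-M_{V_0}$, which is analytic in $\ka$ by analyticity of $\varepsilon$, and from $\caM,\caR_{\mathrm{ex}}$ the fiber differences, which are analytic in the fiber parameter $\la^2\ka$ throughout $\bbV_{k_\theta}$ by combining the $\caJ_\theta$-bounds of Lemma~\ref{lem: pseudoresolvent neu} with the Paley-Wiener statement of Lemma~\ref{lemma: fibers}.

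\emph{Part (i), invariance and core.} I would check that $\caD$ is dense in $\mathrm{L}^2(\tor)$ (it contains the trigonometric polynomials) and invariant under the unitary translation group $\{\e{-t\field\cdot\nabla}\}_{t\in\bbR}$, so by a standard semigroup criterion it is a core for the generator $-\ii\la^2\field\cdot\nabla$, and hence also for $S$ after the bounded perturbation $B$. The invariance $S\caD\subset\caD$ is verified termwise: $-\ii\la^2\field\cdot\nabla$ and $M_{V_\ka}$ preserve analyticity by Assumption~\ref{ass: analytic dispersion}; for each of the four pieces of $(\caM(z))_{\la^2\ka}$ listed in Subsection~\ref{sec: calculation ladders}, the phase $\Phi_{k\pm p/2}(t)=\int_0^t\varepsilon(k\pm p/2-\la^2\field s)\dd s$ is analytic in $k$ in the strip where $\varepsilon$ is, and the uniform-in-$t$ exponential decay of $\hat\psi$ coming from Assumption~\ref{ass: exponential decay} allows interchanging analytic continuation with the $t$- and $k'$-integrations, so the image is again real-analytic. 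The analogous representation for $\caR_{\mathrm{ex}}(z)$ developed in~\cite{paper2} admits exactly the same argument.

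\emph{Part (i), resolvent preservation, and main obstacle.} The genuinely nontrivial statement is $(z-S(z))^{-1}\caD\subset\caD$; here I would run a translation-analyticity argument. For $\beta\in\bbR^d$ of small norm, introduce the weighted Hilbert space $\caH_\beta\deq\{f\in\mathrm{L}^2(\tor):\sum_n \e{2n\cdot\beta}|\hat f(n)|^2<\infty\}$; by Paley-Wiener, $\caD=\bigcup_{r>0}\bigcap_{|\beta|<r}\caH_\beta$, and $T_{\ii\beta}f(k)\deq f(k-\ii\beta)$ realises a unitary isomorphism $\caH_\beta\to\mathrm{L}^2(\tor)$. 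The conjugate $S_\beta\deq T_{\ii\beta}S T_{-\ii\beta}$ leaves $-\ii\la^2\field\cdot\nabla$ invariant and merely shifts the analytic arguments of $V_\ka$ and of the kernels of $\caM,\caR_{\mathrm{ex}}$ by $\ii\beta$; Assumptions~\ref{ass: analytic dispersion} and~\ref{ass: exponential decay} then deliver $\|S_\beta-S\|=\caO(|\beta|)$ as a bounded operator on $\mathrm{L}^2(\tor)$. Whenever $(z-S(z))^{-1}$ is bounded on $\mathrm{L}^2(\tor)$, a Neumann expansion therefore yields $(z-S_\beta(z))^{-1}$ bounded on $\mathrm{L}^2(\tor)$ for $|\beta|$ sufficiently small, or equivalently $(z-S(z))^{-1}$ bounded on $\caH_\beta$. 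For $f\in\caD$ of analyticity strip $\delta_f$ and $|\beta|$ less than both $\delta_f$ and the perturbation radius, $f\in\caH_\beta$; the unique $\mathrm{L}^2$-solution of $(z-S(z))g=f$ must coincide with the unique $\caH_\beta$-solution, placing $g\in\caH_\beta$. Varying $\beta$ in a small real ball gives $g\in\caD$. The only genuine obstacle is exactly this step: preservation of $\caD$ by $(z-S(z))^{-1}$ does not follow formally from $S\caD\subset\caD$, and the complex translation is the device that converts the question into a bounded perturbation on $\mathrm{L}^2(\tor)$ which Neumann series dispatches.
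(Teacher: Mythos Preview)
Your proposal is correct and follows essentially the same route as the paper: isolate the sole unbounded term $-\ii\la^2\field\cdot\nabla$ (which is independent of $z,\ka$, settling part~(ii)), establish the key complex-translation bound $\|\e{\gamma\cdot\nabla}S\e{-\gamma\cdot\nabla}-S\|=\caO(\gamma)$, feed it into a Neumann series for the conjugated resolvent, and conclude by Paley--Wiener. The only cosmetic difference is that the paper obtains $S\caD\subset\caD$ directly from the conjugation bound rather than by your termwise inspection (in particular avoiding any appeal to the explicit structure of $\caR_{\mathrm{ex}}$ from \cite{paper2}, since the $\caJ_\theta$-bound of Lemma~\ref{lem: pseudoresolvent neu} already suffices).
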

\begin{proof}
$\caD$ is a core for $S$ because it is a core for $\field\cdot \nabla$. Further,
we first establish that, for $\ga \in \bbC^d$ sufficiently small,
  \begin{equation} \label{eq: stability of sz}
   \e{\ga\cdot \nabla} S\, \e{-\ga \cdot\nabla}-S = \caO(\ga)\,. 
\end{equation}
Note that for $\theta\in\C^{2d}$ sufficiently small,
$ \caJ_\theta \caA   \caJ_{-\theta}-\caA=\caO(\theta)$, where $\caA=\caL_\sys
+\caM(z) + \caR_{\mathrm{ex}}(z) $. This follows from the analyticity of the
dispersion law $\ve$ in $\caL_\sys$ and from Lemma~\ref{lem: pseudoresolvent neu} for
$\caM(z) + \caR_{\mathrm{ex}}(z)$. The bound \eqref{eq: stability of sz} is then
obtained by restricting to a fiber. 
A Neumann series expansion of $(z-S(z))^{-1}$, using \eqref{eq: stability of sz}, for
some sufficiently small $\ga$ (depending on $z$), yields boundedness of 
$    \e{\ga \cdot\nabla} (z-S(z))^{-1}  \e{-\ga \cdot\nabla}$. Together with
\eqref{eq: stability of sz}, this implies part $i$, after an application of the
Paley-Wiener theorem.

To prove part $ii$, it suffices to observe that the term $-\field \cdot \nabla$ in $S$ is
independent of $z $ and $\ka$.

\end{proof}
Next, we argue that the condition $z\in\sigma(S(z))$ has a unique solution $z\adj$ in a
neighborhood of $z=0$:
\begin{lemma}
Fix some $r>0$ sufficiently small, e.g., $r= g\kin(0)/4$. Then there is a unique
$z=z^{*}(\la,\ka,\field)$ in $B_{\la^2 r}$ such that $z-S(z)$ is not invertible,
i.e., such that $z \in \sigma(S(z))$. This unique $z^*$ is an isolated simple
eigenvalue of $S(z^*)$.
\end{lemma}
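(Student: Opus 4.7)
My plan is to reduce the condition $z\in\sigma(S(z))$ to a scalar equation via the Feshbach/Schur complement technique, using as reference the rank-one spectral projection $P\deq P\kini$ of the eigenvalue $u\kini$ of $\widetilde M$ (which, by Lemma~\ref{lem: spectrum of tilde m}, lies inside $B_r$), and then to count zeros by Rouch\'e's theorem. The quantitative input I need is the uniform smallness bound
\[
\bigl\|S(z)-\lambda^2\widetilde M\bigr\|=\caO(\lambda^4),\qquad z\in B_{\lambda^2 r}.
\]
To obtain this I split $S(z)-\lambda^2\widetilde M=\bigl[(\caL_\sys+\caM(0)+\caR_{\mathrm{ex}}(0))_{\lambda^2\ka}-\lambda^2\widetilde M\bigr]+\bigl(\caM(z)-\caM(0)+\caR_{\mathrm{ex}}(z)-\caR_{\mathrm{ex}}(0)\bigr)_{\lambda^2\ka}$. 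The first bracket is $\caO(\lambda^4(1+|\ka|))$ by~\eqref{eq: deltam rules}. For the second, Lemma~\ref{lem: pseudoresolvent neu} provides $\|\caM(z)\|=\caO(\la^2)$, $\|\caR_{\mathrm{ex}}(z)\|=\caO(\la^4)$ together with analyticity in $z$, so Cauchy's formula yields $\caO(\la^2|z|)=\caO(\la^4)$ on $B_{\la^2 r}$. The unbounded advection $-\field\cdot\nabla$ sitting inside $\caL_\sys$ is hidden inside $\lambda^2 M\subset\lambda^2\widetilde M$ and therefore cancels in the difference.

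With $\bar P=\lone-P$, the next step is to verify that $\bar P(z-S(z))\bar P$ is invertible on $\Ran\bar P$. Since $P$ is a spectral projection of $\widetilde M$, the subspace $\Ran\bar P$ reduces $\widetilde M$, and Lemma~\ref{lem: spectrum of tilde m} gives $\|(\bar P(z/\la^2-\widetilde M)\bar P)^{-1}\|=\caO(1)$ uniformly for $z\in B_{\la^2 r}$, hence $\|(\bar P(z-\la^2\widetilde M)\bar P)^{-1}\|=\caO(\la^{-2})$. Combined with the estimate above, a Neumann series yields $\|(\bar P(z-S(z))\bar P)^{-1}\|=\caO(\la^{-2})$. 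The Schur complement then asserts that $z\in\sigma(S(z))$ holds iff the scalar Feshbach function
\[
f(z)\deq \bigl[P(z-S(z))P-PS(z)\bar P\bigl(\bar P(z-S(z))\bar P\bigr)^{-1}\bar P S(z)P\bigr]\Big|_{\Ran P}
\]
vanishes. Comparing with the unperturbed version $f_0(z)=z-\la^2 u\kini$ (obtained by replacing $S(z)$ by $\la^2\widetilde M$, in which case the cross terms vanish since $\bar P\widetilde M P=0=P\widetilde M\bar P$), the bound $\bar P S(z)P=\bar P(S(z)-\la^2\widetilde M)P=\caO(\la^4)$ (and its adjoint analogue) gives $|f(z)-f_0(z)|=\caO(\la^4)$ on $B_{\la^2 r}$, while on the boundary $|z|=\la^2 r$ one has $|f_0(z)|\ge\la^2(r-|u\kini|)\ge\la^2 r/2$ provided $\ka,\field$ are so small that $|u\kini|\le r/2$ (which is legitimate since $u\kini=\caO(|\ka|+|\field|+\la^2)$). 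For $\la$ small, $|f-f_0|<|f_0|$ on the circle, and Rouch\'e's theorem produces a unique simple zero $z^*\in B_{\la^2 r}$.

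To conclude that $z^*$ is an isolated simple eigenvalue of the fixed operator $S(z^*)$, I will run a second Feshbach reduction, now for $w-S(z^*)$ with $w$ varying in a small neighborhood of $z^*$. The resulting scalar function $g(w)$ is analytic near $z^*$, satisfies $g(z^*)=f(z^*)=0$, and by the same estimates $g'(z^*)=1+\caO(\la^2)\neq0$; therefore $z^*$ is a simple pole of $w\mapsto(w-S(z^*))^{-1}$ and the spectral projection extracted from a small contour around $z^*$ has rank one.

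The main obstacle will be establishing the operator-norm estimate $\|S(z)-\la^2\widetilde M\|=\caO(\la^4)$ despite the unbounded $\field\cdot\nabla$ term in $S(z)$. This is handled exactly by Lemma~\ref{lem: trivia of delta m}: the unbounded piece is absorbed into the reference operator $\la^2\widetilde M$ (where $M$ already contains $-\field\cdot\nabla$), and the residual field-dependent pieces enter only through $\delta M$, which is bounded and, crucially, relatively bounded with constant $\caO(\la^2)$. Once this estimate is in hand, all remaining manipulations are standard perturbation theory.
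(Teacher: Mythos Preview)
Your proof is correct, and it takes a genuinely different route from the paper's.

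The paper first writes $S(z)=\la^2\widetilde M+A(z,\la)$ with $\|A(z,\la)\|=\caO(\la^4)$ on $B_{\la^2 r}$ (the same estimate you establish), then applies ordinary analytic perturbation theory for isolated eigenvalues to obtain, for each fixed $z$, a simple eigenvalue $s(z)$ of $S(z)$ near $\la^2 u\kini$. Uniqueness of the solution to $z=s(z)$ is obtained by a contraction estimate $|s(z_1)-s(z_2)|\le C\la^2|z_1-z_2|$ coming from $\|\partial_z A\|=\caO(\la^2)$; existence is shown by contradiction via the contour identity $\int_{\caC}\bigl[(z-S(z))^{-1}-(z-\la^2\widetilde M)^{-1}\bigr]\dd z=\caO(\la^2)$, which would be incompatible with the left integral vanishing while the right one produces $2\pi\ii P\kini$.

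Your approach instead packages everything into the scalar Feshbach function $f(z)$ and applies Rouch\'e once, which yields existence and uniqueness simultaneously and automatically gives the order of the zero. The paper's approach has the advantage of producing the explicit first-order formula $s(z)=\la^2 u\kini+\Tr\bigl[P\kini A(z,\la)\bigr]+\caO(\|A\|^2)$, which is convenient for later computations; your approach gives a slightly cleaner structural argument and makes the rank-one property of the residue immediate from the one-dimensionality of $\Ran P\kini$. Both proofs rest on the same quantitative input, namely $\|S(z)-\la^2\widetilde M\|=\caO(\la^4)$ and the resolvent bound on $\bar P(z-\la^2\widetilde M)^{-1}\bar P$ supplied by Lemma~\ref{lem: spectrum of tilde m}.
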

\begin{proof}
We write
\begin{equation*}
S(z)= \la^2 \widetilde M   +  \la^2 (\caM(z)-\caM(0))_{\la^2 \ka}  + 
(\caR_{\mathrm{ex}}(z))_{\la^2 \ka} =: \la^2\widetilde M+  A(z,\la)\,.
\end{equation*}
Recall that $  \widetilde M $ has a unique, simple eigenvalue $u_{\kini}$ in
$B_{r}$, for some $r>0$.   Since \[\norm A(z,\la) \norm \leq C (\la^4+\la^2 \str
z\str) \leq C \la^4\,,\] for  $z \in B_{\la^2r}$, an application of spectral
perturbation theory shows that $S(z)$ has a unique simple eigenvalue~$s(z)$ in (to be concrete)
 the disk $B_{3\la^2 r/4}$.  This eigenvalue is given by
\begin{equation} \label{eq: explicit eigenvalue}
s(z)  =  \la^2 u_{\kini}+  \Tr \left[  P_{\kini} A(z,\la) \right]+ \caO(\norm
A(z,\la) \norm^2)\,,
\end{equation}
where $P\kini$ is the spectral projection of $\widetilde M$ associated with the
eigenvalue $u\kini$ and the trace is over~$\mathrm{L}^2(\tor)$.

Next, we show that there is a unique $z\in B_r$ such that $z\in\sigma(S(z))$. First, we show uniqueness of $z$: Assume that there are two solutions $z_1,z_2$ of  $z \in \si(S(z))$. Then 
 \begin{equation*}
\str  z_1-z_2 \str = \str s(z_1)-s(z_2) \str  \leq   C  \str z_1-z_2 \str
\sup_{z \in B_{\la^2 r }}\left\| \frac{\partial}{\partial z}A(z,\la)\right\|
\leq  C\la^2 \str z_1-z_2 \str\,,
 \end{equation*}
which is a contradiction, for $\la$ small enough. From \eqref{eq: explicit
eigenvalue}, we also get  $\str s(z) - \la^2 u_{\kini} \str \leq C \str \la
\str^4$. 

Second, we show that there is at least one $z\in B_r$ such that $z\in\sigma(S(z)) $: Assume there is no solution of $z \in \si(S(z))$. 
By taking $\ka$ sufficiently small, we can make sure that $ u_{\kini}$ lies in
the ball $B_{r/4}$.
 Then, denoting by $\caC$ the positively oriented integration contour 
\begin{equation}\label{eq: the integation contour}\caC\deq\{ z\,:\, \str z \str
= \la^2r/2\}\,,
\end{equation} we note that
\begin{equation*}
\sup_{z \in \caC}\left\norm   \frac{1}{z- \la^2\widetilde M} \right\norm \leq 
C\la^{-2}, \qquad    \sup_{z \in \caC}\left \norm   \frac{1}{z- S(z)}
\right\norm \leq C\la^{-2}\,.
\end{equation*}
The first bound is stated in Lemma~\ref{lem: spectrum of tilde m}. The second
bound follows from the first one by Neumann series expansion, using the bound on
$A(z,\la)$.   Next, we note the identity
\begin{equation*}
 \frac{1}{z-S(z)}- \frac{1}{z- \la^2\widetilde M}=     \frac{1}{z-S(z)} A(z,\la)
\frac{1}{z- \la^2\widetilde M}\,
\end{equation*}
and integrate both sides along the contour $\caC$. The right-hand side is
bounded in norm by $C$, thus, after integration, it is bounded by $C \la^2$. On the left-hand
side, the contour integral of $\frac{1}{z- \la^2\widetilde M}$ yields the
spectral projection~$P_{\kini} $. We therefore arrive at a contradiction with the
assumption that $ \frac{1}{z-S(z)}$ has no singular points. 
\end{proof}

\begin{lemma}\label{lemma: the pole is simple}
The pole at $z^*$ of $z \mapsto (z-S(z))^{-1}$  is simple and its residue, $P$, is
a rank-one operator.
\end{lemma}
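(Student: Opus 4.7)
The plan is to treat $(z-S(z))^{-1}$ as a \emph{pseudo-resolvent}: since $z\mapsto S(z)$ is analytic in a neighbourhood of $z^*$ by Lemma~\ref{lem: prop for asymptotic}, and $z^*$ is an isolated simple eigenvalue of $S(z^*)$ by the previous lemma, we can transfer the rank-one spectral information from $S(z^*)$ to the pseudo-resolvent via analytic perturbation theory in a parameter.

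First I would use the previous lemma to fix the small disk $B_{\lambda^2 r}$ around $z^*$ and a slightly smaller contour $\gamma$ encircling $s(z^*)=z^*$ (where $s(z)$ is the unique eigenvalue of $S(z)$ in $B_{3\lambda^2 r/4}$). Because $S(z)$ is analytic in $z$ (Lemma~\ref{lem: prop for asymptotic}~$ii$) and $\gamma$ lies in the resolvent set of $S(z)$ for every $z$ in a neighbourhood $\caN$ of $z^*$, the Riesz projection
\begin{equation*}
Q(z) \deq -\frac{1}{2\pi\ii}\oint_{\gamma} (w-S(z))^{-1}\,\dd w
\end{equation*}
defines a rank-one, analytic operator-valued function on $\caN$, whose range is the one-dimensional eigenspace of $S(z)$ corresponding to $s(z)$; the fact that $\dim\ran Q(z)=1$ follows from the simplicity established in the previous lemma together with continuity of the rank of an analytic projection. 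Standard arguments then give
\begin{equation*}
\frac{1}{z-S(z)} = \frac{Q(z)}{z-s(z)} + T(z)\,,
\qquad T(z) \deq (1-Q(z))(z-S(z))^{-1}(1-Q(z))\,,
\end{equation*}
where $T(z)$ is analytic on $\caN$ (the spectrum of the restriction of $S(z)$ to $\ran(1-Q(z))$ stays outside $\gamma$).

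Next I would analyse the scalar factor $(z-s(z))^{-1}$. The function $s$ is analytic and satisfies $s(z^*)=z^*$, so $z-s(z)=(1-s'(z^*))(z-z^*)+\caO((z-z^*)^2)$. The explicit formula \eqref{eq: explicit eigenvalue} gives
\begin{equation*}
s'(z) = \Tr\!\left[P\kini\,\partial_z A(z,\la)\right] + \caO(\|A(z,\la)\|\cdot\|\partial_z A(z,\la)\|) = \caO(\la^2)\,,
\end{equation*}
where the bound on $\partial_z A(z,\la)$ comes from $\partial_z \caM(z)$ and $\partial_z \caR_{\mathrm{ex}}(z)$ being $\caO(\la^2)$ and $\caO(\la^4)$ respectively on the disk, by Lemma~\ref{lem: pseudoresolvent neu} combined with Cauchy's formula. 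Hence $1-s'(z^*)$ is invertible (in fact equal to $1+\caO(\la^2)$) for $\la$ small enough, so $z-s(z)$ has a simple zero at $z^*$, and
\begin{equation*}
P = \mathrm{Res}_{z=z^*}\,(z-S(z))^{-1} = \frac{Q(z^*)}{1-s'(z^*)}\,,
\end{equation*}
which is a rank-one operator because $Q(z^*)$ is. This shows simultaneously that the pole is simple (order one in $z-z^*$) and that its residue has rank one.

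The main obstacle is not the algebra above but verifying the analyticity and uniformity underlying the Riesz-projection construction: one needs that $\gamma$ can be chosen so as to lie in the resolvent set of $S(z)$ \emph{uniformly} for $z\in\caN$, and this rests on the quantitative bounds $\|A(z,\la)\|\le C(\la^4+\la^2|z|)$ together with Lemma~\ref{lem: spectrum of tilde m}. Once this is in place — it has essentially been carried out in the proof of the previous lemma — the decomposition into $Q(z)/(z-s(z))$ plus an analytic remainder is forced, and the rank and simplicity of $P$ follow immediately.
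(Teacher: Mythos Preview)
Your proof is correct and arrives at the same conclusion as the paper, but via a genuinely different route. You reduce the operator-valued problem to a scalar one up front: by constructing the analytic rank-one Riesz projection $Q(z)$ of $S(z)$, you split the pseudo-resolvent as $(z-S(z))^{-1}=Q(z)/(z-s(z))+T(z)$ with $T(z)$ analytic, and then the whole question becomes whether the scalar function $z-s(z)$ has a simple zero, which follows from $s'(z^*)=\caO(\la^2)$. The paper instead Taylor-expands $S(z)$ around $z^*$, writing $z-S(z)=y-F_0-yF_1-F_2(y)$ with $y=z-z^*$, $F_0=S(z^*)-z^*$, $F_1=\partial_z S(z^*)$, and $F_2(y)=\caO(y^2)$; it then factors $(y-F_0-yF_1)^{-1}=(y-(1-F_1)^{-1}F_0)^{-1}(1-F_1)^{-1}$, uses smallness of $F_1=\caO(\la^2)$ to transfer the simple eigenvalue at $0$ from $F_0$ to $(1-F_1)^{-1}F_0$, and handles the quadratic remainder $F_2$ separately. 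Both arguments hinge on the same estimate $\partial_z S=\caO(\la^2)$ (your $s'(z^*)$, the paper's $F_1$); your approach is cleaner and more conceptual, while the paper's yields the residue in the form $P_F(1-F_1)^{-1}$, explicitly tied to the spectral projection of the linearized operator $(1-F_1)^{-1}F_0$---essentially equivalent to your $Q(z^*)/(1-s'(z^*))$.
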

\begin{proof}
Simplicity of the pole has already been established in the above proof.
To get hold of the residue, we expand $S(z)$ in some neighborhood of $z^*$: 
\begin{equation*}
z-S(z) = \underbrace{z-z^*}_{=:y} -  \underbrace{(S(z^*)-z^*)}_{=: F_0} -
\underbrace{(z-z^*)S_1(z^*)}_{=:yF_1} - \underbrace{S_{>1}(z-z^*)
}_{=:F_2(y)}\,,
\end{equation*}
where $S_1(z^*)= \partial_z S(z^*)$ and $S_{>1}(z-z^*)= \sum_{n \geq 2} \frac{(z-z^*)^{n}}{n!} (\partial_z)^n S(z^*)$. First, we rewrite
\begin{equation} \label{eq: linearized f}
\frac{1}{y-F_0-yF_1} =         \frac{1}{y- (1-F_1)^{-1}F_0} \cdot\frac{1}{1-F_1}  \,.
\end{equation}
From the considerations above, we know that $F_0$ has an isolated simple
eigenvalue at~$0$.  It follows that~$0$ is also an eigenvalue of
$(1-F_1)^{-1}F_0$.   Since $F_1 = \caO(\la^2)$, $((1-F_1)^{-1}-1)F_0$  is a relatively bounded perturbation of~$F_0$ with small relative bound and hence perturbation theory
ensures that this eigenvalue is again simple and isolated, and we call $P_F$ the
corresponding one-dimensional spectral projector.  It follows that $ P_F
\frac{1}{1-F_1} $  is the residue of the function~\eqref{eq: linearized f} at
$y=0$. Then, we write
\begin{equation} \label{eq: split of quadratic}
 \frac{1}{y-F_0-yF_1-F_2(y)}  =   \frac{1}{1 +
(y-F_0-yF_1)^{-1} F_2(y)} \cdot \frac{1}{y-F_0-yF_1}  \,.
\end{equation}
Since $\norm F_2(y)\norm \leq C\str y \str^2$ and $\norm (y-F_0-yF_1)^{-1} \norm
\leq  C\str y \str^{-1}$, the first factor on the right side of~\eqref{eq: split of quadratic} is
analytic in a neighborhood of $y=0$ and it reduces to $1$ at $y=0$. It follows
that the residue at $y=0$ of the function~\eqref{eq: split of quadratic} is $
P_F  \frac{1}{1-F_1} $. Since $P_F$ is one-dimensional, this is a rank-one
operator. 
\end{proof}
To continue, we denote the  pole by
$u(\la,\ka,\field)=z^{*}(\la,\ka,\field)$.

In the statement of  the next lemma, it is convenient to extend $S$ to a
function of $\ga$ defined in a complex neighborhood of $\gamma=0$ by setting $S(\ga)\deq
\e{\ga \cdot \nabla} S \e{-\ga \cdot \nabla}$. As argued in the proof of
Lemma~\ref{lem: prop for asymptotic}, $\ga \mapsto S(\ga)$ is analytic, and the
statements of the Lemmas~\ref{lem: prop for asymptotic} and~\ref{lemma: the pole
is simple} remain valid for sufficiently small $\ga$. 
We write
\begin{equation}\label{eq: splitting of R}
 (\caR(z))_{\la^2 \ka} = \frac{1}{z-S(z)}=  \frac{1}{z-u} P^{\ka} +
R^{\ka}(z)\,,
\end{equation}
with $z \mapsto R^{\ka}(z)$ bounded and analytic in $B_{\la^2r}$, for some $r>0$.
In the following, we often use the shorthand notations $P\equiv P^{\ka}$ and $R(z)\equiv
R^{\ka}(z)$.

\begin{lemma} \label{lem: asym perturbation}
The pole $u$ and the operators $P^{\ka},R^{\ka}(z)$ are analytic in $\ka$ and
$\ga$.
\end{lemma}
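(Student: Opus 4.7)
The plan is to realize $P^\ka$, $u P^\ka$, and $R^\ka(z_0)$ as Cauchy-type contour integrals of the pseudo-resolvent $(z-S(z,\ka,\ga))^{-1}$ along the contour $\caC=\{z\in\bbC:|z|=\la^2 r/2\}$ already used in Section~\ref{sec: perturbation kinetic limit}, and then read off analyticity in $(\ka,\ga)$ from analyticity of the integrand. The first task is to argue that $(z,\ka,\ga)\mapsto(z-S(z,\ka,\ga))^{-1}$ is jointly analytic on $\caC\times V$, where $V$ is a sufficiently small open neighbourhood of $(0,0)$ in $(\ka,\ga)$-space. By Lemma~\ref{lem: prop for asymptotic}$(ii)$ together with the paragraph extending $S$ in $\ga$ via conjugation by $\e{\ga\cdot\nabla}$ (the analyticity and boundedness of which reduces, through~\eqref{eq: stability of sz}, to analyticity of $\caL_\sys+\caM(z)+\caR_{\mathrm{ex}}(z)$ in the $\caJ_\theta$-variable supplied by Lemma~\ref{lem: pseudoresolvent neu}), the difference $S(z,\ka,\ga)-S(z,0,0)$ is a bounded operator depending jointly analytically on $(z,\ka,\ga)$ and vanishing at $(\ka,\ga)=(0,0)$. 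Combining the uniform bound $\|(z-S(z,0,0))^{-1}\|\le C\la^{-2}$ on $\caC$ (already derived in the proof that $z^\ast$ is unique and isolated) with a Neumann series expansion then yields both a uniform bound $\|(z-S(z,\ka,\ga))^{-1}\|\le C\la^{-2}$ and joint analyticity on $\caC\times V$, after shrinking $V$ if necessary to keep the pole $u(\la,\ka,\field,\ga)$ strictly inside $\caC$.

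With this joint analyticity established, $\eqref{eq: splitting of R}$ and the residue theorem give
\begin{equation*}
P^\ka(\ga)=-\frac{1}{2\pi\ii}\oint_{\caC}(z-S(z,\ka,\ga))^{-1}\,\dd z,\qquad u(\ka,\ga)\, P^\ka(\ga)=-\frac{1}{2\pi\ii}\oint_{\caC} z\,(z-S(z,\ka,\ga))^{-1}\,\dd z,
\end{equation*}
since $u$ is the unique simple pole of $(z-S(z))^{-1}$ inside $\caC$, with residue $P^\ka$. Both integrals are analytic in $(\ka,\ga)\in V$ by differentiation under the integral sign, which is justified by joint analyticity of the integrand on the compact, parameter-independent contour $\caC$. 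This proves analyticity of $P^\ka$, and hence also of $uP^\ka$, in $(\ka,\ga)$. To extract $u$ itself, note that $P^{\ka=0,\ga=0}$ is a non-zero rank-one operator (it is a small perturbation of the rank-one projection $P_{\kini}$ of Lemma~\ref{lem: spectrum of tilde m}), so one may pick fixed vectors $\varphi_1,\varphi_2\in\mathrm{L}^2(\tor)$ with $\langle\varphi_1,P^{\ka=0,\ga=0}\varphi_2\rangle\neq 0$; by continuity this pairing stays non-zero on a possibly smaller neighbourhood, and
\begin{equation*}
u(\ka,\ga)=\frac{\langle\varphi_1,\,uP^\ka(\ga)\,\varphi_2\rangle}{\langle\varphi_1,\,P^\ka(\ga)\,\varphi_2\rangle}
\end{equation*}
is analytic as the ratio of two analytic scalar functions with non-vanishing denominator.

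For $R^\ka(z_0)$ at any fixed $z_0\in B_{\la^2 r/2}$, the two simple poles of the meromorphic integrand $(z-z_0)^{-1}(z-S(z,\ka,\ga))^{-1}$ inside $\caC$ lie at $z=z_0$ and $z=u(\ka,\ga)$, with residues $(z_0-S(z_0))^{-1}$ and $P^\ka/(u-z_0)$ respectively, whose sum is, by~\eqref{eq: splitting of R}, precisely $R^\ka(z_0)$. Thus
\begin{equation*}
R^\ka(z_0)=\frac{1}{2\pi\ii}\oint_{\caC}\frac{(z-S(z,\ka,\ga))^{-1}}{z-z_0}\,\dd z,
\end{equation*}
and joint analyticity of the integrand in $(z,\ka,\ga)$ on $\caC\times V$ immediately gives analyticity of $R^\ka(z_0)$ in $(\ka,\ga)$ (and, as a byproduct, in $z_0$ throughout $B_{\la^2 r/2}$). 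The main obstacle in this proof is the first step: carefully verifying that the $(\ka,\ga)$-perturbation of $S(z,\cdot,\cdot)$ is bounded and analytic uniformly on $\caC$, so that the Neumann series converges there. The unbounded term $-\field\cdot\nabla$ in $S$ is independent of $(z,\ka,\ga)$ and is therefore benign; all $(\ka,\ga)$-dependence enters through the bounded, jointly analytic operators supplied by Lemma~\ref{lem: pseudoresolvent neu} and the fiber decomposition of Section~\ref{sec: fiber decomposition}, which makes the perturbative argument robust.
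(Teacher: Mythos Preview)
Your approach is essentially the paper's: represent $P^\ka$ and $uP^\ka$ as contour integrals of $(z-S(z))^{-1}$ along $\caC$ and read off analyticity in $(\ka,\ga)$ from analyticity of the integrand; you are more thorough than the paper in justifying the Neumann series, in extracting $u$ from $uP$, and in treating $R^\ka(z_0)$ by a separate Cauchy-type integral. One minor slip: for the positively oriented contour $\caC$ and the Laurent form \eqref{eq: splitting of R}, the residue theorem gives $P^\ka = +\tfrac{1}{2\pi\ii}\oint_\caC (z-S(z))^{-1}\,\dd z$ (no minus sign), consistent with the sign in your formula for $R^\ka(z_0)$.
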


\begin{proof}
Residue and pole can be expressed as contour integrals, namely
\begin{equation} \label{eq: contour rep}
P =  \frac{1}{2 \pi \ii}    \int_\caC \d z \,\frac{1}{z-S(z)}, \qquad  \eig   P=
\frac{1}{2 \pi \ii}     \int_\caC \d z  \,\frac{z}{z-S(z)}\,,
\end{equation}
where the contour $\caC$ is defined in~\eqref{eq: the integation contour}. Then the
analyticity in $\ka$ and $\ga$ follows from the analyticity of $S(z)$, as
established in Lemma~\ref{lem: prop for asymptotic}.  
 \end{proof}

We now summarize our findings and derive some additional algebraic properties of
the residue $P$.
\begin{lemma}\label{cor:laurent} For $\kappa=0$, the residue, $P$, can be written as $P^{\kappa=0} = \str \zeta^{\lambda,\kappa=0,\field} \rangle \langle 1
\str$, with $\zeta\equiv\zeta^{\la,\ka, \field} \in \mathrm{L}^2(\tor)$ a
real-analytic function satisfying
$$  \norm \zeta -\zeta_M \norm =\caO(\lambda^2)\,,$$
where $\zeta_M$ is the invariant state of $M$; see Section~\ref{sec: kinetic}. 
For $\ka=0$, $\zeta$ is a probability density on $\tor$.
The function $\eig\equiv\eig(\la,\ka,\field) \in \bbC $ satisfies $$\overline{
\eig(\ka)}=\eig(-\overline\ka)\,, \qquad |\eig-\la^2 \eig\kin |=\caO(\lambda^2)\,. $$
Moreover, we have that
\begin{align}\label{eq: eigenvalue is zero}
 \eig(\kappa=0)=0\,,\qquad P^{\ka=0} R^{\ka=0}(z)=0\,.
\end{align}

\end{lemma}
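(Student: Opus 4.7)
The plan exploits three structural properties of the effective dynamics $\caZ_{[0,t]}$ stated in Lemma~\ref{lem: thermodynamic dyn} — trace preservation, positivity preservation, and preservation of self-adjointness — together with the perturbation theory around $\la^2\widetilde M$ developed in Section~\ref{sec: perturbation kinetic limit}. First I would derive both $u(\la,0,\field)=0$ and the specific form $P^{\ka=0}=\ket\zeta\bra 1$ from trace preservation alone. The fiber identity $\Tr_\sys[O]=\langle 1,O_0\rangle$ from~\eqref{eq: trace as integral} combined with $\Tr_\sys[\caZ_{[0,t]}\rho]=\Tr_\sys[\rho]$ yields $\langle 1,(\caZ_{[0,t]})_0 f\rangle=\langle 1,f\rangle$ on a dense subspace of $\mathrm{L}^2(\tor)$, hence on all of it by continuity. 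Laplace-transforming and using $(\caR(z))_0=(z-S_0(z))^{-1}$, the substitution $f\mapsto(z-S_0(z))g$ forces $\langle 1,S_0(z)g\rangle=0$ for every $g$, so $\bra 1$ is a left zero-eigenvector of $S_0(z)$ uniformly in $z$. In particular $0\in\sigma(S_0(0))$, and uniqueness of the pole inside $B_{\la^2 r}$ (proof of Lemma~\ref{lemma: the pole is simple}) forces $u(\la,0,\field)=0$. The residue is rank-one (Lemma~\ref{lemma: the pole is simple}); applying $\bra 1$ to the Laurent expansion~\eqref{eq: splitting of R} yields $z^{-1}\bra 1 = z^{-1}\bra 1 P^{\ka=0} + \bra 1 R^{\ka=0}(z)$ for $z\neq 0$ in $B_{\la^2 r}$. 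Comparing pole orders forces $\bra 1 P^{\ka=0}=\bra 1$ (so $P^{\ka=0}=\ket\zeta\bra 1$ with $\int_\tor\zeta=1$) and $\bra 1 R^{\ka=0}(z)=0$, hence $P^{\ka=0}R^{\ka=0}(z)=0$.

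Next, the estimates $\|\zeta-\zeta_M\|=\caO(\la^2)$ and $|u-\la^2 u_M|=\caO(\la^2)$ come from two successive applications of Kato perturbation theory for isolated eigenvalues. Write $S_0(0)=\la^2\widetilde M_{\ka=0}+A$ with $\|A\|=\caO(\la^4)$ (Lemma~\ref{lem: pseudoresolvent neu}); by Lemma~\ref{lem: spectrum of tilde m}, $\la^2\widetilde M_{\ka=0}$ has an isolated simple eigenvalue $\la^2 u_{\widetilde M}(\ka=0)$ with spectral gap of order $\la^2$. Since $\|A\|$ is small against this gap, Kato theory gives $\|P^{\ka=0}-P_{\widetilde M}^{\ka=0}\|=\caO(\la^2)$. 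Writing then $\widetilde M=M+\delta M$ with $\delta M$ a $\la^2$-relatively bounded perturbation of $M$ (Lemma~\ref{lem: trivia of delta m}) and using the spectral gap of $M$ at its zero eigenvalue (Lemma~\ref{lem: kinetic}\,$v.$), Kato theory again yields $\|P_{\widetilde M}^{\ka=0}-P_M^{\ka=0}\|=\caO(\la^2)$. By the triangle inequality, $\|P^{\ka=0}-P_M^{\ka=0}\|=\caO(\la^2)$, and since both $P^{\ka=0}=\ket\zeta\bra 1$ and $P_M^{\ka=0}=\ket{\zeta_M}\bra 1$ (the latter by Lemma~\ref{lem: kinetic}\,$v.$) have the form $\ket\cdot\bra 1$, applying the difference to $1\in\mathrm{L}^2(\tor)$ gives $\|\zeta-\zeta_M\|=\caO(\la^2)$ directly. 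The parallel eigenvalue perturbation estimates give $|u-\la^2 u_M|=\caO(\la^2)$.

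Real-analyticity of $\zeta$ then follows from $\zeta=P^{\ka=0}\cdot 1$ (with normalized Haar measure) and the contour representation $P^{\ka=0}=\tfrac{1}{2\pi\ii}\oint_\caC(z-S_0(z))^{-1}dz$, together with Lemma~\ref{lem: prop for asymptotic}\,$i.$ (which guarantees that $(z-S_0(z))^{-1}$ preserves the space $\caD$ of real-analytic functions) and $1\in\caD$. For positivity $\zeta\geq 0$: strict positivity of $\zeta_M$ (Lemma~\ref{lem: kinetic}\,$v.$) combined with real-analyticity of both $\zeta$ and $\zeta_M$ on a common complex strip $\bbV_\delta$ with $\delta$ independent of $\la$ allows Cauchy estimates to upgrade the $\mathrm{L}^2$-bound $\|\zeta-\zeta_M\|=\caO(\la^2)$ to a uniform pointwise bound $\sup_{\bbV_{\delta/2}}|\zeta-\zeta_M|=\caO(\la^2)$, so $\zeta>0$ pointwise for small $\la$. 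For the conjugation symmetry, self-adjointness preservation by $\caZ_{[0,t]}$ together with the fiber identity $(O^*)_p(k)=\overline{O_{-p}(k)}$ yields the intertwining $(\caZ_{[0,t]})_{-p}=C(\caZ_{[0,t]})_p C$, where $C$ is antilinear complex conjugation; Laplace-transforming (using $Ce^{-zt}=e^{-\bar z t}C$) gives $(\caR(z))_{-p}=C(\caR(\bar z))_p C$, so the simple pole in $z$ of the left side at $z=u(\la,-\ka,\field)$ corresponds to the singularity at $\bar z=u(\la,\ka,\field)$ of the right side, yielding $u(\la,-\ka,\field)=\overline{u(\la,\ka,\field)}$ for real $\ka$. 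Since both $\overline{u(\ka)}$ and $u(-\bar\ka)$ are antiholomorphic in $\ka$ (by $\ka$-analyticity from Lemma~\ref{lem: asym perturbation}) and coincide on $\bbR^d$, they agree for all complex $\ka$ by Schwarz reflection. The main technical obstacle I foresee is the pointwise positivity step: carrying out the perturbation theory on the real-analytic Banach space $\caD$ (rather than just $\mathrm{L}^2(\tor)$) to obtain a $\la$-uniform strip width is essential, and while Lemma~\ref{lem: prop for asymptotic} strongly suggests this is feasible, the estimates need to be spelled out in a norm adapted to $\caD$.
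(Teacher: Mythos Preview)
Your argument is correct and tracks the paper's proof closely in almost every step: trace preservation for $u(0)=0$ and the form $P^{0}=\ket\zeta\bra 1$, hermiticity preservation for $\overline{u(\ka)}=u(-\overline\ka)$, and the two-step Kato perturbation for the $\caO(\la^2)$ estimates are all handled the same way. Your route to real-analyticity via Lemma~\ref{lem: prop for asymptotic}\,$i.$ is a minor variant of the paper's use of the $\ga$-analyticity of $P$ from Lemma~\ref{lem: asym perturbation}.

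The one substantive difference is the positivity of $\zeta$, and here you are working much harder than necessary. The paper simply observes that for real $z>0$ the operator $\caR(z)=\int_0^\infty\dd t\,\e{-zt}\caZ_{[0,t]}$ is positivity preserving (being a positive combination of the completely positive maps $\caZ_{[0,t]}$), so $\caR(z)\rho_\sys\geq 0$ for any density matrix $\rho_\sys$; passing to the zero fiber and letting $z\searrow 0$ gives $P^0(\rho_\sys)_0\geq 0$, from which $\zeta^0\geq 0$ follows directly. This bypasses entirely the need for uniform-in-$\la$ strip widths and Cauchy estimates that you flag as your main technical obstacle. Your perturbative route would work, but the paper's dynamical argument is both shorter and more robust: it uses only that $\caZ_{[0,t]}$ is a quantum dynamical map, not any smallness of $\la$.
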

\begin{proof}
For an arbitrary exponentially localized density matrix $\rho_\sys$,
\begin{equation*}
\overline{\Tr_{\sys}[\e{{\ii}\lambda^2\ka\cdot X}\caZ_{[0,t]}\rho_\sys]} =
\Tr_{\sys}[ (\e{{\ii}\lambda^2\ka \cdot X})^*  (\caZ_{[0,t]}\rho_\sys)^* ] =
\Tr_{\sys}[ \e{-{\ii}\lambda^2\overline\ka \cdot X} \caZ_{[0,t]}\rho_\sys ]\,,
\end{equation*}
where we have used that $\caZ_{[0,t]}$ preserves positivity, in particular,
hermiticity. Writing this in terms of fibers, taking the Laplace transforms
and comparing the singular parts, we get  $\eig(-\bar\ka)=\overline{
\eig(\ka)}$. 
For real~$z$, the operator $\caR(z) = \int \d t\,  \e{-t z} \caZ_t $ preserves
positivity, hence $ \caR(z) \rho_\sys  \geq 0$, for an arbitrary positive-definite
$\rho_\sys$, which implies the positivity of the function   $P^0 (\rho_\sys)_0$,
where $0$ refers to the zero fiber. Writing $P^\ka= \str \zeta^\ka
\rangle \langle \tilde \zeta^\ka \str $, it then follows that $\zeta^0$ can be
chosen to be positive.   
Since $\caZ_{[0,t]}$ preserves the trace, it follows that $\Tr_\sys
\left[\caR(z)\rho_\sys\right]= 1/z$. Hence, from~\eqref{eq: splitting of R}
(evaluated in the fiber indexed by $\ka=0$),
\begin{equation*}
\frac{1}{z} =  \frac{1}{z-u(\ka=0)}  \langle 1, P^{0} (\rho_\sys)_0
\rangle+\langle 1, R^0(z)\,(\rho_{\sys})_0\rangle =    \frac{1}{z}  \langle 1,
\zeta^0 \rangle \langle \tilde \zeta^0,  (\rho_\sys)_0 \rangle+\langle 1,
R^0(z)\,(\rho_{\sys})_0\rangle \,.
\end{equation*}
Since this identity has to hold for any density matrix $\rho_\sys$, and because $R^0(z)$ is
an analytic function, we conclude that $u(\ka=0)=0$, $P^0R^0(z)=0$ and $\tilde
\zeta^0=c 1$. Choosing the normalization $c=1$, it follows that 
$\int_{\tor} \zeta^0=1$, i.e.,~$\zeta^0$ is a probability density. The
analyticity of $\ga \mapsto \e{\ga \cdot \nabla} P \e{-\ga \cdot \nabla} $
implies boundedness of $\e{\ga \cdot \nabla}\zeta^{\ka}, \e{\ga \cdot
\nabla}\tilde\zeta^{\ka}$, and hence analyticity of $k \mapsto \zeta^{\ka}(k),
\tilde\zeta^{\ka}(k)$. The bounds by terms $\caO(\la^2)$ follow immediately from the
perturbation theory outlined above.
\end{proof}
\vspace{2cm}

\section{Proof of main results}\label{sec: proof of main results}
Before we prove our main results for $\field\not=0$, we first address the equilibrium regime, $\field=0$.

\subsection{The equilibrium regime}\label{sec: equilibrium case} 
If the field $\field$ vanishes, our results can be strengthened, because $\delta
M=0$.  For $\field=0$, the function $z \mapsto (\caR(z))_{\la^2 \ka}$ has only one
pole, $\eig(\la,\ka,\field=0)$, in the region  $ \re z > -\la^2g\kin(\ka,
\field=0) +\caO(\la^4)$; (cf.~the remark following Lemma~\ref{lem: spectrum of
tilde m}).  Then the pole $\eig(\la,\ka,\field=0)$ determines the long-time
properties of the dynamics.  By applying an inverse Laplace transform, one then easily proves the following theorem.
\begin{theorem}\emph{[Equilibrium asymptotics]} \label{thm: equilibrium rte}
We set $\field =0$. 
Then, for $0<\la$ and $\ka$ sufficiently small, there is  a constant $g >0$ such that
\begin{align*}
\left\|(\mathcal{Z}_{[0,t]})_{\lambda^2\kappa}-\e{ \eig(\ka) t}P^\ka\right\|=\caO\big(\e{-
g\la^2  t} \big)\,, \qquad \textrm{as } t \to \infty\,,
\end{align*}
as operators on $\mathrm{L}^2(\tor)$.
\end{theorem}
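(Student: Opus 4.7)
The plan is to represent $(\caZ_{[0,t]})_{\la^2\ka}$ as an inverse Laplace transform of the pseudo-resolvent $(\caR(z))_{\la^2\ka}$, and then to deform the Bromwich contour past the unique singularity of $(\caR(z))_{\la^2\ka}$ in a strip to the left of the imaginary axis, namely the pole at $z=\eig(\ka)$ produced in Section~\ref{sec: analysis of resolvent around zero}.

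First I would establish the Bromwich representation. From the Dyson series~\eqref{expansion neu 2} and the arguments sketched in Section~\ref{section: results from expansions}, one obtains a crude bound $\|(\caZ_{[0,t]})_{\la^2\ka}\|\le C\e{\omega t}$ and strong continuity of $t\mapsto(\caZ_{[0,t]})_{\la^2\ka}$ on $\mathrm{L}^2(\tor)$. Hence, for $c>\omega$,
\begin{equation*}
(\caZ_{[0,t]})_{\la^2\ka}=\frac{1}{2\pi\ii}\int_{c-\ii\infty}^{c+\ii\infty}\e{zt}(\caR(z))_{\la^2\ka}\,\dd z
\end{equation*}
as an improper strong integral. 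The next step is to extend $(\caR(z))_{\la^2\ka}=(z-S(z))^{-1}$ meromorphically to the strip $\Sigma_g\deq\{z\in\bbC\,:\,-g\la^2\le\re z\le c\}$ with $g\in(0,g\kin(0))$, and to prove that $\eig(\ka)$ is its only singularity there. Locally around $\eig(\ka)$ this is Lemma~\ref{cor:laurent}. For $z\in\Sigma_g$ outside the ball $B_{\la^2 r}$ around $\eig(\ka)$, I would write $z-S(z)=(z-\la^2M^\ka)\bigl(\lone-(z-\la^2M^\ka)^{-1}(S(z)-\la^2M^\ka)\bigr)$ and invoke a Neumann-series argument: Lemma~\ref{lem: resolventbound} at $\field=0$ delivers $\|(z-\la^2M^\ka)^{-1}\|\le C\la^{-2}$ in the relevant region, while Lemma~\ref{lem: pseudoresolvent neu} gives $\|S(z)-\la^2M^\ka\|=\caO(\la^2)$ uniformly in $z\in\Sigma_g$. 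Crucially, at $\field=0$ the term $-\field\cdot\nabla$ disappears from $S(z)$, so $S(z)$ itself is uniformly bounded in $\Sigma_g$; for $|z|\ge C_0$ (a constant independent of $\la$) one therefore has the stronger estimate $\|(z-S(z))^{-1}\|\le 2/|z|$ by direct expansion.

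Shifting the Bromwich contour from $\re z=c$ to $\re z=-g\la^2$ then crosses only the simple pole at $\eig(\ka)$:
\begin{equation*}
(\caZ_{[0,t]})_{\la^2\ka}-\e{\eig(\ka)t}P^\ka=\frac{1}{2\pi\ii}\int_{-g\la^2-\ii\infty}^{-g\la^2+\ii\infty}\e{zt}R^\ka(z)\,\dd z.
\end{equation*}
The prefactor $\e{-g\la^2 t}$ is manifest; what remains is to estimate the integral uniformly in $t$. Since $(\caR(z))_{\la^2\ka}$ and $(z-\eig(\ka))^{-1}P^\ka$ both decay only like $z^{-1}$ at infinity, $R^\ka(z)$ is not absolutely integrable along the contour. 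One integration by parts cures this: $\partial_zR^\ka(z)=\caO(|z|^{-2})$ (differentiate the Neumann expansion and use $\|\partial_zS(z)\|\le C\la^2$ from Lemma~\ref{lem: pseudoresolvent neu}), the boundary terms vanish, and one obtains
\begin{equation*}
\bigl\|(\caZ_{[0,t]})_{\la^2\ka}-\e{\eig(\ka)t}P^\ka\bigr\|\le\frac{C}{t}\e{-g\la^2 t}\int_{\R}\|\partial_zR^\ka(-g\la^2+\ii s)\|\,\dd s\le C\,\e{-g\la^2 t}
\end{equation*}
for $t\ge 1$; the complementary range $0\le t\le 1$ is trivial.

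The hard part will be the uniform analysis of $z\mapsto(z-S(z))^{-1}$ across the entire strip $\Sigma_g$: unlike an ordinary resolvent, the operator whose inverse is being computed depends on~$z$. Two features make this tractable at $\field=0$. First, $\caL_\sys=\ad(T)$ is bounded, so $S(z)$ is bounded uniformly in $z\in\Sigma_g$. Second, by Lemma~\ref{lem: pseudoresolvent neu} the $z$-dependent correction $S(z)-S(0)$ is $\caO(\la^2)$ in the uniform operator norm throughout~$\Sigma_g$, which is exactly what is needed to transfer the spectral gap of $\la^2M^\ka$ (provided by Lemma~\ref{lem: kinetic}, in particular item $iv$ for large $|\im z|$) to $S(z)$ and to close the Neumann-series estimates used above. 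It is precisely this boundedness of $\caL_\sys$ that fails for $\field\neq 0$, and accounts for why the non-equilibrium case is treated separately and yields only an ergodic average statement.
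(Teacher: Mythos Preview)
Your approach via inverse Laplace transform and contour deformation is precisely what the paper intends; it does not give a detailed proof but defers to Theorem~4.5 of~\cite{deroeckfrohlichpizzo}, and the sentence preceding the statement says explicitly ``By applying an inverse Laplace transform, one then easily proves the following theorem.'' Your identification of why $\field=0$ is special---$S(z)$ is then bounded and $\delta M=0$, so the full spectral gap of $M^\ka$ from Lemma~\ref{lem: resolventbound} transfers to $S(z)$ throughout the strip---is exactly the mechanism the paper points to in the remark following Lemma~\ref{lem: spectrum of tilde m}.

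One step needs tightening. In the region $\la^2 r\le|z|\le C_0$ your Neumann-series argument, as written, combines $\|(z-\la^2 M^\ka)^{-1}\|\le C\la^{-2}$ with $\|S(z)-\la^2 M^\ka\|=\caO(\la^2)$, giving a product that is merely $\caO(1)$, not $<1$. The fix is easy: since $(\caL_\sys)_{\la^2\ka}$, $(\caM(z))_{\la^2\ka}$ and $(\caR_{\mathrm{ex}}(z))_{\la^2\ka}$ are each $\caO(\la^2)$ uniformly in the strip, one has $\|S(z)\|=\caO(\la^2)$, so the direct expansion $\|(z-S(z))^{-1}\|\le 2/|z|$ already works for $|z|\ge C_1\la^2$ with $C_1$ large enough (your constant $C_0$ may be taken of order $\la^2$, not of order $1$). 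In the remaining annulus $\la^2 r\le|z|\le C_1\la^2$ one has the sharper estimate $\|S(z)-\la^2 M^\ka\|=\caO(\la^4)$, because $\|\caM(z)-\caM(0)\|=\caO(\la^2|z|)$ by analyticity and Cauchy's estimate, and both $\caR_{\mathrm{ex}}(z)$ and $S(0)-\la^2 M^\ka$ are $\caO(\la^4)$; this closes the Neumann series there. The integration-by-parts argument on the shifted contour is then fine.
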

Note that $g$ can be chosen as $g= g\kin(0)/5$, for example. Also recall that
$u(\kappa=0)=0$, by Lemma~\ref{cor:laurent}. For the proof, we refer to
Theorem~4.5 of~\cite{deroeckfrohlichpizzo}.

\subsection{Proof of Theorems~\ref{thm: stationary} and~\ref{thm: diffusion}}\label{section8.2}
\begin{proof}[Proof of Theorem~\ref{thm: stationary}]
We first set $\field=0$. Let $f$ be a continuous function on $\tor$, (hence {
$M_f\in{\mathop{\mathfrak{A}}_{\mathrm{ti}}}$)}. Then Lemma~\ref{lemma: fibers} and Theorem~\ref{thm: equilibrium rte} yield 
\begin{align*}
 \Tr_{\sys}[ M_f \caZ_{[0,t]}{\rho_\sys}] =  \langle f, \zeta^{0}
\rangle+\caO(\e{-g \lambda^2 t})\,,\quad\quad\textrm{as}\quad t\to\infty\,,
\end{align*}
proving~\eqref{results approach to ness 2}.

To prove~\eqref{results approach to ness 1}, we choose $\field\not=0$ and define $X\deq\scrB(\mathrm{L}^2(\tor))$.  Then the function $t \mapsto
\caZ_{[0,t]}$ is in 
$\mathrm{L}^{\infty}(\bbR_+ ,X)$. A standard Tauberian theorem, see
e.g.~\cite{tauber} or~\cite{hillephilips}, states the equivalence of  
\begin{align*}
\lim_{T\to\infty}\frac{1}{T}\int_{0}^{T}\,\dd t\,(\Matz)_0=x\,, \qquad
\textrm{and} \qquad
\lim_{z\to 0,\,\re z\geq 0}z\int_0^{\infty}\,\dd t\,\e{-zt}({\Matz})_0=x\,,
\end{align*}
for some $x\in X$.  Existence of the first limit yields Theorem~\ref{thm:
stationary}.  We show that the second limit
exists: By Equation~\eqref{eq: splitting of R} we have that, for
 $|z|$ sufficiently small,
\begin{align} \label{eq:general ztimesres}
z\int_0^{\infty}\,\dd
t\,\e{-zt}({\Matz})_0=\frac{z}{z-\eig(\kappa=0)}P^{0}+zR^0(z)\,.
\end{align}
Since $\eig(0)=0$ and $z \mapsto R^{0}(z)$ is analytic in a neighborhood of
$z=0$, the limit equals $P^{0}$. 
\end{proof}

\begin{proof}[Proof of Theorem~\ref{thm: diffusion}]
We start from the identity \[\langle X(t) \rangle_{\rho_\sys\otimes\rho_\referres}= \int_0^t
\d s \,\langle V(s) \rangle_{\rho_\sys\otimes\rho_\referres}+\langle X
\rangle_{\rho_\sys\otimes\rho_\referres}\,, \]
which follows from the definition of the velocity operator~\eqref{def: finite volume velocity finite} in
 finite-volume and can be easily justified, using Lemma~\ref{lem: thermodynamic dyn}, in the thermodynamic limit, for exponentially localized density matrices $\rho_\sys$; see Section 5.2 of~\cite{paper2} for details. Hence
\begin{align} \label{eq: ergodic x}
\frac{1}{t} \langle X(t)\rangle_{\rho_\sys \otimes
\rho_\referres}&=\frac{1}{t}\int_0^{t}\dd s\, \langle V(s)
\rangle_{\rho_\sys \otimes \rho_\referres}+ \frac{1}{t} \langle
X(0)\rangle_{\rho_\sys \otimes \rho_\referres} \nonumber\\[1mm]
&  =\langle\nabla\varepsilon,\zeta^{\field,\la}\rangle+   \caO({t}^{-1})\,,
\end{align}
as $t\to\infty$. Since $\rho_\sys$ is exponentially localized,  the second term
on the right-hand side of the last line vanishes as $t\to \infty$, and we conclude that
$v(\field)=\langle\nabla\varepsilon,\zeta^{\field,\la}\rangle$. The statement that
$v(\field)\not=0$, for $0<|\field|$ sufficiently small, follows from
$v_M(\field)\not=0$, for $\lambda$ small enough.

We define the diffusion tensor $D\equiv D(\field)$ by
\begin{align}\label{def:diffusionconstant}
D^{ij}\deq-
\la^{-4}\frac{\partial^2}{\partial\kappa^i\kappa^j}\bigg|_{\kappa=0}
\eig(\kappa)\,,
\end{align}
where the factor $\la^{-4}$ is attributed to the fact that the fiber momentum is $\la^2
\ka$, rather than $\ka$.
From  \mbox{$\overline{\eig(\kappa)}=\eig(-\overline{\kappa}) $}, we conclude
that $D$ has real entries. For $\field\not=0$ sufficiently small, positive-definiteness follows from the fact that~$u(k)$ is a $C^{\infty}$ function in $\field$ (see Lemma~6.1 in~\cite{paper2}), and the positive-definiteness of $D(\field=0)$. It remains to argue that the above definition in terms of the
eigenvalue $u$ is equivalent to the one given in~\eqref{eq:diffconst}, namely, 
\begin{align*}
D^{ij}= \lim_{T\to \infty}\frac{1}{T^2}\int_{0}^{\infty}\,\dd
t\,\mathrm{e}^{-\frac{t}{T}}\,\langle
(X^{i}(t)-v^i(\field)t)(X^{j}(t)-v^j(\field)t)\rangle_{\rho_\sys \otimes
\rho_\referres}\,.
\end{align*}
This is straightforward, and we omit details. But we do explain an analogous argument
relating the expression for the asymptotic velocity $v(\field)$ in terms of the eigenvalue to the
one involving moments of $X$; i.e., we check that
\begin{align} \label{eq: residu v}
v(\field)= \langle \nabla\varepsilon,\zeta^{\field,\la}\rangle=
\ii\frac{\partial}{\partial\kappa}\bigg|_{\kappa=0} \eig(\kappa)\,.
\end{align}
It follows from our discussion in Section~\ref{sec: fiber decomposition} that
\begin{align*}\Tr_{\sys}[ X \rho_{\sys,t}] =\ii
\frac{\partial}{\partial\kappa}\bigg|_{\kappa=0}\langle
1,(\rho_{\sys,t})_{\lambda^2\kappa}\rangle\,,
\end{align*}
and we obtain, using $\Tr_{\sys}[ X \rho_{\sys,t}]=\caO(t)$, that
\begin{align*}
z^2\int_{0}^{\infty}\,\dd t\,\e{-zt}\Tr_{\sys}\left[ X \rho_{\sys,t}\right]&=z^2\ii
\int_{0}^{\infty}\,\dd
t\,\e{-zt}\frac{\partial}{\partial\kappa}\bigg|_{\kappa=0}\langle
1,(\rho_{\sys,t})_{\lambda^2\kappa}\rangle\nonumber\\
&=z^2\ii \frac{\partial}{\partial\kappa}\bigg|_{\kappa=0}\int_{0}^{\infty}\,\dd
t\,\e{-zt}\langle 1,(\rho_{\sys,t})_{\lambda^2\kappa}\rangle\,,\nonumber
\end{align*}
for $\mathrm{Re}\,z>0$. By straightforward manipulations, using \eqref{eq:
ergodic x}, the limit $z \to0$ of the left-hand side equals $v(\field)$.  We
abbreviate $\frac{\partial}{\partial\kappa}\big|_{\ka=0}f(\ka)$ by $f'(0)$. 
Thanks to Lemma~\ref{cor:laurent}, it follows that
\begin{align*}
z^2\int_{0}^{\infty}\,\dd t\,\e{-zt}\Tr_{\sys}\left[ X \rho_{\sys,t}\right]&=\ii z^2
\langle 1,\frac{1}{z-\eig(0)}P^{0}(\rho_{\sys})_0   \rangle' + \ii z^2\langle
1,R^0(z)\left(\rho_{\sys})\right)_0 \rangle' \\[1mm] 
&= \frac{ \ii z^2 {\eig}'(0)}{(z-\eig(0))^2}\langle1,P^{0}(\rho_{\sys})_0
\rangle+
 \frac{\ii z^2}{z-\eig(0)}\langle1,P^{0}(\rho_{\sys})_0
\rangle'+\caO(z^2)\nonumber\\[2mm] 
&=\ii {\eig}'(0)+\caO(z)\,,
\end{align*}
as $z\to 0$, where we have used that $P^{\kappa}, R^{\ka}(z)$ are analytic in
$z$, $\kappa$ and that $\eig(0)=0$, $\langle1,P^{0}(\rho_{\sys})_0 \rangle=1$. 
Passing to the limit $z\to 0$, we confirm \eqref{eq: residu v}. 
\end{proof}

Note that, with our present methods, we cannot prove the existence of the
limit
\begin{align*}
\lim_{T\to \infty}\frac{1}{T}\int_{0}^{T}\,\dd t\,\frac{1}{t}\langle
(X^{i}(t)-v^{i}(\field)t)(X^{j}(t)-v^{j}(\field)t)\rangle_{\rho_\sys \otimes
\rho_\referres}.
\end{align*}
The problem is that Tauberian theorems only hold if one can bound the integrand
by a constant, whereas we only have a rough a priori bound, namely $\langle X^2(t)
\rangle_{\rho_{\sys}\otimes\rho_\referres} \leq C \str t\str^2$. When $\chi\neq 0$ we therefore have to
state our results in terms of Laplace transforms of time-dependent quantities in $t$.

\newpage
\bibliographystyle{plain}
\bibliography{biblio}

\end{document}